\newcommand{\normal}[1]{\mathop{:}\nolimits\!#1\!\mathop{:}\nolimits} 
\newtheorem{theorem}{Theorem}[section]
\newtheorem{lemma}[theorem]{Lemma}
\newtheorem{proposition}[theorem]{Proposition}
\newtheorem{corollary}[theorem]{Corollary}
\theoremstyle{definition}
\newtheorem{definition}[theorem]{Definition}
\theoremstyle{remark}
\newtheorem{remark}{Remark}
\def\bC{{\mathbb C}}           
\def\bR{{\mathbb R}} 
\def\bH{{\mathbb H}}
\def\bZ{{\mathbb Z}}
\def\mA{{\mathcal A}}
\def\mO{{\mathcal O}}
\def\mS{{\mathcal S}}
\newcommand{\supp}{{\rm supp}}
\newcommand{\se}[1]{\section{#1}}
\newcommand{\sse}[1]{\subsection{#1}}
\begin{document}

\par 
\bigskip 
\par 
\rm 
 
\par 
\bigskip 
\LARGE 
\noindent 
{\bf The Casimir effect from the point of view of algebraic quantum field theory} 
\bigskip 
\par 
\rm 
\normalsize 

\large
\noindent {\bf Claudio Dappiaggi$^{1,2,a}$},
{\bf Gabriele Nosari$^{1,2,b}$}, {\bf Nicola Pinamonti}$^{3,4,c}$ \\
\par
\small
\noindent $^1$ 
Dipartimento di Fisica, Universit\`a degli Studi di Pavia,
Via Bassi, 6, I-27100 Pavia, Italy.\smallskip

\noindent$^2$  Istituto Nazionale di Fisica Nucleare - Sezione di Pavia,
Via Bassi, 6, I-27100 Pavia, Italy.\smallskip

\noindent$^3$ Dipartimento di Matematica, Universit\`a di Genova - Via Dodecaneso 35, I-16146 Genova, Italy.\smallskip

\noindent$^4$ Istituto Nazionale di Fisica Nucleare - Sezione di Genova, Via Dodecaneso, 33 I-16146 Genova, Italy.

\bigskip

\noindent $^a$  claudio.dappiaggi@unipv.it,
$^b$  gabriele.nosari@pv.infn.it, $^c$ pinamont@dima.unige.it
 \normalsize

\par 
 
\rm\normalsize 
\bigskip
\noindent {\small Version of \today}

\rm\normalsize 
 
 
\par 
\bigskip 

\noindent 
\small 
{\bf Abstract}. 
We consider a region of Minkowski spacetime bounded either by one or by two parallel, infinitely extended plates orthogonal to a spatial direction and a real Klein-Gordon field satisfying Dirichlet boundary conditions. We quantize these two systems within the algebraic approach to quantum field theory using the so-called functional formalism. As a first step we construct a suitable unital $*$-algebra of observables whose generating functionals are characterized by a labelling space which is at the same time optimal and separating and fulfils the F-locality property. Subsequently we give a definition for these systems of Hadamard states and we investigate explicit examples. In the case of a single plate, it turns out that one can build algebraic states via a pull-back of those on the whole Minkowski spacetime, moreover inheriting from them the Hadamard property. When we consider instead two plates, algebraic states can be put in correspondence with those on flat spacetime via the so-called method of images, which we translate to the algebraic setting. For a massless scalar field we show that this procedure works perfectly for a large class of quasi-free states including the Poincar\'e vacuum and KMS states. Eventually Wick polynomials are introduced. Contrary to the Minkowski case, the extended algebras, built in globally hyperbolic subregions can be collected in a global counterpart only after a suitable deformation which is expressed locally in terms of a *-isomorphism. As a last step, we construct explicitly the two-point function and the regularized energy density, showing, moreover, that the outcome is consistent with the standard results of the Casimir effect.
\normalsize
\bigskip 

\se{Introduction}
The success of quantum field theory (QFT) is often and righteously ascribed to the associated description of the matter constituents and of their mutual interactions. Yet one should not forget that QFT has lead to the discovery of several unique phenomena which could be thought as really being the blueprint of the theory. In between these one should certainly include the Casimir effect. Heuristically often depicted as the existence of a non vanishing force between two infinite, parallel and perfectly conducting plates due to the quantum fluctuations of the vacuum, it has been thoroughly studied in the literature. Much has been written about it since the publishing of the seminal papers \cite{Casimir:1948dh, CP} and also several confirmations at an experimental level are available. It is a daunting task to give an exhaustive bibliography of all different aspects of the Casimir effect and of all its ramifications in modern physics. We do not even pretend to trying and we limit ourselves in recommending an introduction which complements the content of this paper \cite{Milton}. 

Our goal is instead to fill a partial gap, namely to discuss the Casimir effect by means of the functional approach to the axiomatic description of quantum field theory. More precisely we are interested in the so called algebraic approach, a framework first introduced by Haag \& Kastler in the sixties -- see \cite{Haag:1963dh}, which divides the quantization of a physical system in two separate steps. The first consists of collecting all observables in a unital $*$-algebra whose mutual relations encode concepts such as locality and causality as well as information on the dynamics of the system. In the second step, one identifies a quantum state, that is a positive, normalized linear functional on the algebra of observables. Via the renown GNS theorem, one can recover the standard probabilistic interpretation of all quantum theories. 
The success of the so-called algebraic approach is doubtless, especially since it can be directly applied also to curved backgrounds, under minimal assumptions on the causal structure of the underlying background -- see for example \cite{Benini:2013fia, Hollands:2014eia}. With reference specifically to the Casimir effect, mathematically rigorous analyses can be found in \cite{Deutsch:1978sc, Kay:1978zr}, while preliminary investigations in the algebraic framework can be found in \cite{Kuhn, Ole, Sommer}. It is especially noteworthy the analysis in \cite{Sommer} which associates to a quantum field theory on a region with boundaries the universal algebra generated by the algebras of properly embedded globally hyperbolic subregions. Boundary conditions are take into account via appropriate ideals. Although this is a viable alternative, we shall not focus on it in this paper and we leave a comparison to our methods to future investigations.

Another notable and relevant exception is represented by \cite{Herdegen:2004xk, Herdegen:2005is, Herdegen:2010rt} although our approach should be seen as parallel and complementary rather than a continuation of these analyses. 

Let us now be more specific on the goals of this paper. Although the words ``Casimir effect" actually do encompass several rather different systems, we are interested only in two idealized scenarios, namely a real scalar field living in a bounded subset of Minkowski spacetime. The boundary is represented by either one or two parallel, infinitely extended hypersurfaces, orthogonal to a spatial direction. Dynamics is ruled by the Klein-Gordon equation with Dirichlet boundary conditions. The first case is dubbed {\em a Casimir-Polder system}, in analogy with \cite{CP}, the second {\em a Casimir system} with reference to \cite{Casimir:1948dh}. Our choices are merely for simplicity. On the one hand the method could be almost slavishly translated to Neumann or to Robin boundary conditions, although some of the conclusions, that we draw, such as the existence of KMS states, would not necessarily hold true. On the other hand the procedure we use could be similarly adapted to study other fields such as, for example, the vector potential. Yet, we feel that it is safer to start with a theory which does not include any gauge freedom which might complicate the analysis, hence with the risk of turning us from the main purpose of the paper. 

We give an answer to different questions concerning specific structural aspects of these systems. The first concerns which is the correct algebra of observables to associate to a free quantum field theory in a confined region such as those considered in a Casimir-Polder or in a Casimir system. This is not an obvious question since the standard procedure in the algebraic approach relies heavily on the underlying manifold being globally hyperbolic and on finding the smooth solutions to the equation(s) of motion, seen as an initial value problem. Both features are not present in our model. In order to tackle this problem we adapt to the case at hand the so-called functional formalism which has been used successfully in the algebraic framework in the past few years -- see for an introduction \cite{Brunetti:2009qc, Brunetti:2009pn, Fredenhagen:2012sb}. The net advantage of this procedure is the following: Observables are seen as functionals on a space of kinematical/dynamical configurations and the algebraic structure is obtained by deforming the standard pointwise product so to include the information of the canonical commutation relations. As soon as one wants to deal with interactions at a perturbative level or is interested in the expectation value of quantities such as the stress-energy tensor, Wick polynomials are needed. Although their rigorous construction is known since more than a decade \cite{Hollands:2001nf}, the functional formalism allows for an easier identification not only of the polynomials themselves but also of the underlying algebraic structure via an additional deformation of the pointwise product.

In order to select a specific class of functionals we adapt to the case at hand a procedure which was already successfully applied recently to the analysis of Abelian gauge theories \cite{Benini:2012vi, Benini:2013tra, Sanders:2012sf} and of linearized gravity \cite{Benini:2014rya}: We start by constructing the space of all possible configurations allowed by the underlying dynamics, by means of a well-known procedure in PDE theory: the method of images. Subsequently we identify a set of linear functionals on the collection of dynamical configurations which play the role of the above mentioned generators. In order to justify our choice we will argue that there are minimal requirements which need to be met, namely these generating functionals should be a separating and optimal set -- see \cite{Benini}. At this stage the analysis of a Casimir-Polder and of a Casimir system will start to diverge considerably. While in the first case we will show that generators are, up to an isomorphism, a subset of those for a Klein-Gordon field in Minkowski spacetime, in the second, this feature is lost. Additionally we verify that the algebra of observables also enjoys the so-called {\em F-locality property} introduce by Kay in \cite{Kay:1992es}, according to which, the restriction of such algebra to any globally hyperbolic subregion of the underlying manifold should be $*$-isomorphic to the algebra of observables built directly on this region with the standard prescriptions. An important novel point, which our investigation shall uncover, is that the algebra of observables both for a Casimir and a Casimir-Polder system enjoys the same structural properties of the standard Minkowski counterpart, especially the time-slice axiom, a feature which was not considered before. 

The second question to which we wish to give an answer concerns the choice of an algebraic quantum state of Hadamard form both for a Casimir and for a Casimir-Polder system. The microlocal characterization of the Hadamard condition was formulated by \cite{Radzikowski:1996pa, Radzikowski:1996ei} for scalar field theories on globally hyperbolic spacetimes. Here we extend the definition so that it can be applied also to theories in bounded regions. In particular we shall call a state Hadamard if such property is satisfied by its restriction to any globally hyperbolic submanifold of the underlying spacetime, extending at a level of states the above mentioned F-locality property. Subsequently we investigate explicit examples. Also at this stage, the two systems, that we consider, differ greatly. In the Casimir-Polder one, it turns out that algebraic states can be constructed via pull-back from those in the whole Minkowski spacetime inheriting, moreover, the Hadamard property. In the Casimir one, the situation is far more complicated. Here our main goal is to make contact with the procedures often followed in the standard physics literature, where states are constructed either with the method of Green functions or, exploiting the special geometry of the system, via the method of images -- for a preliminary investigation see \cite{Ole}. The aim especially of the latter is to show that one can construct states for a Casimir system starting directly from those for a Klein-Gordon field on the whole Minkowski spacetime. We stress one additional advantage, which is almost never mentioned: The method of images does not rely on modes and hence on a Fourier transform, being thus a natural candidate to be used for a generalization of our results to curved backgrounds. 
 We investigate how to translate rigorously this procedure in the algebraic framework and we show that, in the case of a massless real scalar field, if we start from the Poincar\'e vacuum, we obtain a full-fledged Hadamard state for a Casimir system. At the same time we show that we can consider a larger class of states on the whole Minkowski spacetime as starting point. More precisely we give sufficient conditions to identify them and we show that KMS states at finite temperature meet them. As a byproduct, it turns out that the corresponding state for the Casimir system preserves the KMS condition.

Additionally, in view of the microlocal characterization of the Hadamard states for the two systems under investigation, we are able to construct the extended algebra of Wick polynomials. Noteworthy is the fact that, in order to embed the local Wick polynomial, {\it i.e.}, those constructed in a globally hyperbolic subregion, into a global extended algebra, a non-local deformation of the ordinary star product is necessary. In this respect, we recall that the local extended algebra depends only on the choice of the Hadamard function, used to deform the star-product. Different choices of Hadamard functions yield isomorphic algebras and the intertwining isomorphism is a regular deformation \cite{Brunetti:2009qc}. 
Yet, in the systems under investigation, contrary to the Minkowski case, it is impossible to construct a global Hadamard function which depends only on local geometric properties. Hence, the Wick polynomial constructed out of local property of the spacetime can be represented in a global algebra only after applying a local deformation. The necessity for such deformation becomes manifest in the computation of the correlations between local observables constructed on suitable different globally hyperbolic subregions.

The synopsis of the paper is the following: We define notations and conventions in the next subsection. In the second section, instead we focus on a Casimir-Polder system. To start with, we classify all dynamically allowed configurations, constructing out of them the $*$-algebra of fields and relating it to a subalgebra of the one for a Klein-Gordon field on Minkowski spacetime. 
Subsequently we give a notion of Hadamard states for a Casimir-Polder system and we show how they are related to those on the whole Minkowski spacetime. Eventually we discuss the notion of Wick polynomials and of Hadamard regularization pointing out the differences with the standard approach. We show how one can recover, starting from the Poincar\'e vacuum, the usual results for the two-point function and for the regularized energy density. In the third section instead we focus on a Casimir system. Mimicking the same procedure of the second section, first we construct all dynamical configurations and then the unital $*$-algebra of fields. After giving the notion of Hadamard states, we investigate how to construct them starting from those on the whole Minkowski spacetime. In particular we discuss the method of images and we show that it gives well-defined results if we start either from the Poincar\'e vacuum or from a KMS state at finite temperature, if we consider a massless Klein-Gordon field. In this respect we extend to the algebraic framework earlier analyses, see in particular \cite{Brown:1969na, Fulling, Kennedy:1979ar}  for the thermal case and \cite{Fulling:1989nb} for the vacuum case. Eventually we compute also in this case the expectation value of the two-point function and of the regularized energy density.

\sse{Notations and conventions}\label{notations}

Goal of this section is to introduce the notation which will be used throughout the text.

Throughout this paper we shall always indicate with $\bR^4$ Minkowski spacetime, hence thinking of this manifold as endowed with the flat metric of signature $\mathrm{diag}(-1,1,1,1)$. We will consider always the standard Cartesian coordinates $(t,x,y,z)$ and we use $\underline{x}$ as a short-cut to indicate $(t,x,y)$. When needed, we will also write $\underline{x}^\mu$ to indicate the $\mu$-component, $\mu=0,1,2$. The remaining coordinate $z$ will play a distinguished role as we will be interested on subsets of $\bR^4$, bounded by one or two planes orthogonal to a spatial direction, which we will choose always as $z$. Additionally, in Section \ref{lastsec} we will be using bold face letters such as ${\bf x}$ as a short-cut to indicate $(x,y,z)$. 
From the point of view of quantum field theory, we shall think of $\bR^4$ as a special case of a globally hyperbolic spacetime (see \cite{BGP, Wald} for the definition). Furthermore we will be interested on the set of functions which are thereon smooth, $C^\infty(\bR^4)$, smooth and compactly supported, $C^\infty_0(\bR^4)$ or smooth and spacelike compact, $C^\infty_{sc}(\bR^4)$. The latter are those functions whose support, intersected with any Cauchy surface, is compact. Additionally we will need a fourth class of functions, which are not so often used in the literature and whose definition is, thus, here given for completeness:

\begin{definition}\label{timelikecompact}
We call $C^\infty_{tc}(\bR^4)$ the collection of {\em timelike compact} functions, that is those $\alpha\in C^\infty(\mathbb{R}^4)$ such that $\supp(\alpha)\cap J^-(p)$ and $\supp(\alpha)\cap J^+(p)$ is either compact or empty for all $p\in\mathbb{R}^4$. Here $J^\pm(p)$ indicate the causal future (+) and the causal past (-) of $p\in\bR^4$. 
\end{definition}

On Minkowski spacetime we shall consider a real scalar field whose dynamics is ruled by $P\phi\doteq\left(\Box-m^2\right)\phi=0$ where $\Box$ is the standard d'Alembert wave operator. As thoroughly discussed in \cite{BGP, Baernew}, since $P$ is a normally hyperbolic partial differential operator, there exist two operators $E^\pm: C^\infty_{tc}(\bR^4)\to C^\infty(\bR^4)$, called $E^+$, the advanced and $E^-$, the retarded fundamental solution, such that 
\begin{itemize}
\item $P\circ E^\pm$ and $E^\pm\circ P$ are the identity on $C^\infty_{tc}(\bR^4)$,
\item for all $\alpha\in C^\infty_{tc}(\bR^4)$, $\supp(E^\pm(\alpha))\subseteq J^\pm(\supp(\alpha))$.
\end{itemize}
Starting from $E^\pm$ we can build the causal propagator $E\doteq E^+-E^-$. It yields an isomorphism of topological vector spaces between $\frac{C^\infty_{tc}(\bR^4)}{P\left[C^\infty_{tc}(\bR^4)\right]}$ and $\mathcal{S}(\bR^4)=\left\{\phi\in C^\infty(\bR^4)\;|\; P\phi=0\right\}$ via $[\alpha]\mapsto E(\alpha)$. Notice that, if we consider $C^\infty_0(\bR^4)$ in place of $C^\infty_{tc}(\bR^4)$, then we obtain via the causal propagator all smooth and spacelike compact solutions to the equation $P\phi=0$.

\vskip .1cm

Another key role in our investigation is played by the algebra of observables for the scalar Klein-Gordon field in Minkowski spacetime. Following\cite{Brunetti:2009qc, Brunetti:2012ar, Fredenhagen:2012sb}, consider the space of kinematical/off-shell configurations $\mathcal{C}^{KG}(\bR^4)\equiv C^\infty(\bR^4)$, which is endowed with the compact-open topology. We consider regular functionals on the collection of kinematically allowed configurations \cite{Brunetti:2012ar}, namely

\begin{definition}\label{functional}
Let $F:\mathcal{C}^{KG}(\bR^4)\to\bC$ be any functional and let $\mathcal{U}\subset\mathcal{C}^{KG}(\bR^4)$ be an open set. We say that $F$ is {\em differentiable of order $k$} if, for all $m=1,...,k$, the following $m$-th order (G\^{a}teaux) derivatives exist as jointly continuous maps from $\mathcal{U}\times \left(\mathcal{C}^{KG}(\bR^4)\right)^{\otimes m}$ to $\bC$:
\begin{gather*}
F^{(m)}[\phi](\phi_1,...,\phi_m)=\langle F^{(m)}[\phi],\phi_1\otimes...\otimes\phi_m\rangle\doteq\left.\frac{\partial^m}{\partial\lambda_1...\partial\lambda_m}\right|_{\lambda_1=...=\lambda_m=0}F\left(\phi+\sum\limits_{j=1}^m\lambda_j\phi_j\right).
\end{gather*}
Here $\langle,\rangle$ denotes the dual pairing and, for each fixed $\phi\in\mathcal{C}^{KG}(\bR^4)$, $F^{(m)}[\phi]$ identifies a distribution density of compact support on $\bR^{4m}$. We say that a functional $F$ is 
\begin{itemize}
\item {\bf smooth} if it is differentiable at all orders $k\in\mathbb{N}$.
\item {\bf regular} if it is smooth, if, for all $k\geq 1$ and for all $\phi\in\mathcal{C}^{KG}(\bR^4)$, $F^{(k)}[\phi]\in C^\infty_0(\bR^{4k})$ and if only finitely many functional derivatives do not vanish.
We indicate this set as $\mathcal{F}_0(\bR^4)$.
\end{itemize}
\end{definition}

\noindent Since the dynamics is ruled by a Green-hyperbolic operator, we can endow $\mathcal{F}_0(\bR^4)$ with the structure of a $*$-algebra by means of the following product $\star:\mathcal{F}_0(\bR^4)\times\mathcal{F}_0(\bR^4)\to\mathcal{F}_0(\bR^4)$:
\begin{equation}\label{algprod}
\left( F\star F^\prime\right)(\phi)= \left(\mathcal{M}\circ\exp\left(i\Gamma_E\right)(F\otimes F^\prime)\right)(\phi),
\end{equation}
where $F,F^\prime\in\mathcal{F}_0(\bR^4)$. Here $\mathcal{M}$ stands for the pointwise multiplication, {\it i.e.}, $\mathcal{M}(F\otimes F^\prime)(\phi)\doteq F(\phi)F^\prime(\phi)$, whereas 
$$
\Gamma_{E}\doteq\frac{1}{2}\int_{\bR^4\times\bR^4}E(x,x^\prime)\frac{\delta}{\delta\phi(x)}\otimes\frac{\delta}{\delta\phi(x^\prime)},
$$
where $E(x,x^\prime)$ is the integral kernel of the causal propagator associated to $P$. The exponential in \eqref{algprod} is defined intrinsically in terms of the associated power series and, consequently, we can rewrite the product also as 
\begin{equation}\label{algprod2}
\left( F\star F^\prime\right)(\phi)=\sum\limits_{n=0}^\infty\frac{i^n}{2^n n!}\langle F^{(n)}(\phi),E^{\otimes n}(F^{\prime (n)})(\phi)\rangle,
\end{equation}
where the $0$-th order is defined as the pointwise multiplication, that is $\langle F^{(0)}(\phi),(F^{\prime (0)})(\phi)\rangle\doteq F(\phi)F^\prime(\phi)$. The $*$-operation is complex conjugation, that is, for all $F\in\mathcal{F}_0(\bR^4)$ and for all $\phi\in\mathcal{C}^{KG}(\bR^4)$, $F^*(\phi)=\overline{F(\overline{\phi})}$. We call 
$$
\mathcal{A}^{KG}(\bR^4)\doteq\left(\mathcal{F}_0(\bR^4),\star\right).
$$
While from a mathematical point of view it represents a deformation of the $*$-algebra of regular functionals endowed with the pointwise multiplication, from a physical point of view it describes an {\em off-shell algebra} of observables associated to the Klein-Gordon scalar field. Notice the following relevant facts:
\begin{itemize}
\item Since regular functionals are such that only a finite number of functional derivatives do not vanish, there is no issue concerning the convergence of \eqref{algprod2}. Furthermore we can realize $\mathcal{A}^{KG}(\bR^4)$ as being generated by functionals of the form 
\begin{equation}\label{generating}
\mathcal{F}_f(\phi)\doteq\int\limits_{\bR^4}d^4x\, f(x)\phi(x),\qquad f\in C^\infty_0(\bR^4),
\end{equation}
barring a completion needed to account for the fact that $C^\infty_0(\bR^4)\times...\times C^\infty_0(\bR^4)$ is dense in $C^\infty_0(\bR^4\times...\times\bR^4)$. In this respect smooth and compactly supported functions on Minkowski spacetime represent the labeling space of the off-shell algebra of functionals, building, thus, a bridge towards the more traditional approaches to a covariant quantization of a Klein-Gordon scalar field. 

\item Dynamics can be encoded simply restricting functionals to $\mathcal{S}(\bR^4)$, a vector subspace of $\mathcal{C}^{KG}(\bR^4)$ made of dynamically allowed configurations. As a by-product, $\mathcal{F}_0(\bR^4)$ contains redundant functionals, that is those $F\in\mathcal{F}_0(\bR^4)$ such that $F(\phi)=0$ for all $\phi\in\mathcal{S}(\bR^4)$. 
At the level of $\mathcal{A}^{KG}(\bR^4)$, this restriction can be implemented considering the quotient between such algebra and the ideal $\mathcal{I}^{KG}(\bR^4)$ generated by those functionals of the form \eqref{generating} with $f=P(h)$, $h\in C^\infty_0(\bR^4)$. Since the product $\star$ descends to the quotient, the result 
$$
\mathcal{A}^{KG}_{on}(\bR^4)\doteq\frac{\mathcal{A}^{KG}(\bR^4)}{\mathcal{I}^{KG}(\bR^4)},
$$ 
is again a $*$-algebra, which we dub also as the {\em on-shell algebra} of observables. Its labeling space is constituted by the equivalence classes lying in $\frac{C^\infty_0(\bR^4)}{P[C^\infty_0(\bR^4)]}$ and thus $\mathcal{A}^{KG}_{on}(\bR^4)$ is $*$-isomorphic to the algebra of observables built out of the standard approaches, see {\it e.g.} \cite{Benini:2013fia} and references therein.
\end{itemize}

\vskip .1cm

Before concluding this section we introduce a last notation which will be useful in this paper, namely we shall call $C^\infty_{-}(\bR^4)$ the set of all smooth functions on Minkowski spacetime such that $\alpha(\underline{x},z)=-\alpha(\underline{x},-z)$. We will also call the elements lying in this set as {\em odd} (under reflection along the hyperplane $z=0$). Conversely we refer to $C^\infty_{+}(\bR^4)$ as the collection of smooth functions which are {\em even} under reflection along the hyperplane $z=0$, that is $\alpha(\underline{x},z)=\alpha(\underline{x},-z)$. Notice the following splitting of vector spaces: $C^\infty(\bR^4)=C^\infty_-(\bR^4)\oplus C^\infty_+(\bR^4)$. Furthermore, since the operator $P$ contains only the second derivative along the $z$-direction, it holds that $P:C^\infty_{\pm}(\bR^4)\to C^\infty_{\pm}(\bR^4)$ and, thus, $P[C^\infty(\bR^4)]=P[C^\infty_-(\bR^4)]\oplus P[C^\infty_+(\bR^4)]$ as well as
\begin{equation}\label{split}
\frac{C^\infty(\bR^4)}{P[C^\infty(\bR^4)]}\simeq\frac{C^\infty_{-}(\bR^4)}{P[C^\infty_{-}(\bR^4)]}\oplus\frac{C^\infty_{+}(\bR^4)}{P[C^\infty_{+}(\bR^4)]}.
\end{equation}
The isomorphism \eqref{split} holds true even restricted to $C^\infty_{tc}(\bR^4)$, $C^\infty_{sc}(\bR^4)$ and to $C^\infty_0(\bR^4)$.


\se{Algebraic quantum field theory and the Casimir-Polder effect}\label{CPsystem}

Let us consider the following region of Minkowski spacetime, $(\mathbb{H}^4,\eta)$, where $\mathbb{H}^4=\mathbb{R}^3\times [0,\infty)$ is the four dimensional upper half-plane endowed with the Lorentzian flat metric. In agreement with the notations introduced in section \ref{notations}, the interval $[0,\infty)$ runs along the spatial direction, whose coordinate we indicate as $z$. We consider a real scalar field vanishing on the boundary $\partial\bH^4$ and whose dynamics is ruled by the Klein-Gordon equation. This scenario is often associated in the physics literature to the so-called Casimir-Polder effect \cite{CP}, which describes originally the interaction between a neutral atom in an electromagnetic cavity and a perfectly conducting wall at a distance $d$. 
For that reason, from now on we shall refer to our setting as a {\em Casimir-Polder system}.

We recall a standard definition in analysis \cite[Chapter 1]{Lee}:

\begin{definition}\label{smoothonbound}
Let $O\subseteq\bH^4$. We say that $u\in C^\infty(O)$ if and only if there exist both an open subset $\widetilde O$ of $\mathbb{R}^4$ such that $O\subseteq\widetilde O$ and  $\widetilde u\in C^\infty(\widetilde O)$ such that $\left.\widetilde u\right|_{O}=u$.
\end{definition}
Notice that the existence of $\widetilde u$ is guaranteed if and only if $u$ is continuous on the whole $O$, smooth on the interior $\mathring{O}\doteq O\setminus\partial O$ and each partial derivative of $u$ on $\mathring{O}$ has a continuous extension to $\partial O$. 
With this last definition and in full analogy with the standard case of a real scalar field on Minkowski spacetime, we call {\bf dynamical configurations} of a Casimir-Polder system the set $\mathcal{S}^{CP}(\bH^4)$ of all $u\in C^\infty(\mathbb{H}^4)$ such that $u$ satisfies the following boundary condition problem:  
\begin{equation}\label{dynCP}
\left\{\begin{array}{l}
P u = \left(\Box - m^2 -\xi R\right) u = 0, \qquad m\geq 0\,\textrm{and}\,\xi\in\bR\\
u(\underline{x},0)=0
\end{array}\right.,
\end{equation}
where $R$ is the scalar curvature. Although the scalar curvature on Minkowski spacetime or on any of its subsets vanishes identically, the coupling term $\xi R$ has a consequence on the form of the stress-energy tensor, which is proportional to the variation of the Lagrangian with respect to the metric.

Since $\bH^4$ is not a globally hyperbolic spacetime, we construct dynamical configurations via the method of images. The analysis which will involve the remainder of the section can be divided in three parts and it will follows conceptually the one outlined for the Klein-Gordon scalar field on the whole Minkowski spacetime in section \ref{notations}.
 
\vskip .2cm

\noindent\textbf{\em Part 1 -- Dynamical configurations:} Bearing in mind the notation introduced in section \ref{notations}, we introduce the isometry $\iota_z:\bR^4\to\bR^4$ for which $(\underline{x},z)\mapsto (\underline{x},-z)$. With a slight abuse of notation, we adopt the same symbol also to indicate its natural action on $C^\infty(\bR^4)$ and on generic distribution. We also recall, that, in view of Poincar\'e covariance, $\iota_z\circ E = E\circ \iota_z$, where $E$ is the causal propagator of $P$.
 
\begin{proposition}\label{CPdynchara}
Let $\mathcal{S}^{CP}(\bH^4)$ be the dynamical configurations of a Casimir-Polder system.  It holds that $\rho_{\bH^4}\circ\left(E-\iota_z\circ E\right):\frac{C^\infty_{tc,-}(\bR^4)}{P\left[C^\infty_{tc,-}(\bR^4)\right]}\to\mathcal{S}^{CP}(\bH^4)$ is a bijection. Here $\rho_{\bH^4}$ stands for the restriction map on $\bH^4$.
\end{proposition}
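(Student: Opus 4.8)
The plan is to show that the composite map $T \doteq \rho_{\bH^4}\circ(E - \iota_z\circ E)$ is well defined on equivalence classes, injective, and surjective onto $\mathcal{S}^{CP}(\bH^4)$. The key structural observation is that $E - \iota_z\circ E$, when restricted to odd timelike compact functions, produces odd solutions of $P\phi=0$ on all of $\bR^4$, and odd functions automatically vanish on the hyperplane $z=0$; restricting to $\bH^4$ then lands in $\mathcal{S}^{CP}(\bH^4)$. Conversely, every dynamical configuration on $\bH^4$ extends by odd reflection to a global odd solution, and one uses the isomorphism $\frac{C^\infty_{tc}(\bR^4)}{P[C^\infty_{tc}(\bR^4)]}\simeq\mathcal{S}(\bR^4)$ from Section~\ref{notations}, together with the splitting \eqref{split}, to pull this back to a class in $\frac{C^\infty_{tc,-}(\bR^4)}{P[C^\infty_{tc,-}(\bR^4)]}$.

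First I would check well-definedness: if $\alpha = P\beta$ with $\beta\in C^\infty_{tc,-}(\bR^4)$, then $E(\alpha)=0$ (since $E\circ P=0$ on timelike compact functions) and likewise $\iota_z E(\alpha)=0$, so $T$ descends to the quotient. Next, for $\alpha\in C^\infty_{tc,-}(\bR^4)$, note $\iota_z E(\alpha) = E(\iota_z\alpha) = E(-\alpha) = -E(\alpha)$ using $\iota_z\circ E = E\circ\iota_z$ and oddness of $\alpha$; hence $(E-\iota_z E)(\alpha) = 2E(\alpha)$, which is a smooth solution of $P\phi=0$, is odd (being in the image of $E$ acting on an odd function, or directly: $\iota_z(2E(\alpha)) = 2E(\iota_z\alpha) = -2E(\alpha)$), and therefore vanishes at $z=0$. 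Its restriction to $\bH^4$ is smooth in the sense of Definition~\ref{smoothonbound} (it extends to $\bR^4$) and satisfies \eqref{dynCP}, so $T$ indeed maps into $\mathcal{S}^{CP}(\bH^4)$.

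For injectivity: if $\rho_{\bH^4}(2E(\alpha)) = 0$, then $E(\alpha)$ vanishes on $\bH^4$; since $E(\alpha)$ is odd, it also vanishes for $z<0$, hence $E(\alpha)\equiv 0$ on $\bR^4$, so $[\alpha]=0$ in $\frac{C^\infty_{tc}(\bR^4)}{P[C^\infty_{tc}(\bR^4)]}$. By the splitting \eqref{split} restricted to $C^\infty_{tc}$, the odd part gives $[\alpha]=0$ in $\frac{C^\infty_{tc,-}(\bR^4)}{P[C^\infty_{tc,-}(\bR^4)]}$ as well. For surjectivity: given $u\in\mathcal{S}^{CP}(\bH^4)$, extend it to $\bR^4$ by $\widetilde u(\underline x,z)\doteq -u(\underline x,-z)$ for $z<0$. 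The boundary condition $u(\underline x,0)=0$ together with the fact that $P$ contains only the second $z$-derivative ensures that $\widetilde u$ is a genuine smooth solution of $P\widetilde u=0$ on $\bR^4$ (the odd reflection matches values and, because odd functions have vanishing even-order normal data only where forced, one must verify the normal derivatives glue — here the content is that an odd smooth extension of a solution across $z=0$ remains smooth and solves the equation, which follows since $u$ extends smoothly across the boundary by Definition~\ref{smoothonbound} and oddness is preserved under $P$). Then $\widetilde u\in\mathcal{S}(\bR^4)\cap C^\infty_-(\bR^4)$, and by the isomorphism $[\alpha]\mapsto E(\alpha)$ there is $[\alpha]\in\frac{C^\infty_{tc}(\bR^4)}{P[C^\infty_{tc}(\bR^4)]}$ with $E(\alpha) = \frac12\widetilde u$; projecting onto the odd summand via \eqref{split} we may take $\alpha\in C^\infty_{tc,-}(\bR^4)$, and then $T([\alpha]) = \rho_{\bH^4}(2E(\alpha)) = \rho_{\bH^4}(\widetilde u) = u$.

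The main obstacle I expect is the surjectivity step, specifically verifying that the odd reflection $\widetilde u$ of a dynamical configuration is globally smooth \emph{and} solves $P\widetilde u = 0$ across the gluing hyperplane $z=0$. One has to use carefully that $u$ admits a smooth extension to an open neighbourhood of $\bH^4$ in $\bR^4$ (Definition~\ref{smoothonbound}) and that the Dirichlet condition forces all even-order $z$-derivatives of $u$ to vanish at $z=0$ — which is exactly what is needed for the odd extension to match derivatives of all orders — this last point following by repeatedly differentiating the equation $Pu=0$ in the tangential directions and solving for $\partial_z^{2k}u$ at $z=0$ in terms of tangential derivatives of $u(\underline x,0)=0$. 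The remaining verifications (well-definedness, the computation $(E-\iota_z E)(\alpha)=2E(\alpha)$, injectivity) are routine given the covariance relation $\iota_z\circ E = E\circ\iota_z$ and the splitting \eqref{split}.
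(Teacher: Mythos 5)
Your proposal is correct and follows essentially the same route as the paper: odd reflection of a dynamical configuration to a global odd solution for surjectivity (with the key smoothness-across-$z=0$ point resolved, as in the paper, by using $Pu=0$ to express the even-order normal derivatives in terms of tangential derivatives of the vanishing boundary data), the covariance $\iota_z\circ E=E\circ\iota_z$ and the splitting \eqref{split} for the rest. Your explicit observation that $(E-\iota_z\circ E)(\alpha)=2E(\alpha)$ on odd $\alpha$, and the resulting care with the factor $\tfrac12$ in the surjectivity step, is if anything slightly tidier than the paper's bookkeeping, but it is not a different argument.
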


\begin{proof}
The map $E-\iota_z\circ E$ implements the {\em method of images} on Minkowski spacetime. Hence its image is a solution to Klein-Gordon equation on $\bR^4$ and, once restricted to $\bH^4$ via $\rho$, it implements also the Dirichlet boundary condition.  

Let us prove surjectivity. For any $u\in C^\infty(\bH^4)$ fulfilling \eqref{dynCP} we define the auxiliary function 
$$\widetilde u(\underline{x},z)=\left\{\begin{array}{ll}
u(\underline{x},z), & \forall\, (\underline{x},z)\in\bH^4\\
-u(\underline{x},-z) &\forall\, (\underline{x},z)\in \mathbb{R}^3\times (-\infty,0)
\end{array}\right. .
$$
Notice that $\widetilde u\in C^\infty(\bR^4)$. To show it, it suffices to control the behaviour of the function at $\partial\bH^4=\bR^3\times\{0\}$. Since $u(\underline{x},0)=0$ then $\widetilde u$ is continuous at $\partial\bH^4$. Let us consider now the first order partial derivatives: Continuity  at $\partial\bH^4$ is guaranteed along any of the $\underline{x}$-directions since $u(\underline{x},0)=0$ whereas that along the $z$-direction descends from the fact that $\widetilde u$ is odd along the $z$-directions and thus $\partial_z\widetilde u$ is even. A similar string of reasoning can be applied slavishly to the second derivative\footnote{We are grateful to Nicol\'o Drago, Igor Kahvkine and Valter Moretti for an enlightening discussion on this point.} barring for that along the $z$-direction for which we have first to recall that $\partial^2_z u(\underline{x},z)=\left(\partial^2_t-\partial^2_x-\partial^2_y + m^2+\xi R\right)u(\underline{x},z)$. Consequently since $u$ vanishes on $\partial\bH^4$, so does $\partial^2_z u$. Reiterating the procedure to all orders yields in combination with Schwarz theorem the sought result. Furthermore, since $u$ is a solution of \eqref{dynCP}, it holds that $P\widetilde u=0$. Consequently, in view of our discussion in section \ref{notations}, there exists $[\alpha]\in \frac{C^\infty_{tc}(\bR^4)}{P[C^\infty_{tc}(\bR^4)]}$ such that $\widetilde u = E(\alpha)$.

Since $\widetilde u$ is an odd function, $0=\widetilde u+\iota_z\widetilde u= E(\alpha)+\iota_z E(\alpha)=E(\alpha+\iota_z\alpha)=0$.  Hence there exists $\lambda\in C^\infty_{tc,+}(\bR^4)$, for which $\alpha+\iota_z\alpha = P\lambda$. If we add the information that $E\circ P=0$ and that $P[C^\infty_{tc}(\bR^4)]=P[C^\infty_{tc,+}(\bR^4)]\oplus P[C^\infty_{tc,-}(\bR^4)]$, to each $u\in\mS^{CP}(\bH^4)$, we can associate an equivalence class $[\alpha]\in\frac{C^\infty_{tc,-}(\bR^4)}{P\left[C^\infty_{tc,-}(\bR^4)\right]}$. This proves surjectivity. Notice that this map is per construction injective as, if $u=0$, then $\widetilde u=0$ and, thus we can write $\widetilde u=E(\alpha)$ with $\alpha\in P\left[C^\infty_{tc,-}(\bR^4)\right]$.
\end{proof}

\noindent\textbf{\em Part 2 -- The off-shell algebra:} Following the scheme given in Section \ref{notations}, we define at first a space of {\bf kinematical configurations} for a Casimir-Polder system. Let us introduce the following map:
\begin{equation}\label{eta-map}
\eta:C^\infty(\bR^4)\to C^\infty(\bH^4),\quad\phi(\underline{x},z)\mapsto u(\underline{x},z)\doteq\left.\frac{1}{\sqrt{2}}\left(\phi(\underline{x},z)-\iota_z\phi(\underline{x},z)\right)\right|_{\bH^4},
\end{equation}
where the numerical pre-factor is a normalization. 
\begin{definition}\label{CPoff}
We call {\em space of kinematical/off-shell configurations} for a Casimir-Polder system 
$$\mathcal{C}^{CP}(\bH^4)\doteq\left\{u\in C^\infty(\bH^4)\;|\;\left.u\right|_{\partial\bH^4}=0\;\textrm{and}\;\exists \phi\in \mathcal{C}^{KG}(\bR^4)\;\textrm{such that}\;u=\eta(\phi)\right\},$$
where $\eta$ is the map defined in \eqref{eta-map}. Since $\mathcal{C}^{CP}(\bH^4)\subset C^\infty(\bH^4)$ and since $\eta$ is for construction surjective thereon, we endow $\mathcal{C}^{CP}(\bH^4)$ with the quotient topology. In complete analogy we shall also consider $\mathcal{C}^{CP}_0(\bH^4)$ where the subscript $0$ stands for compact support. 
\end{definition}
\noindent For later convenience, we introduce $\eta^\dagger:C^\infty_0(\bH^4)\to \mathcal{E}^\prime(\bR^4)$ defined via
\begin{equation*}
\langle \eta^\dagger (h), \phi\rangle\doteq\langle h, \eta(\phi)\rangle_{\bH^4}=\int\limits_{\bH^4}d^4x\,h(x)\eta(\phi)(x),
\end{equation*}
Notice both that the integral kernel of $\eta^\dagger(h)\in\mathcal E^\prime(\bR^4)$ is $\frac{1}{\sqrt{2}}\left(h(x)\Theta(z)-[\iota_z(h)](x)\Theta(-z)\right)$ where $\Theta$ is the Heaviside step function and that
\begin{equation}\label{wfetadaga}
WF(\eta^\dagger(h))\subset\{(x, k)\in T^*\bR^4\setminus\{0\} \mid x\in\partial\bH^4 \,\,\text{and}\,\, k^i=g^{ij}k_j=0,\,\forall i\neq z\},
\end{equation}
where $g$ stands here for the Minkowski metric written in Cartesian coordinates.

In analogy with Definition \ref{functional}, we now introduce regular functionals on $\mathcal{C}^{CP}(\bH^4)$.
\begin{definition}\label{functionalCP}
Let $F:\mathcal{C}^{CP}(\bH^4)\to\bC$ be any smooth functional. We call it {\bf regular} if for all $k\geq 1$ and for all $u\in\mathcal{C}^{CP}(\bH^4)$, $F^{(k)}[u]\in C^{\infty}_0(\bH^{4k})$ and if only finitely many functional derivatives do not vanish. We indicate this set as $\mathcal{F}_0(\bH^4)$.
\end{definition}

\noindent In order to introduce a suitable product in $\mathcal{F}_0(\bH^4)$, analogous to \eqref{algprod}, we define a map which plays the role of $E$ in \eqref{algprod2}:
\begin{equation}\label{modified}
E_{\bH^4}: C^\infty_0(\bH^4)\to\mathcal S^{CP},\quad E_{\bH^4}(h)\doteq \eta\circ E\circ\eta^\dagger(h),
\end{equation}
and we call it {\bf CP-propagator}. Observe that $E\circ\eta^\dagger(h)$ is well-defined in view of \eqref{wfetadaga} and of \cite[Th. 8.2.13]{Hormander1}. Furthermore, for all $h\in C^\infty_0(\bH^4)$,  $E(\eta^\dagger(h))\in C^\infty_{sc}(\bR^4)$ and it solves the Klein-Gordon equation. As a by-product $E_{\bH^4}$ is well-defined map into $\mathcal S^{CP}$.

Let us consider now:
$$\star_{\bH^4}:\mathcal{F}_0(\bH^4)\times\mathcal{F}_0(\bH^4)\to\mathcal{F}_0(\bH^4),$$
which associates to each $F,F^\prime\in\mathcal{F}_0(\bH^4)$
\begin{equation}\label{starH}
\left(F\star_{\bH^4}F^\prime\right)(u)=\left(\mathcal{M}\circ\exp(i\Gamma_{E_\bH^4})(F\otimes F^\prime)\right)(u).
\end{equation}
Here $\mathcal{M}$ stands for the pointwise multiplication, {\it i.e.}, $\mathcal{M}(F\otimes F^\prime)(u)\doteq F(u)F^\prime(u)$, whereas 
$$\Gamma_{E_{\bH^4}}\doteq\frac{1}{2}\int_{\mathbb{H}^4\times\mathbb{H}^4}E_{\bH^4}(x,x^\prime)\frac{\delta}{\delta u(x)}\otimes\frac{\delta}{\delta u(x^\prime)},$$
where $E_{\bH^4}(x,x^\prime)$ is the integral kernel of \eqref{modified}. The exponential in \eqref{starH} is defined intrinsically in terms of the associated power series and, consequently, we can rewrite the product also as 
\begin{equation}\label{algprod3}
\left( F\star_{\bH^4} F^\prime\right)(u)=\sum\limits_{n=0}^\infty\frac{i^n}{2^n n!}\langle F^{(n)}(u),E_{\bH^4}^{\otimes n}(F^{\prime (n)})(u)\rangle_{\bH^4},
\end{equation}
where $\langle,\rangle_{\bH^4}$ stands for the pairing on $\bH^4$ built of out integration. The $0$-th order is defined as the pointwise multiplication, that is $\langle F^{(0)}(u),F^{\prime (0)}(u)\rangle\doteq F(u)F^\prime(u)$. Notice that \eqref{algprod3} defines a $\star$-product. In view of \eqref{modified}, $\langle F^{(n)}(u),E_{\bH^4}^{\otimes n}(F^{\prime (n)})(u)\rangle_{\bH^4}$ is well-defined for all $n\geq 0$ and the non-vanishing derivatives of $F\star_{\bH^4} F^\prime$ evaluated on any $u\in\mathcal{C}^{CP}(\bH^4)$ are compactly suppported.

\begin{definition}\label{PolCP}
We call $\mathcal{A}^{CP}(\bH^4)\equiv\left(\mathcal{F}_0(\bH^4),\star_{\bH^4}\right)$ the {\em off-shell $*$-algebra} of a Casimir-Polder system endowed with complex conjugation as $*$-operation. It is generated by the functionals $F_h(u)=\int\limits_{\bH^4}d^4x\, u(x)h(x)$ where $h\in C^\infty_0(\bH^4)$ while $u\in\mathcal{C}^{CP}(\bH^4)$. 
\end{definition}

\vskip .2cm

\noindent\textbf{\em Part 3 -- The on-shell algebra:} To conclude our investigation on the algebra of observables for a Casimir-Polder system, we want to investigate how $\mathcal{A}^{CP}(\bH^4)$ should be modified if we restrict the allowed configurations from $\mathcal{C}^{CP}(\bH^4)$ to $\mathcal{S}^{CP}(\bH^4)$. At this stage it is more advantageous to work on the counterpart of $\mathcal{S}^{CP}(\bH^4)$ on Minkowski spacetime specified by Proposition \ref{CPdynchara}.

\begin{proposition}\label{obsCP}
Let $\mathcal O^{KG}_-(\bR^4)$ be the span of all functionals $F_{[\zeta]}:\frac{C^\infty_{tc,-}(\bR^4)}{P[C^\infty_{tc,-}(\bR^4)]}\to\bC$, $[\zeta]\in \frac{C^\infty_{0,-}(\bR^4)}{P[C^\infty_{0,-}(\bR^4)]}$  such that $F_{[\zeta]}([\alpha])=\int\limits_{\bR^4}\,\zeta E(\alpha)$. This space is:
\begin{enumerate}
\item {\em\bf separating}, that is for every pair of different configurations $[\alpha],[\alpha^\prime]\in\frac{C^\infty_{tc,-}(\bR^4)}{P[C^\infty_{tc,-}(\bR^4)]}$, there exists a classical observable $[\zeta]\in\frac{C^\infty_{0,-}(\bR^4)}{P[C^\infty_{0,-}(\bR^4)]}$ such that $F_{[\zeta]}([\alpha])\neq F_{[\zeta]}([\alpha^\prime])$.
\item {\em\bf optimal}, that is, for every pair of classical observables $[\zeta],[\zeta^\prime]\in\frac{C^\infty_{0,-}(\bR^4)}{P[C^\infty_{0,-}(\bR^4)]}$ there exists at least one configuration $[\alpha]\in\frac{C^\infty_{tc,-}(\bR^4)}{P[C^\infty_{tc,-}(\bR^4)]}$ such that $F_{[\zeta]}([\alpha])=F_{[\zeta^\prime]}([\alpha])$
\item {\em\bf symplectic} if endowed with the following weakly non-degenerate symplectic form\footnote{Notice that, from a geometrical point of view, it would be more appropriate to refer to $\sigma$ as a Poisson structure. We stick to the more traditional codification used in algebraic quantum field theory.}:
$$
\sigma:\mathcal{O}^{KG}_-(\bR^4)\times\mathcal{O}^{KG}_-(\bR^4)\to\bR,\quad \sigma(F_[\zeta],F_[\zeta^\prime])=
\langle\zeta,E(\zeta^\prime)\rangle=\int\limits_{\bR^4}d^4x\; \zeta(x) E(\zeta^\prime)(x).
$$
\end{enumerate}
\end{proposition}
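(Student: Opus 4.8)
The plan is to prove the three properties in turn, exploiting the fact that, by Proposition~\ref{CPdynchara}, configurations are equivalence classes $[\alpha]\in\frac{C^\infty_{tc,-}(\bR^4)}{P[C^\infty_{tc,-}(\bR^4)]}$ and that on Minkowski spacetime the causal propagator $E$ gives the isomorphism $\frac{C^\infty_{tc}(\bR^4)}{P[C^\infty_{tc}(\bR^4)]}\to\mathcal S(\bR^4)$, an isomorphism which, by the discussion around \eqref{split}, restricts to the odd sectors. The whole argument is essentially a bookkeeping exercise transporting the well-known non-degeneracy of the symplectic pairing of the Klein-Gordon field to the odd subsector; the only mild subtlety is keeping track of supports ($tc$ versus $0$) and of the odd/even splitting.

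First I would treat separatingness. Given $[\alpha]\neq[\alpha^\prime]$, set $[\beta]=[\alpha]-[\alpha^\prime]\neq 0$, so $E(\beta)\in\mathcal S(\bR^4)$ is a nonzero odd smooth solution. It suffices to find $\zeta\in C^\infty_{0,-}(\bR^4)$ with $\int_{\bR^4}\zeta\, E(\beta)\neq 0$; then $F_{[\zeta]}([\alpha])-F_{[\zeta]}([\alpha^\prime])=\int\zeta E(\beta)\neq 0$. Since $E(\beta)$ is a nonzero continuous function there is an open set where it is nonzero; because $E(\beta)$ is odd, this open set can be chosen symmetric or we antisymmetrize: pick any $\chi\in C^\infty_0(\bR^4)$ supported where $E(\beta)$ is strictly positive (on the $z>0$ side, say) and put $\zeta=\tfrac12(\chi-\iota_z\chi)\in C^\infty_{0,-}(\bR^4)$; then $\int\zeta E(\beta)=\int\chi E(\beta)>0$ using that $E(\beta)$ is odd and $\iota_z$-invariance of Lebesgue measure. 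One must also check $F_{[\zeta]}$ is well defined on equivalence classes, i.e. $\int (P h)\,E(\alpha)=0$ for $h\in C^\infty_{0,-}$ and that it does not depend on the representative $\zeta$, both of which follow from $E\circ P=P\circ E=0$ and the formal self-adjointness of $P$.

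Next, optimality. Here one wants: for any $[\zeta],[\zeta^\prime]$ there is $[\alpha]$ with $\int\zeta E(\alpha)=\int\zeta^\prime E(\alpha)$, equivalently $\int(\zeta-\zeta^\prime)E(\alpha)=0$, which is immediate by taking $\alpha=0$; so in this formulation optimality is trivial, but I suspect the intended (and nontrivial) statement is the dual of separatingness, namely that the observables separate configurations \emph{and} that every nonzero class $[\zeta]$ gives a nonzero functional — i.e. the labelling map $[\zeta]\mapsto F_{[\zeta]}$ is injective. I would therefore establish: if $[\zeta]\neq 0$ in $\frac{C^\infty_{0,-}(\bR^4)}{P[C^\infty_{0,-}(\bR^4)]}$ then there exists $[\alpha]$ with $F_{[\zeta]}([\alpha])\neq 0$. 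Using $\int\zeta E(\alpha)=-\int E(\zeta)\,\alpha$ (antisymmetry of $E$) together with the fact that $E(\zeta)\neq 0$ whenever $[\zeta]\neq 0$ (this is exactly the odd-sector restriction of the Minkowski isomorphism $\frac{C^\infty_0}{PC^\infty_0}\cong \mathcal S_{sc}$), one picks $\alpha$ to be a suitable odd, timelike-compact bump localized where $E(\zeta)$ is nonzero, exactly as in the separating argument.

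Finally, the symplectic property. One sets $\sigma(F_{[\zeta]},F_{[\zeta^\prime]})=\int_{\bR^4}\zeta\,E(\zeta^\prime)$ and checks: (i) well-definedness on classes — replacing $\zeta^\prime$ by $\zeta^\prime+P h$ changes nothing since $E\circ P=0$, and replacing $\zeta$ by $\zeta+Pk$ changes nothing since $\int (Pk)E(\zeta^\prime)=\int k\,P E(\zeta^\prime)=0$; (ii) bilinearity, which is clear; (iii) antisymmetry, which follows from $E^\ast=-E$, i.e. $\int\zeta E(\zeta^\prime)=-\int\zeta^\prime E(\zeta)$; (iv) weak non-degeneracy: if $\int\zeta E(\zeta^\prime)=0$ for all $[\zeta^\prime]$, then $E(\zeta)$ pairs to zero against $C^\infty_{0,-}(\bR^4)$, hence against all of $C^\infty_0(\bR^4)$ by the odd/even splitting (the even part of a test function pairs to zero with the odd function $E(\zeta)$ automatically), so $E(\zeta)=0$, so $[\zeta]=0$ by the odd-sector isomorphism. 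I expect step (iv), the non-degeneracy, to be the one requiring the most care, precisely because it is where one must invoke that the Minkowski causal-propagator isomorphism \eqref{split} restricts faithfully to the odd sector and handle the interplay between the compact-support class used for observables and the timelike-compact class used for configurations; everything else is routine manipulation of $E^+,E^-$ and integration by parts.
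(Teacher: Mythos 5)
Your proposal is correct and follows essentially the same route as the paper: antisymmetrizing a generic test function to land in the odd sector for separatingness, proving the injectivity of the labelling $[\zeta]\mapsto F_{[\zeta]}$ for optimality, and reducing weak non-degeneracy of $\sigma$ to the standard Minkowski pairing via the observation that the even sector pairs trivially with odd functions. You were also right to flag that the ``optimal'' condition as literally written (with $=$ and without requiring $[\zeta]\neq[\zeta^\prime]$) is trivially satisfied and must be a typo; the paper's own proof establishes exactly the non-trivial version you substituted, namely that distinct classes $[\zeta]\neq[\zeta^\prime]$ are distinguished by some configuration $[\alpha]$.
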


\begin{proof}
Let us prove {\em 1.} Consider any pair $[\alpha],[\alpha^\prime]\in\mathcal{S}^{CP}(\bH^4)$, $[\alpha]\neq [\alpha^\prime]$, and two representatives $\alpha,\alpha^\prime \in C^\infty_{tc,-}(\bR^4)$. On account of standard arguments in analysis we know that $C^\infty_{0}(\bR^4)$ is separating for $C^\infty(\bR^4)$ with respect to the pairing \eqref{functional}. Hence, since $E(\alpha-\alpha^\prime)\in C^\infty(\bR^4)$ is not vanishing, there must exist $\beta\in C^\infty_0(\bR^4)$ such that $(\beta,E(\alpha-\alpha^\prime))\neq 0$. Since $E(\alpha-\alpha^\prime)\in C^\infty_{-}(\bR^4)$, it holds that 
$(\beta,E(\alpha-\alpha^\prime))=(\zeta,E(\alpha-\alpha^\prime))$ where $\zeta(\underline{x},z)\doteq\beta(\underline{x},z)-\beta(\underline{x},-z)\in C^\infty_{0,-}(\bR^4)$. $\zeta$ identifies a non trivial element in $\mathcal{O}^{KG}_-(\bR^4)$, hence the statement is proven. 

We focus on {\em 2.} Let $[\zeta],[\zeta^\prime]\in\mathcal{O}^{KG}_-(\bR^4)$ and let $\zeta,\zeta^\prime$ be two arbitrary representatives. For the same reason as in the previous point, since $E(\zeta-\zeta^\prime)\in C^\infty(\bR^4)$ is non vanishing there must exist $\gamma\in C^\infty_0(\bR^4)$ such that $\supp(\gamma)\cap\left(\supp(E(\zeta))\cup\supp(E(\zeta^\prime))\right)\neq\emptyset$ and that $(\gamma,E(\zeta-\zeta^\prime))\neq 0$. Let $\alpha(\underline{x},z)\doteq\gamma(\underline{x},z)-\gamma(\underline{x},-z)\in C^\infty_{0,-}(\bR^4)\subset C^\infty_{tc,-}(\bR^4)$ individuate an element in $\mathcal{S}^{CP}(\bH^4)$ via the action of the causal propagator $E$. It holds that $F_{[\zeta]-[\zeta^\prime]}([\alpha])=(\zeta-\zeta^\prime, E(\alpha))=-(E(\zeta-\zeta^\prime),\alpha))=2(E(\zeta-\zeta^\prime),\gamma)\neq 0$, which entails the sought result.

At last we prove {\em 3.} Notice that, per construction, $\sigma$ is bilinear and antisymmetric. Suppose that, per absurd, there exists a non trivial $F_[\alpha]\in\mathcal{O}^{KG}_-(\bR^4)$ such that $\sigma(F_[\alpha],F_[\alpha^\prime])=0$ for every $F[\alpha^\prime]\in\mathcal{O}^{KG}_-(\bR^4)$. Since every representative of $[\alpha]$ is odd, the same statement holds true for every $[\alpha^\prime]\in\frac{C^\infty_0(\bR^4)}{P[C^\infty_0(\bR^4)]}$ since $\frac{C^\infty_{0,-}(\bR^4)}{P[C^\infty_{0,-}(\bR^4)]}$ and $\frac{C^\infty_{0,+}(\bR^4)}{P[C^\infty_{0,+}(\bR^4)]}$ are orthogonal to each other with respect to $\sigma$. 
\end{proof}

\begin{corollary}\label{crucialedirei}
Let $\mathcal{O}^{CP}(\bH^4)$ be the span of all functionals $F_{[f]}:\mathcal{S}^{CP}(\bH^4)\to\bC$ with $[f]\in\frac{\mathcal{C}^{CP}_0(\bH^4)}{P[\mathcal{C}^{CP}_0(\bH^4)]}$ such that $F_{[f]}(u)=\int\limits_{\bH^4}d^4x\; f(x)u(x)$, endowed with the symplectic form:
$$
\sigma_{\bH^4}:\mathcal{O}^{CP}(\bH^4)\times\mathcal{O}^{CP}(\bH^4)\to\bR,\quad \sigma_{\bH^4}(F_{[f]},F_{[f^\prime]})
\doteq \langle f, E_{\bH^4}f^\prime \rangle_{\bH^4} =\int\limits_{\bH^4}d^4x\; f(x) E_{\bH^4}(f^\prime)(x).
$$
There exists an isomorphism of symplectic spaces between $\mathcal{O}^{CP}(\bH^4)$ and $\mathcal{O}^{KG}_-(\bR^4)$.
\end{corollary}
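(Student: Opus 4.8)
The plan is to realise the isomorphism explicitly by means of the map $\eta^\dagger$ introduced after \eqref{eta-map}. The first step is to pin down how $\eta$ and $\eta^\dagger$ act on the spaces at hand. Given $f\in\mathcal{C}^{CP}_0(\bH^4)$, by definition $f=\eta(\phi)$ for some $\phi\in C^\infty(\bR^4)$, and comparing with the expression for the integral kernel of $\eta^\dagger(f)$ recalled in the text before \eqref{wfetadaga} one sees that $\eta^\dagger(f)$ is the globally smooth, \emph{odd}, compactly supported function $\tfrac12(\phi-\iota_z\phi)$; in particular $\eta^\dagger(f)\in C^\infty_{0,-}(\bR^4)$, and $\widetilde f\doteq\sqrt2\,\eta^\dagger(f)$ is exactly the odd extension of $f$ to $\bR^4$. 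Conversely, for $\zeta\in C^\infty_{0,-}(\bR^4)$ one has $\iota_z\zeta=-\zeta$, hence $\eta(\zeta)=\sqrt2\,\rho_{\bH^4}(\zeta)\in\mathcal{C}^{CP}_0(\bH^4)$, and a short computation with the kernels gives $\eta^\dagger\circ\eta=\mathrm{id}$ on $C^\infty_{0,-}(\bR^4)$ and $\eta\circ\eta^\dagger=\mathrm{id}$ on $\mathcal{C}^{CP}_0(\bH^4)$. Thus $\eta^\dagger$ restricts to a linear bijection $\mathcal{C}^{CP}_0(\bH^4)\to C^\infty_{0,-}(\bR^4)$; since $P$ commutes both with the reflection $\iota_z$ (Section \ref{notations}) and with restriction, $\eta^\dagger$ intertwines the two actions of $P$ and therefore descends to a vector space isomorphism $[f]\mapsto[\eta^\dagger f]$ between $\mathcal{C}^{CP}_0(\bH^4)/P[\mathcal{C}^{CP}_0(\bH^4)]$ and $C^\infty_{0,-}(\bR^4)/P[C^\infty_{0,-}(\bR^4)]$.

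Next I would set $\Psi(F_{[f]})\doteq F_{[\eta^\dagger f]}$ on generators and extend it by linearity on the spans, reading off well-definedness and bijectivity from a comparison of the defining functionals via Proposition \ref{CPdynchara}. If $u\in\mathcal{S}^{CP}(\bH^4)$ corresponds to $[\alpha]\in C^\infty_{tc,-}(\bR^4)/P[C^\infty_{tc,-}(\bR^4)]$, then using $\iota_z\alpha=-\alpha$ Proposition \ref{CPdynchara} gives $u=2\,\rho_{\bH^4}(E\alpha)$, and, since $\widetilde f$ and $E\alpha$ are both odd, their product is even, so its integral over $\bH^4$ equals half its integral over $\bR^4$; hence
\[
F_{[f]}(u)=\int_{\bH^4}f\,u=2\int_{\bH^4}f\,\rho_{\bH^4}(E\alpha)=\int_{\bR^4}\widetilde f\,(E\alpha)=\sqrt2\,F_{[\eta^\dagger f]}([\alpha]),
\]
the last equality using $\widetilde f=\sqrt2\,\eta^\dagger f$. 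In particular $F_{[f]}$ depends only on $[f]$, and since the correspondence of Proposition \ref{CPdynchara} is a bijection, a finite combination $\sum_i c_iF_{[f_i]}$ vanishes on $\mathcal{S}^{CP}(\bH^4)$ if and only if $\sum_i c_iF_{[\eta^\dagger f_i]}$ vanishes on $C^\infty_{tc,-}(\bR^4)/P[C^\infty_{tc,-}(\bR^4)]$; together with the surjectivity of $[f]\mapsto[\eta^\dagger f]$ this makes $\Psi$ a linear isomorphism. The uniform factor $\sqrt2$ relating $F_{[f]}$ to its image plays no role here: an isomorphism of symplectic spaces is only required to be a linear bijection intertwining the symplectic forms.

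That intertwining is the last step. Using $\iota_z\circ E=E\circ\iota_z$ and $\eta^\dagger(f')=\tfrac1{\sqrt2}\widetilde{f'}$ one computes $E_{\bH^4}(f')=\eta\big(E(\eta^\dagger f')\big)=\rho_{\bH^4}(E\widetilde{f'})$, and therefore, folding once more the even integrand $\widetilde f\cdot E\widetilde{f'}$,
\[
\sigma_{\bH^4}(F_{[f]},F_{[f']})=\int_{\bH^4}f\,E_{\bH^4}(f')=\int_{\bH^4}f\,\rho_{\bH^4}(E\widetilde{f'})=\tfrac12\int_{\bR^4}\widetilde f\,(E\widetilde{f'})=\sigma(F_{[\eta^\dagger f]},F_{[\eta^\dagger f']}),
\]
the final equality using $\tfrac12\widetilde f\,E\widetilde{f'}=(\eta^\dagger f)\,E(\eta^\dagger f')$ and the definition of $\sigma$ from Proposition \ref{CPdynchara}(3). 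Since the right-hand side is $\sigma$ evaluated on $\Psi(F_{[f]})$ and $\Psi(F_{[f']})$, this shows $\Psi$ is the claimed isomorphism of symplectic spaces.

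I expect the only real friction to be the bookkeeping of the normalisations — the $1/\sqrt2$ built into $\eta$ in \eqref{eta-map}, the factor $2$ produced by the odd extension together with $\iota_z\alpha=-\alpha$, and the factor $1/2$ from restricting the integral of an even function to $\bH^4$ — together with the easy but essential observation that $\eta^\dagger(f)$ is a genuine smooth odd function and not merely a distribution, which holds precisely because $f\in\mathcal{C}^{CP}_0(\bH^4)$ vanishes on $\partial\bH^4$ and is the $\eta$-image of a smooth $\phi$. These are exactly the points where the definitions \eqref{eta-map} and \eqref{modified} have been arranged so that the two symplectic forms agree on the nose rather than up to a constant.
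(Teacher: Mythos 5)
Your proposal is correct and follows essentially the same route as the paper: the paper's isomorphism is the pull-back $\eta^*$, which it immediately identifies with $F_{[f]}\mapsto F_{\eta^\dagger([f])}$, and it verifies the symplectic intertwining via the same identity $\langle \eta^\dagger(f),E(\eta^\dagger(f^\prime))\rangle=\langle f,\eta E\eta^\dagger(f^\prime)\rangle_{\bH^4}$ that you obtain by folding the even integrand. Your version merely makes explicit the normalisation bookkeeping (the $\sqrt2$ relating $\eta(E\alpha)$ to the image of $[\alpha]$ under Proposition \ref{CPdynchara}, which indeed is harmless) that the paper leaves implicit.
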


\begin{proof}
First of all we notice that $F_{[f]}(u)$ with $u\in\mathcal{S}^{CP}(\bH^4)$ and $[f]\in\frac{\mathcal{C}^{CP}_0(\bH^4)}{P[\mathcal{C}^{CP}_0(\bH^4)]}$ is well-defined as the choice of the representative of $[f]$ is not relevant. As a matter of fact, on account of the boundary conditions of all elements involved, we can still integrate by parts canceling all boundary terms so that, for all $Pf^\prime$, $f^\prime\in\mathcal{C}^{CP}_0(\bH^4)$, $\int\limits_{\bH^4}d^4x\, P(f^\prime) u=\int\limits_{\bH^4}d^4x\,f^\prime Pu=0$ since $u\in\mathcal{S}^{CP}(\bH^4)$.
To prove the isomorphism we construct explicitly a bijective map from $\mathcal{O}^{CP}(\bH^4)$ to $\mathcal{O}^{KG}_-(\bR^4)$, preserving the symplectic structure.
We observe that the map $\eta$ of \eqref{eta-map} is injective on $C_{0,-}^\infty(\bR^4)$ thus it admits an inverse map $\eta^{-1}$ defined on $\eta[C_{0,-}^\infty(\bR^4)]\equiv\mathcal{C}^{CP}_0(\bH^4)$. Since both $\eta$ and $\eta^{-1}$ descend to the quotients $\frac{C^\infty_{0,-}(\bR^4)}{P[C^\infty_{0,-}(\bR^4)]}$ and $\frac{\mathcal{C}^{CP}_0(\bH^4)}{P[\mathcal{C}^{CP}_0(\bH^4)]}$, we can define with a slight abuse of notation the pull-back:
\begin{equation}\label{eta_star}
\eta^\ast:\mathcal{O}^{CP}(\bH^4) \to \mathcal{O}^{KG}_-(\bR^4) 
,\quad
\eta^\ast(F_{[f]})([\alpha])\doteq F_{[f]}(\eta(\phi)),  
\end{equation}
where $\phi=E([\alpha])\in\mathcal S^{KG}$.
Since any $u\in\mathcal S^{CP}$ is the unique image of a $[\alpha]\in \frac{C^\infty_{tc,-}(\bR^4)}{P[C^\infty_{tc,-}(\bR^4)]}$ via the bijection {\em 2.} of Proposition \ref{CPdynchara}, $\eta^*$ is an isomorphism of vector spaces. It also preserving the symplectic structure $\sigma_{\bH^4}$. To prove it, let us observe that $\eta^*(F_{[f]})=F_{\eta^\dagger([f])}$. We thus can write:
\begin{gather*}
\sigma(\eta^*(F_{[f]}), \eta^*(F_{[f^\prime]}))=\sigma(F_{\eta^\dagger([f])}, F_{\eta^\dagger([f^\prime])})\\
=\langle \eta^\dagger(f), E(\eta^\dagger(f^\prime))\rangle=\langle f,\eta E\eta^\dagger(f^\prime)\rangle_{\bH^4}=\\
=\sigma_{\bH^4}(F_{[f]}, F_{[f^\prime]}),
\end{gather*}
which is valid for all $F_{[f]}, F_{[f^\prime]}\in \mathcal O^{CP}(\bH^4)$.
\end{proof}

\noindent In view of this corollary, 

\begin{definition}\label{CPalgebra}
We call {\bf on-shell $*$-algebra of observables for a Casimir-Polder system} the algebra $\left(\mathcal{A}^{CP}_{on}(\bH^4),\star_{\bH^4}\right)$ generated by the functionals $\mathcal O^{CP}(\bH^4)$, where $\star_{\bH^4}$ is defined in \eqref{starH}.
\end{definition}

Before proving several important properties of $\mathcal{A}^{CP}_{on}(\bH^4)$, we want to investigate how it relates with the Minkowski counterpart $\mathcal A^{KG}(\bR^4)$. This will give us the chance to prove the already mentioned properties.

\begin{proposition}\label{globalcomparison}
Let $\widetilde\eta^*:\mathcal{A}_{on}^{CP}(\bH^4)\to\mathcal{A}_{on}^{KG}(\bR^4)$ be the natural extension of the pull-back map $\eta^*$ defined on $\mathcal O^{CP}(\bH^4)$. This is an injective $*$-homomorphism of algebras which becomes an isomorphism on $\mathcal{A}^{KG}_{on,-}(\bR^4)$, the $*$-subalgebra of $\mathcal{A}^{KG}(\bR^4)$ generated by functionals $\mathcal O^{KG}_-(\bR^4)$.
\end{proposition}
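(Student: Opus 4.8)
The plan is to realise $\widetilde\eta^*$ concretely as the pull-back along $\eta$ and to transport every structural datum through it, exploiting that Corollary~\ref{crucialedirei} already provides the isomorphism $\eta^*$ of the \emph{generating} symplectic spaces $\mathcal{O}^{CP}(\bH^4)\to\mathcal{O}^{KG}_-(\bR^4)$, $F_{[f]}\mapsto F_{\eta^\dagger([f])}$. For $F\in\mathcal{A}^{CP}_{on}(\bH^4)$, i.e.\ any $\star_{\bH^4}$-polynomial in the linear generators $F_{[f]}$ with $[f]\in\frac{\mathcal{C}^{CP}_0(\bH^4)}{P[\mathcal{C}^{CP}_0(\bH^4)]}$, I set $\widetilde\eta^*(F)(\phi)\doteq F(\eta(\phi))$ for $\phi\in\mathcal{S}(\bR^4)$. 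As an assignment this is manifestly independent of how $F$ is written as a polynomial, so the first point to check is merely that $F\circ\eta$ is again a \emph{regular} functional restricted to the on-shell configurations. Since $\eta$ is linear and continuous, the chain rule gives $(F\circ\eta)^{(n)}[\phi]=(\eta^\dagger)^{\otimes n}F^{(n)}[\eta\phi]$, and for $F$ in $\mathcal{A}^{CP}_{on}(\bH^4)$ all derivatives $F^{(n)}[u]$ lie (after completion) in $\mathcal{C}^{CP}_0(\bH^4)^{\otimes n}$ -- the product rule \eqref{algprod3} only contracts slots against $E_{\bH^4}$ or tensors them together -- and on that subspace $(\eta^\dagger)^{\otimes n}$ is precisely the $C^\infty_{0,-}(\bR^4)$-valued inverse of $\eta^{\otimes n}$, so $F\circ\eta\in\mathcal{A}^{KG}_{on}(\bR^4)$.

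Next comes the compatibility of the two products, which is the heart of the statement. Inserting $(F\circ\eta)^{(n)}[\phi]=(\eta^\dagger)^{\otimes n}F^{(n)}[\eta\phi]$ into the series \eqref{algprod2} for $\star$ and using the transpose relation $\langle\eta^\dagger a,b\rangle=\langle a,\eta b\rangle_{\bH^4}$ together with the definition $E_{\bH^4}=\eta\circ E\circ\eta^\dagger$ of \eqref{modified}, one obtains termwise
\[
\big\langle(F\circ\eta)^{(n)}[\phi],\,E^{\otimes n}(G\circ\eta)^{(n)}[\phi]\big\rangle=\big\langle F^{(n)}[\eta\phi],\,E_{\bH^4}^{\otimes n}G^{(n)}[\eta\phi]\big\rangle_{\bH^4},
\]
whence, comparing \eqref{algprod3} with \eqref{algprod2}, $\widetilde\eta^*(F\star_{\bH^4}G)=\widetilde\eta^*(F)\star\widetilde\eta^*(G)$. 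On generators this reproduces $F_{[f]}\star_{\bH^4}F_{[f^\prime]}\mapsto F_{\eta^\dagger[f]}\star F_{\eta^\dagger[f^\prime]}$, whose c-number part is $\tfrac{i}{2}\sigma_{\bH^4}(F_{[f]},F_{[f^\prime]})=\tfrac{i}{2}\sigma(\eta^*F_{[f]},\eta^*F_{[f^\prime]})$ exactly by Corollary~\ref{crucialedirei}, so the homomorphism property is consistent with the commutation relations. Because $\eta$ is real ($\overline{\eta\phi}=\eta\bar\phi$), $\widetilde\eta^*$ intertwines the complex-conjugation $*$-operations and sends the unit to the unit, hence it is a unital $*$-homomorphism; and since each $\widetilde\eta^*(F_{[f]})=F_{\eta^\dagger[f]}$ has test function in $C^\infty_{0,-}(\bR^4)$, the image of the whole algebra lies in the $*$-subalgebra $\mathcal{A}^{KG}_{on,-}(\bR^4)\subseteq\mathcal{A}^{KG}_{on}(\bR^4)$ generated by $\mathcal{O}^{KG}_-(\bR^4)$.

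It remains to obtain injectivity and surjectivity onto $\mathcal{A}^{KG}_{on,-}(\bR^4)$. For injectivity I use that $\eta$ maps $\mathcal{S}(\bR^4)$ \emph{onto} $\mathcal{S}^{CP}(\bH^4)$: given $u\in\mathcal{S}^{CP}(\bH^4)$, its odd extension $\widetilde u$, built exactly as in the proof of Proposition~\ref{CPdynchara}, satisfies $P\widetilde u=0$ on $\bR^4$ and $\eta\big(\tfrac{1}{\sqrt2}\widetilde u\big)=u$; hence if $\widetilde\eta^*(F)=F\circ\eta$ vanishes on $\mathcal{S}(\bR^4)$ then $F$ vanishes on $\mathcal{S}^{CP}(\bH^4)$, i.e.\ $F=0$ in $\mathcal{A}^{CP}_{on}(\bH^4)$. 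For surjectivity, Corollary~\ref{crucialedirei} tells us that $\eta^\dagger$ induces a bijection of the label spaces $\frac{\mathcal{C}^{CP}_0(\bH^4)}{P[\mathcal{C}^{CP}_0(\bH^4)]}\to\frac{C^\infty_{0,-}(\bR^4)}{P[C^\infty_{0,-}(\bR^4)]}$, so $\widetilde\eta^*$ carries the generating set of $\mathcal{A}^{CP}_{on}(\bH^4)$ onto all of $\mathcal{O}^{KG}_-(\bR^4)$; being a homomorphism, its image is then the subalgebra generated by $\mathcal{O}^{KG}_-(\bR^4)$, i.e.\ $\mathcal{A}^{KG}_{on,-}(\bR^4)$. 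Putting the pieces together, $\widetilde\eta^*$ is an injective unital $*$-homomorphism into $\mathcal{A}^{KG}_{on}(\bR^4)$ whose corestriction to $\mathcal{A}^{KG}_{on,-}(\bR^4)$ is a $*$-isomorphism.

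I expect the genuinely delicate point not to be any of the algebraic identities -- which are bookkeeping once Corollary~\ref{crucialedirei} is available, and are the concrete shadow of the functoriality of the CCR-type construction with respect to the symplectomorphism $\eta^*$ -- but rather the control of regularity: $\eta^\dagger$ of a generic element of $C^\infty_0(\bH^4)$ is only a compactly supported distribution with a jump across $\{z=0\}$, so $F\circ\eta$ need not be regular for arbitrary $F$, and the argument must be organised generator-first and then propagated through $\star_{\bH^4}$. This is precisely where the wave-front input \cite[Th.~8.2.13]{Hormander1} invoked for \eqref{modified} does its work, guaranteeing that every pairing in \eqref{algprod3} is well defined and matches its $\bR^4$ counterpart.
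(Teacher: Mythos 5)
Your proof is correct and follows essentially the same route as the paper: realise $\widetilde\eta^*$ as pull-back along $\eta$, identify $\widetilde\eta^*(F_{[f]})=F_{\eta^\dagger([f])}$, match the $\star$-products through the identity $\langle\eta^\dagger f,E(\eta^\dagger f^\prime)\rangle=\langle f,E_{\bH^4}f^\prime\rangle_{\bH^4}$, and invoke Corollary~\ref{crucialedirei} for the isomorphism onto $\mathcal{A}^{KG}_{on,-}(\bR^4)$. Your treatment is in fact slightly more careful than the paper's in two spots (you verify the homomorphism property for the full exponential series rather than only the first-order term on generators, and you prove injectivity for arbitrary elements via surjectivity of $\eta$ on solutions rather than only for generators), but the underlying argument is the same.
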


\begin{proof}
Let us prove that $\widetilde\eta^*$ is injective. Suppose that there exists $F_{[f]}\in\mathcal{A}_{on}^{CP}(\bH^4)$ such that $\eta^*(F_{[f]})$ is the vanishing functional. 
Then, for all $\phi\in\mathcal{S}^{KG}(\bR^4)$ one has $0=\eta^*(F_{[f]})(\phi)=F_{[f]}(\eta(\phi))=F_{\eta^\dagger([f])}(\phi)$.
Since $\phi$ is arbitrary, the only possible solution is $\eta^\dagger([f])=0$ and, thus $f=0$ for all $f\in[f]$, which entails the sought injectivity. 
In order to prove that $\widetilde\eta^*$ is also a $\ast$-homomorphism it suffices to focus again only on the generators.
Let $F_{[f]},F_{[f^\prime]}\in \mathcal O^{CP}(\bH^4)$ and $\phi\in\mathcal{S}^{KG}(\bR^4)$. Then, on account of \eqref{algprod2} the following holds true:
\begin{align*}
\left(\eta^*(F_{[f]})\star\eta^*(F_{[f^\prime]})\right)(\phi)&=(F_{\eta^\dagger([f])}\star F_{\eta^\dagger([f^\prime])})(\phi)\\
&=F_{\eta^\dagger([f])}(\phi)F_{\eta^\dagger([f^\prime])}(\phi)+\frac{i}{2}\langle \eta^\dagger(f), E(\eta^\dagger(f^\prime))\rangle=\\
&=F_{[f]}(\eta(\phi))F_{h^\prime}(\eta(\phi))+\frac{i}{2}\langle f,\eta E\eta^\dagger(f^\prime)\rangle_{\bH^4}=\\
&=\left(F_{[f]}\star_{\bH^4}F_{[f^\prime]}\right)[\phi].
\end{align*}
Since the $*$-operation is complex conjugation, it is left untouched by all the operations above and, as a consequence, we can infer that $\widetilde\eta^*$ is a $*$-homomorphism. 
The isomorphism $\mathcal{A}^{CP}_{on}(\bH^4)\simeq\mathcal{A}^{KG}_{on,-}(\bR^4)$ descends directly from Corollary \ref{crucialedirei}.
\end{proof}

\noindent In the following proposition, we investigate the structural properties of $\mA^{CP}(\bH^4)$, in particular causality and the time-slice axiom \cite{Brunetti:2001dx, Dimock}. The latter property needs a few comments. Recall that $\mathcal{A}^{KG}_{on}(\bR^4)$ fulfills the time-slice axiom, namely, given any open neighbourhood $\mathcal{N}$ of a Cauchy surface $\Sigma$ in Minkowski spacetime, containing all causal curves whose endpoints lie in $\mathcal{N}$, then $\mathcal{A}^{KG}_{on}(\bR^4)$ is $*$-isomorphic to $\mathcal{A}^{KG}_{on}(\mathcal{N})$. Since $\bH^4$ is not globally hyperbolic there is no notion of a Cauchy surface. Yet, if we consider the extension of the isomorphism of Proposition \ref{globalcomparison} to $\mathcal{A}^{CP}_{on}(\bH^4)$, this is $*$-isomorphic to a $*$-subalgebra of $\mathcal{A}^{KG}_{on}(\bR^4)$ for which the time-slice axiom is a well-defined concept. 
In addition to these two properties, we show that $\mathcal{A}^{CP}_{on}(\bH^4)$ satisfies the F-locality condition \cite{Fewster:1995bu, Kay:1992es}, a requirement which should be met by the algebra of observables of a quantum field theory on a non globally-hyperbolic spacetime. In a few words and in the case at hand, it requires that $\mathcal{A}^{CP}(\bH^4)$ and $\mathcal{A}^{KG}(\bR^4)$, restricted to any globally hyperbolic subregion of $\bH^4$ must be $*$-isomorphic. Such condition can be seen as an extension of the more renown local covariance, according to which, from local algebras, one should not be able to extract information on the global structure of the background\footnote{Recent experience with gauge theories teaches us that such conclusion should be read cum grano salis -- see for example \cite{Benini:2013tra,Sanders:2012sf}}.

\begin{proposition}\label{timeslice}
The algebra $\mathcal{A}^{CP}_{on}(\bH^4)$ is causal, it fulfills the time-slice axiom and it satisfies the F-locality property, namely $\mathcal{A}^{CP}_{on}(\bH^4\cap O)$ is isomorphic to $\mathcal{A}^{KG}_{on}(\bH^4\cap O)$ where $O$ is any globally hyperbolic subregion of $\bH^4$. The isomorphism is implemented by the identity.
\end{proposition}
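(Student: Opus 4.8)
The plan is to prove the three assertions by reducing each of them to the corresponding, already available, statement for the Klein--Gordon field on $\bR^4$, using two structural facts at our disposal: the identity $E_{\bH^4}=\eta\circ E\circ\eta^\dagger$ built into \eqref{modified}, and the $*$-isomorphism $\widetilde\eta^{*}\colon\mathcal{A}^{CP}_{on}(\bH^4)\to\mathcal{A}^{KG}_{on,-}(\bR^4)$ of Proposition \ref{globalcomparison}. The only geometric input is the following elementary observation about the reflection $\iota_z$: for $x,y\in\bH^4$ one has $J^\pm_{\bH^4}(y)=J^\pm_{\bR^4}(y)\cap\bH^4$ (a straight causal segment between points of $\bH^4$ has $z$-coordinate interpolating between nonnegative values, hence stays in $\bH^4$), and moreover $x\in J^\pm_{\bR^4}(\iota_z y)$ implies $x\in J^\pm_{\bR^4}(y)$, since for $z_x,z_y\ge 0$ one has $(z_x-z_y)^2\le(z_x+z_y)^2$, so flipping $z_y$ back keeps a future causal vector future causal. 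Equivalently, if $K,K'\subset\bH^4$ are causally disjoint in $\bR^4$, then so are $K\cup\iota_z K$ and $K'\cup\iota_z K'$.

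\textbf{Causality.} Since $\mathcal{A}^{CP}_{on}(\bH^4)$ is generated by the linear functionals $F_{[f]}$, it suffices to show that generators localised in causally disjoint regions commute, i.e.\ that $F_{[f]}$ and $F_{[f']}$ $\star_{\bH^4}$-commute whenever $\supp f$ and $\supp f'$ are causally disjoint in $\bH^4$ — equivalently, by the remark above, whenever $\supp f\cap J_{\bR^4}(\supp f')=\emptyset$. Their $\star_{\bH^4}$-commutator equals $i\langle f,E_{\bH^4}(f')\rangle_{\bH^4}=i\langle\eta^\dagger f,E(\eta^\dagger f')\rangle_{\bR^4}$, and since $\supp(\eta^\dagger f)\subseteq\supp f\cup\iota_z\supp f$ and likewise for $f'$, the two support sets are causally disjoint in $\bR^4$; the causal support property of $E$ then forces the pairing to vanish.

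\textbf{F-locality.} Let $O$ be a globally hyperbolic subregion of $\bH^4$; in line with the usual convention we take $O$ to be a causally convex open submanifold, so in particular $O\subset\mathring{\bH}^4$, and we write $E_O$ for the intrinsic causal propagator on $O$. For $f,f'\in C^\infty_0(O)$, unwinding $\eta^\dagger$ and using the $\iota_z$-invariance of $E$ gives
\[
\langle f,E_{\bH^4}(f')\rangle_{\bH^4}=\langle f,E(f')\rangle_{\bR^4}-\langle f,E(\iota_z f')\rangle_{\bR^4},
\]
where $\iota_z f'$ is the reflected function, supported in $\iota_z O$. By causal convexity of $O$ in $\bR^4$ the first term is $\langle f,E_O(f')\rangle_O$. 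The second term vanishes: were some $x\in O$ in $J^+_{\bR^4}(\iota_z y)$ with $y\in O$, a future causal curve from $\iota_z y$ to $x$ would meet $\{z=0\}$ at a point $w$; reflecting its portion from $\iota_z y$ to $w$ through $\{z=0\}$ and concatenating with the portion from $w$ to $x$ yields a future causal curve from $y$ to $x$ through $w$, so $w\in J^+_{\bR^4}(y)\cap J^-_{\bR^4}(x)\subset O$ by causal convexity, contradicting $z_w=0$ (the past case being symmetric). Hence $E_{\bH^4}$ and $E_O$ have the same integral kernel on $O\times O$, so the identity on the labelling spaces sends the generators of $\mathcal{A}^{CP}_{on}(O)$ to those of $\mathcal{A}^{KG}_{on}(O)$, intertwines $\sigma_{\bH^4}|_O$ with the symplectic form on $O$ and the two star products, and therefore extends to the asserted $*$-isomorphism. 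I expect this step — and more precisely the vanishing of the ``image'' pairing $\langle f,E(\iota_z f')\rangle$, which is exactly what makes the restricted and the intrinsic dynamics coincide on $O$ — to be the main obstacle, since it is the only place where genuine geometry rather than a formal manipulation enters, and it forces one to pin down the meaning of ``globally hyperbolic subregion'' (causal convexity in $\bR^4$ together with $O\subset\mathring{\bH}^4$).

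\textbf{Time-slice axiom.} This is transported through $\widetilde\eta^{*}$. Choose an $\iota_z$-invariant causally convex open neighbourhood $\mathcal N$ of a Cauchy surface of $\bR^4$ — for instance a thin time-slab $\{|t|<\delta\}$, which is both causally convex and $\iota_z$-invariant — and set $\mathcal N_\bH=\mathcal N\cap\bH^4$. One checks that $\widetilde\eta^{*}$ restricts to an isomorphism $\mathcal{A}^{CP}_{on}(\mathcal N_\bH)\to\mathcal{A}^{KG}_{on,-}(\mathcal N)$ onto the odd subalgebra, since $\supp(\eta^\dagger f)\subseteq\mathcal N$ for $f$ supported in $\mathcal N_\bH$, while conversely $\eta$ carries odd functions supported in $\mathcal N$ into $\mathcal{C}^{CP}_0(\mathcal N_\bH)$. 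It then suffices to note that the Minkowski time-slice isomorphism $\mathcal{A}^{KG}_{on}(\mathcal N)\xrightarrow{\ \sim\ }\mathcal{A}^{KG}_{on}(\bR^4)$ preserves the odd subalgebra: in its standard construction one replaces $f$ by $f-P(\chi E^{+}f+(1-\chi)E^{-}f)$ with a cutoff $\chi$ that may be taken to depend only on $t$, hence $\iota_z$-invariant, and since $P$ and $E^{\pm}$ commute with $\iota_z$ odd data stay odd at every step; composing with $\widetilde\eta^{*}$ and its inverse gives the time-slice axiom for $\mathcal{A}^{CP}_{on}(\bH^4)$.
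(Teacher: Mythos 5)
Your proof is correct and follows essentially the same route as the paper: causality via the support properties of $E$ applied to $\eta^\dagger f$, F-locality via the identity $\langle f,E_{\bH^4}(f^\prime)\rangle=\langle f,E(f^\prime)\rangle$ on a causally convex $O$, and the time-slice axiom via the standard cutoff deformation with an $\iota_z$-invariant (in the paper, $z$-independent) cutoff so that oddness of the labels is preserved. The one point where you go beyond the paper is that you actually prove the causal disjointness of $O$ and $\iota_z(O)$ by reflecting a causal curve at $\{z=0\}$ and invoking causal convexity, a fact the paper merely asserts; this usefully makes explicit that causal convexity of $O$ in $\bR^4$ (together with $O\subset\mathring{\bH}^4$) is the hypothesis under which the F-locality statement holds.
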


\begin{proof}
$\mathcal{A}^{CP}_{on}(\bH^4)$ is causal, since, for any two generators $F_{[f]}, F_{[f^\prime]}$, $[f],[f^\prime]\in\frac{\mathcal{C}^{CP}_0(\bH^4)}{P[\mathcal{C}^{CP}_0(\bH^4)]}$ such that there exists two representatives $f,f^\prime\in C^\infty_{0,-}(\bH^4)$ which are spacelike separated, $F_{[f]}\star_{\bH^4}F_{[f^\prime]}-F_{[f^\prime]}\star_{\bH^4}F_{[f]} = i\langle f, E_{\bH^4} f^\prime\rangle=0$. This descends from $\supp(f)\cap\left(\supp(E(f^\prime))\cup\supp(E(\iota_z(f^\prime)))\right)=\emptyset$.
In order to prove the time-slice axiom, we need to show that, given any geodesically convex neighbourhood $\mathcal{N}$ of a Cauchy surface $\Sigma$ in Minkowski spacetime, then $\mathcal{A}^{CP}_{on}(\mathcal{N}\cap\bH^4)=\mathcal{A}_{on}^{CP}(\bH^4)$ where $\mathcal{A}^{CP}_{on}(\mathcal{N}\cap\bH^4)$ is the subalgebra of $\mathcal{A}^{CP}_{on}(\bH^4)$ obtained by considering only those $f\in\mathcal{C}^{CP}_0(\bH^4)$ such that $\supp(f)\subset\mathcal{N}$. 
In view of Corollary \ref{crucialedirei} and of Proposition \ref{obsCP}, this is equivalent to considering any $F_{[\zeta]}\in\mathcal{O}^{KG}_-(\bR^4)$ and showing that there exists at least a representative of the label $[\zeta]$ whose support is contained in $\mathcal{N}$. Let us thus fix any $\Sigma$ and $\mathcal{N}$ as above and let us consider two Cauchy surfaces $\Sigma^\pm$ such that 
$\Sigma\subset J^+(\Sigma^-)\cap J^-(\Sigma^+)\subset \mathcal{N}$. Choose $\chi^+\in C^\infty(\bR^4)$ such that $\chi^+$ is $z$-independent and $\chi^+ = 1$ for all points in $J^+(\Sigma^+)$ while it vanishes on $J^-(\Sigma^-)$. Let us consider any $[\zeta]\in\mathcal{O}^{KG}_-(\bR^4)$ and any of its representatives which we indicate with $\alpha$. Define the new function
\begin{equation}\label{aux1}
\widetilde\zeta\doteq\zeta - P \left(E^-(\zeta)+\chi^+ E(\zeta)\right),
\end{equation}
where $E^-$ is the retarded fundamental solution of $P$. Notice that, per construction and on account of the support properties of both $E^\pm$ and $\chi$, $\widetilde\zeta\in C^\infty_{0,-}(\bR^4\cap\mathcal{N})$ and it is a representative of $[\zeta]$.
We are left to prove that $\mathcal{A}^{KG}_{on}(\bR^4\cap O)$ is isomorphic to $\mathcal{A}^{CP}_{on}(\bH^4\cap O)$.
Each of these algebras is generated by those functionals whose labeling space is $C^\infty_0(O)$ and the identity map represents an isomorphism of topological vector spaces. Since the $*$-operation is complex conjugation, which is not affected by the identity map, to conclude the proof, it suffices to exhibit the following chain of identities: Let $f,f^\prime\in C^\infty_0(O)$ and let $F_f$ and $F_{f^\prime}$ be the associated generators in $\mathcal{A}^{CP}_{on}(\bH^4\cap O)$, then, for any $u\in\mathcal{C}^{CP}(\bH^4)$
\begin{equation}\label{identity}
\left(F_f\star_{\bH^4}F_{f^\prime}\right)[u]=F_f(u)F_{f^\prime}(u)+\frac{i}{2}\langle f,E_{\bH^4}(f^\prime)\rangle=\left(F_f\star F_{f^\prime}\right)[u].
\end{equation}
The last equality descends from 
$$\langle f,E_{\bH^4}(f^\prime)\rangle=\langle \eta^\dagger f, E(\eta^\dagger f^\prime)\rangle=\langle f,E(f^\prime)\rangle,$$
which holds true since $\iota_z(O)$ is causally disjoint from $O$. Notice that \eqref{identity} entails that the isomorphism between $\mathcal{A}^{CP}_{on}(\bH^4\cap O)$ and $\mathcal{A}^{KG}_{on}(\bH^4\cap O)$ is implemented by the identity map.
\end{proof}

\subsection{Hadamard states for a Casimir-Polder system}

Having constructed the algebra of observables for a Casimir-Polder system, we can focus on discussing algebraic states thereon, namely any linear functional $\omega:\mathcal{A}^{CP}(\bH^4)\to\bC$ for which
$$\omega(\mathbb{I})=1,\quad \omega(a^*a)\geq 0,\;\forall a \in\mathcal{A}^{CP}(\bH^4),$$
where $\mathbb{I}$ is the identity element. As for the usual free field theories on any globally hyperbolic spacetime, the key question is under which conditions $\omega$ is physically acceptable. We recall that the answer for $\mathcal{A}^{KG}(\bR^4)$, the algebra of observables for a Klein-Gordon field on the whole Minkowski spacetime, goes under the name of {\em Hadamard states}. More precisely, assigning a positive and normalized functional $\widetilde\omega:\mathcal{A}^{KG}(\bR^4)\to\bC$ is done via its $n$-point functions $\widetilde\omega_n:C^\infty_0(\bR^4;\bC)^{\otimes n}\to\bC$ which are chosen in such a way to encode consistently both the canonical commutation relations built in the $\star$-product. Furthermore, if all $\widetilde\omega_n$ fulfill also the equations of motion in a weak sense, $\widetilde\omega$ descends consistently to a state on $\mathcal{A}^{KG}_{on}(\bR^4)$. 

In the class of all algebraic states for $\mathcal{A}^{KG}(\bR^4)$, distinguished are the Gaussian/quasifree ones, namely whose for which the odd $n$-point functions are vanishing and the even ones can be built in terms of the $2$-point function via the following expression:
$$\widetilde\omega_{2n}(f_1\otimes...\otimes f_{2n})=\sum\limits_{\pi_{2n}\in S^\prime_{2n}}\prod\limits_{i=1}^n\widetilde\omega_2\left(f_{\pi_{2n}(i-1)}\otimes f_{\pi_{2n}(i)}\right),$$
where $S^\prime_{2n}$ stands for the set of ordered permutations of $2n$-elements. In between all quasi-free states, those of Hadamard form can be characterized out of the singular structure of the bi-distribution $\widetilde\omega_2\in\mathcal{D}^\prime(\bR^4\times\bR^4)$ associated to the two-point function $\widetilde\omega_2$ via the Schwarz kernel theorem \cite{Radzikowski:1996pa, Radzikowski:1996ei}, that is 
\begin{equation}\label{WF}
WF(\widetilde\omega_2)=\left\{(x,x^\prime,k_x,-k_{x^\prime})\in T^*(\bR^4\times\bR^4)\setminus \{\mathbf 0\}\;|\;
(x,k_x)\sim(x^\prime,k_{x^\prime}),\;k_x\triangleright 0\right\},
\end{equation}
where $\sim$ entails that $x$ and $x^\prime$ are connected via lightlike geodesic and $\eta^{-1}(k_{x^\prime})$ is the parallel transport of $\eta^{-1}(k_x)$ along it. The symbol $\triangleright$ entails that $k_x$ is a future pointing covector. Notice that \eqref{WF} can be straightforwardly extended to any bi-distributions defined on any globally hyperbolic spacetime and that, if we add the requirement, that $\widetilde\omega_2$ is a weak bi-solution of the equation of motion ruled by $P$, then $\widetilde\omega$ identifies a state also for $\mathcal{A}^{KG}_{on}(\bR^4)$

\vspace{0.2cm}

In view of the structure of $\bH^4$, extending the above considerations to $\mathcal{A}^{CP}(\bH^4)$ is not straightforward. A similar problem appeared in Abelian gauge theories \cite{Fewster:2003ey} or in linearized gravity \cite{Benini:2014rya, Fewster:2012bj}. The way out that we propose is partly inspired by these papers, partly by F-locality: We require that a physically acceptable, quasi-free state on $\mathcal{A}^{CP}(\bH^4)$ is such that its restriction to any globally hyperbolic subregion of $\bH^4$ descends from a bi-distribution, thereon of Hadamard form. 

\begin{definition}\label{HadCP}
We call a linear map $\omega:\mathcal{A}^{CP}(\bH^4)\to\bC$ a quasi-free {\bf Hadamard state for a Casimir-Polder system} if it is normalized, positive, quasi-free and if, for all globally hyperbolic submanifolds $O\subset\bH^4$, the restriction of $\omega$ to $\mathcal{A}^{CP}(\bH^4\cap O)$ is such that there exists $\omega_2\in\mathcal{D}^\prime(O\times O)$ whose wavefront set is of Hadamard form
$$WF(\omega_2)=\left\{(x,x^\prime,k_x,-k_{x^\prime})\in T^*(O\times O)\setminus \{\mathbf 0\}\;|\;
(x,k_x)\sim(x^\prime,k_{x^\prime}),\;k_x\triangleright 0\right\}, $$
 and, for all $F_{f},F_{f^\prime}\in\mathcal{A}^{CP}(O)$
$$\omega\left(F_f\star_{\bH^4} F_{f^\prime}\right)=\omega_2(f,f^\prime).$$
\end{definition}

Notice that, in order for $\omega$ to descend to a state on $\mathcal{A}^{CP}_{on}(\bH^4)$ a compatibility condition with the equations of motion must be required. In view of this last definition the first question to answer is whether one can build a connection between Hadamard states for the on-shell algebra of the Klein-Gordon field on Minkowski spacetime and that of a Casimir-Polder system. 

\begin{proposition}\label{comparison}
Let $\widetilde\eta^*:\mathcal{A}^{CP}_{on}(\bH^4)\to\mathcal{A}^{KG}_{on}(\bR^4)$ be the map defined in Proposition \ref{globalcomparison}. Then, for every quasi-free state $\widetilde\omega:\mathcal{A}^{KG}_{on}(\bR^4)\to\bC$, there exists a quasi-free state $\omega$ on $\mathcal{A}^{CP}_{on}(\bH^4)$ such that for all $a\in\mathcal{A}^{CP}_{on}(\bH^4)$, $\omega(a)\doteq \widetilde\omega(\widetilde\eta^*(a))$. In particular, if $\widetilde\omega$ is of Hadamard form, so is $\omega$.
\end{proposition}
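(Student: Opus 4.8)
The plan is to set $\omega\doteq\widetilde\omega\circ\widetilde\eta^*$ and then to check, in order, that $\omega$ is a state, that it is quasi-free, and finally that it inherits the Hadamard property. The first point is essentially free: by Proposition \ref{globalcomparison} the map $\widetilde\eta^*:\mathcal{A}^{CP}_{on}(\bH^4)\to\mathcal{A}^{KG}_{on}(\bR^4)$ is a unital injective $*$-homomorphism, so $\omega(\mathbb{I})=\widetilde\omega(\widetilde\eta^*(\mathbb{I}))=\widetilde\omega(\mathbb{I})=1$ and, for all $a\in\mathcal{A}^{CP}_{on}(\bH^4)$, $\omega(a^*a)=\widetilde\omega\bigl(\widetilde\eta^*(a)^*\star\widetilde\eta^*(a)\bigr)\geq 0$ by positivity of $\widetilde\omega$. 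Since $\widetilde\eta^*$ is defined directly between the two on-shell algebras, no separate compatibility with the equations of motion has to be imposed.

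For the quasi-free property I would argue at the level of $n$-point functions. Recalling from the proof of Proposition \ref{globalcomparison} that $\widetilde\eta^*(F_{[f]})=F_{\eta^\dagger([f])}$ and that $\widetilde\eta^*$ is linear and multiplicative, one obtains $\omega_n(f_1,\dots,f_n)=\omega(F_{f_1}\star_{\bH^4}\cdots\star_{\bH^4}F_{f_n})=\widetilde\omega_n(\eta^\dagger f_1,\dots,\eta^\dagger f_n)$, the pairings being well defined thanks to \eqref{wfetadaga} and the Hadamard form of $\widetilde\omega$. Hence the vanishing of the odd $n$-point functions of $\widetilde\omega$ forces that of $\omega$, and the Wick-type factorisation of $\widetilde\omega_{2n}$ in terms of $\widetilde\omega_2$ is transported verbatim -- with the same sum over ordered permutations -- into the corresponding factorisation of $\omega_{2n}$ in terms of $\omega_2(f,f')\doteq\widetilde\omega_2(\eta^\dagger f,\eta^\dagger f')$. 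Thus $\omega$ is quasi-free with two-point function $\omega_2$.

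The heart of the matter is Definition \ref{HadCP}. Fix a globally hyperbolic subregion $O\subset\bH^4$; as already exploited in Proposition \ref{timeslice}, $\iota_z(O)$ is then causally disjoint from $O$ in $\bR^4$, in particular $O$ lies in the interior of $\bH^4$, so for $f\in C^\infty_0(O)$ the distribution $\eta^\dagger f$ is the smooth compactly supported function $\tfrac1{\sqrt2}(f-\iota_z f)$. Using $\widetilde\eta^*(F_f)=F_{\eta^\dagger f}$ and that $\widetilde\omega$ is quasi-free, I would then compute, for $F_f,F_{f'}\in\mathcal{A}^{CP}(O)$,
\[
\omega(F_f\star_{\bH^4}F_{f'})=\widetilde\omega_2(\eta^\dagger f,\eta^\dagger f')=\frac12\bigl[\widetilde\omega_2(f,f')-\widetilde\omega_2(\iota_z f,f')-\widetilde\omega_2(f,\iota_z f')+\widetilde\omega_2(\iota_z f,\iota_z f')\bigr],
\]
which exhibits $\omega_2$ on $O\times O$ as the ``method of images'' combination of $\widetilde\omega_2$ with its pull-backs under $\iota_z\times\mathrm{id}$, $\mathrm{id}\times\iota_z$ and $\iota_z\times\iota_z$. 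One then controls $WF(\omega_2)$ on $O\times O$ term by term: the first term has the Hadamard wavefront set because $\widetilde\omega$ is Hadamard on $\bR^4$; the two mixed terms restrict smoothly to $O\times O$, since $\iota_z(O)$ is causally disjoint from $O$ and $WF(\widetilde\omega_2)$ lives only over lightlike-related pairs, so $\widetilde\omega_2$ is smooth on $\iota_z(O)\times O$ and on $O\times\iota_z(O)$; and the last term has the Hadamard wavefront set on $O\times O$ because $\iota_z$ is an isometry preserving the time orientation, hence it carries the Hadamard form of $\widetilde\omega_2$ on $\iota_z(O)\times\iota_z(O)$ back to $O\times O$. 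Adding up and using that the wavefront set of a sum is contained in the union of the wavefront sets, $WF(\omega_2)$ restricted to $O\times O$ is contained in -- and, by the commutation relations already encoded in $\star_{\bH^4}$, equal to -- the Hadamard set, so $\omega$ meets Definition \ref{HadCP}. The compatibility $\omega(F_f\star_{\bH^4}F_{f'})=\omega_2(f,f')$ required there is then true by construction.

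I expect the only genuine obstacle to be this last, microlocal, step: one has to make sure that ``lightlike-related'' is read consistently (geodesics internal to $O$ versus geodesics in $\bR^4$), which is harmless once $O$ is taken geodesically convex, and that the mere inclusion $WF(\omega_2)\subseteq C^+$, $C^+$ being the positive-frequency set in \eqref{WF}, obtained from the union bound is enough -- and it is, since a quasi-free state is of Hadamard form as soon as the wavefront set of its two-point function is contained in $C^+$, the reverse inclusion following automatically from the commutator. Everything else is a routine transcription of the properties of $\widetilde\eta^*$ established in Propositions \ref{globalcomparison} and \ref{timeslice}.
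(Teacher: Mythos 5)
Your proposal is correct and follows essentially the same route as the paper: define $\omega\doteq\widetilde\omega\circ\widetilde\eta^*$, observe that normalization, positivity and quasi-freeness are inherited, and then verify the Hadamard condition on a globally hyperbolic $O\subset\bH^4$ by expanding $\omega_2$ into the image terms and using that $O$ and $\iota_z(O)$ are causally disjoint. If anything, your term-by-term wavefront-set analysis is slightly more careful than the paper's, which asserts that $\widetilde\omega_2(f,f^\prime)$ is the only singular contribution, whereas (as you note) $\widetilde\omega_2(\iota_z f,\iota_z f^\prime)$ is generically singular as well but still of Hadamard form since $\iota_z$ is a time-orientation preserving isometry.
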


\begin{proof}
As a starting point, notice that $\omega$ inherits the normalization, positivity and the property of being quasi-free directly from $\widetilde\omega$. We need only to check the Hadamard property. Let $O\subset\bH^4$ be any globally hyperbolic submanifold; for every $F_f,F_{f^\prime}\in\mathcal{A}^{CP}(O)$ 
$$\omega(F_f\star_{\bH^4} F_{f^\prime})=\widetilde\omega(\eta^*(F_f\star_{\bH^4} F_{f^\prime}))=\widetilde\omega(F_f\star F_{f^\prime})=\frac{1}{2}\widetilde\omega_2(f-\iota_z(f),f^\prime-\iota_{z}(f^\prime)).$$
In other words the bi-distribution associated to $\omega$ can be built out of $\widetilde\omega_2$ itself. Since the latter has per hypothesis the Hadamard wavefront set and since, if $\supp(f),\supp(f^\prime)\subset O\subset\bH^4$, then neither $\iota_z(f)$ nor $\iota_z(f^\prime)$ can be entirely supported therein, the only singular term in the above identity is $\widetilde\omega_2(f,f^\prime)$. Hence $\omega$ is of Hadamard form. 
\end{proof}

As a last step, we wish to compare our approach with the \textbf{\em method of images} which is commonly used on Minkowski spacetime. 

\begin{lemma}\label{comparison-Mink}
Let $\widetilde\omega$ be any quasi-free Hadamard state for $\mathcal{A}^{KG}(\bR^4)$ whose associated two-point function  $\widetilde\omega_2\in\mathcal{D}^\prime(\bR^4\times\bR^4)$ has an integral kernel which is invariant under reflection in both entries along the $z$-direction, that is $\widetilde\omega_2(\underline{x},z,\underline{x}^\prime,z^\prime)=\widetilde\omega_2(\underline{x},-z,\underline{x}^\prime,-z^\prime)$. Then the state $\omega$ on $\mathcal{A}^{CP}(\bH^4)$ built as per Proposition \ref{comparison} is a quasi-free Hadamard state whose integral kernel is
\begin{equation}\label{int_ker}
\omega_2(\underline{x},z,\underline{x}^\prime,z^\prime)=\widetilde\omega_2(\underline{x},z,\underline{x}^\prime,z^\prime)-\widetilde\omega_2(\underline{x},-z,\underline{x}^\prime,z^\prime).
\end{equation}
\end{lemma}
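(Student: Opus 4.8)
The plan is to apply Proposition \ref{comparison} directly and then identify the two-point function explicitly. First I would invoke Proposition \ref{comparison}: since $\widetilde\omega$ is a quasi-free Hadamard state on $\mathcal{A}^{KG}(\bR^4)$ (restricting, where needed, to the on-shell algebra), the pull-back $\omega \doteq \widetilde\omega \circ \widetilde\eta^*$ is automatically a normalized, positive, quasi-free state on $\mathcal{A}^{CP}(\bH^4)$, and it is of Hadamard form in the sense of Definition \ref{HadCP}. So the only substantive content left is to verify that the integral kernel of $\omega_2$ is given by \eqref{int_ker}.

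For that, I would start from the formula already derived inside the proof of Proposition \ref{comparison}, namely that for $F_f, F_{f'} \in \mathcal{A}^{CP}(O)$,
\begin{equation*}
\omega(F_f \star_{\bH^4} F_{f'}) = \tfrac{1}{2}\,\widetilde\omega_2\big(f - \iota_z(f),\, f' - \iota_z(f')\big).
\end{equation*}
Expanding the right-hand side by bilinearity gives four terms:
\begin{equation*}
\tfrac{1}{2}\Big[\widetilde\omega_2(f,f') - \widetilde\omega_2(\iota_z f, f') - \widetilde\omega_2(f, \iota_z f') + \widetilde\omega_2(\iota_z f, \iota_z f')\Big].
\end{equation*}
Now I would use the reflection invariance hypothesis on the kernel, $\widetilde\omega_2(\underline{x},z,\underline{x}',z') = \widetilde\omega_2(\underline{x},-z,\underline{x}',-z')$, which is exactly the statement that $\widetilde\omega_2 \circ (\iota_z \otimes \iota_z) = \widetilde\omega_2$; hence $\widetilde\omega_2(\iota_z f, \iota_z f') = \widetilde\omega_2(f, f')$, and a change of variables $z \mapsto -z$ in a single entry shows $\widetilde\omega_2(\iota_z f, f') = \widetilde\omega_2(f, \iota_z f')$. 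The four terms therefore collapse to $\widetilde\omega_2(f,f') - \widetilde\omega_2(\iota_z f, f')$, and reading off the integral kernel of the functional $f' \mapsto -\widetilde\omega_2(\iota_z f, f')$ produces precisely the term $-\widetilde\omega_2(\underline{x},-z,\underline{x}',z')$ in \eqref{int_ker}. This establishes the formula on observables supported in an arbitrary globally hyperbolic $O \subset \bH^4$; since such regions cover $\bH^4$ and the generators $F_h$ with $h \in C^\infty_0(\bH^4)$ span $\mathcal{A}^{CP}(\bH^4)$, the kernel identity holds globally.

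The only point requiring a little care — and the closest thing to an obstacle — is bookkeeping with the normalization factor $1/\sqrt{2}$ built into $\eta$ and $\eta^\dagger$: one must check that the pairing $\langle f, E_{\bH^4} f' \rangle_{\bH^4}$ versus $\langle \eta^\dagger f, E(\eta^\dagger f')\rangle$ and the analogous identity for $\widetilde\omega_2$ produce the stated coefficients, i.e. that the $\tfrac12$ in the expansion cancels against the factor coming from the two copies of $\eta$ so that the surviving kernel carries coefficient $+1$ in the first term and $-1$ in the second. A second minor check is that $\widetilde\omega_2(\iota_z f, f')$ genuinely has integral kernel $\widetilde\omega_2(\underline{x},-z,\underline{x}',z')$ with the correct sign, which is immediate from the definition of the push-forward of a distribution under the isometry $\iota_z$. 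Everything else is inherited from Proposition \ref{comparison}, so no wavefront-set analysis needs to be repeated here.
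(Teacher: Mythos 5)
Your proposal is correct and follows essentially the same route as the paper: invoke Proposition \ref{comparison} for the Hadamard and quasi-free properties, then expand $\tfrac{1}{2}\widetilde\omega_2(f-\iota_z(f),f^\prime-\iota_z(f^\prime))$ and use the reflection invariance to collapse it to $\widetilde\omega_2(f-\iota_z(f),f^\prime)$, reading off the kernel \eqref{int_ker}. The only cosmetic difference is that the paper performs this computation directly for generators labelled by $f,f^\prime\in\eta[C^\infty_0(\bR^4)]$ without first localizing to a globally hyperbolic $O$, but that changes nothing of substance.
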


\begin{proof}
On account of Proposition \ref{comparison}, we can conclude that $\omega$ is a Hadamard state on $\mathcal{A}^{CP}(\bH^4)$ and it is quasi-free per construction. In order to show the last statement, it suffices instead an explicit calculation. Let $\omega$ be as per hypothesis and let $\omega_2$ be the associated bi-distribution. For all $f,f^\prime\in\eta[C^\infty_{0}(\bR^4)]$, seen as labels for two generators of $\mathcal{A}^{CP}(\bH^4)$, it holds in view of Proposition \ref{comparison}
\begin{gather*}
\omega(F_f\star_{\bH^4} F_{f^\prime})=\widetilde\omega(\eta^*\left(F_f\star_{\bH^4} F_{f^\prime}\right))=
\frac{1}{2}\widetilde\omega_2(f-\iota_z(f),f^\prime-\iota_z(f^\prime))=\widetilde\omega_2(f-\iota_z(f),f^\prime),
\end{gather*}
where, in the last equality, we used the symmetry hypothesis of the two-point function to conclude that $\omega_2(f,f^\prime)=\omega_2(\iota_z(f),\iota_z(f^\prime))$ and $\omega_2(f,\iota_z(f^\prime))=\omega_2(\iota_z(f),f^\prime)$. The above chain of equalities entails the sought identity at a level of integral kernels.
\end{proof}

\begin{remark}
The statement of Lemma \ref{comparison-Mink} applies to the Poincar\'e vacuum and the KMS state for a massive or massless Klein-Gordon field on Minkowski spacetime, for which $\widetilde\omega_2$ induces the same quasi-free state which one obtains via the method of images.
\end{remark}

For completeness, we want now to discuss the form of the singular structure of the two-point function of the states obtained in Lemma \ref{comparison-Mink}. In view of \eqref{int_ker} we have that 
\begin{equation} \label{eq:WF-reflection}
WF(\omega_2) =  WF(\left.\widetilde\omega_2 \right|_{\mathbb{H}^4})  \cup WF(\left(\widetilde\omega_2 \circ\left(\iota_z\otimes\mathbb{I}\right)\right)\left.\right|_{\mathbb{H}^4})
\end{equation}
where the restriction map refers to the points of the singular support. Furthermore
\begin{gather*}
 WF(\left.\widetilde\omega_2 \circ\left(\iota_z\otimes\mathbb{I}\right) \right|_{\mathbb{H}^4}) =\\ \left\{ (x,x',k_x,-k'_{x^\prime}) \in T^*\left({\mathbb{H}^4\times\mathbb{H}^4}\right)\setminus\{\mathbf 0\}\;|\; (x,k_x)\sim (\iota_z x^\prime,(\iota_z)_\ast k'_{x^\prime}) , k_x \triangleright 0    \right\} 
=  (\iota_z\otimes\mathbb{I}) WF(\left.\widetilde\omega_2 \right|_{\mathbb{H}^4}).
\end{gather*}
In the previous expression, $\iota_z$ acts on covectors inverting the sign of the $z-$component.
Heuristically, we might say that $(x,x';k_x,k'_{x^\prime})$ are contained in $WF(\left(\widetilde\omega_2\left(\iota_z\otimes\mathbb{I}\right)\right)\left. \right|_{\mathbb{H}^4})$ if and only if $x$ and $x'$ are connected by a null geodesic {\em reflected} at the surface $\partial \mathbb{H}$ and if $\eta^{-1}(k_x)$ and $\eta^{-1}(-k'_{x^\prime})$ are tangent vectors at the end points of this reflected geodesic.
Notice that, whenever $\omega_2$ is restricted to a globally hyperbolic region $O\subset\mathbb{H}$, its wave front set enjoys the microlocal spectrum condition because $WF(\left(\widetilde\omega_2 \circ\left(\iota_z\otimes\mathbb{I}\right)\right)\left. \right|_{O})$ is the empty set. No lightlike geodesic starting from $O$ can re-enter after reflection.

\subsection{Wick ordering in a Casimir-Polder system}\label{Wick1} 

To conclude the section, we show how to make contact between the previous analysis and the standard results in the literature concerning the Casimir-Polder energy. To this avail, we need first of all to introduce the (local) Wick polynomials for a Casimir-Polder system. 
From a conceptual point of view, this question is the same as for a Klein-Gordon field on a globally hyperbolic spacetime. 
We shall see, however, that, on every globally hyperbolic submanifolds of $\mathbb{H}^4$, the local Wick monomials generate an algebra of observables which is isomorphic to the restriction thereon of the Klein-Gordon one. Hence it is well-defined. Yet, in order to build a global algebra of Wick polynomials, one has to take into account that, on account of the presence of the boundary conditions, it is not possible to define a global Hadamard function which depends only on local properties of the spacetime. We shall show that such obstacle can be circumvented, though at the price that the the embedding of the local algebras into the global one involves a non-local deformation. 
  
Before entering into the technical details, also to make contact with the standard literature on the Casimir-Polder effects, it is worth recalling a few facts valid for any scalar field theory on a globally hyperbolic spacetime $(M,g)$:
\begin{itemize}
\item Let $\widetilde\omega_2\in\mathcal{D}^\prime(M\times M)$ be a bi-distribution which induces a quasi-free state $\widetilde\omega$ on the $*$-algebra of fields $\mathcal{A}^{KG}(M)$, which is defined in full analogy with the one introduced on Minkowski spacetime. If the wavefront set of $\widetilde\omega_2$ is of the form \eqref{WF} and thus $\widetilde\omega$ is Hadamard, it is possible to give a rather explicit local characterization to the integral kernel of $\widetilde\omega_2$.

For every pair of points $x,y\in M$ lying in the same geodesic neighbourhood, we can split $\widetilde\omega_2(x,x^\prime)$ as follows -- see for example \cite{Moretti:2001qh}: $\widetilde\omega_2(x,x^\prime)=H(x,x^\prime)+W(x,x^\prime)$, where $W(x,x^\prime)$ is a smooth term while $H(x,x^\prime)$ is a (Hadamard) parametrix. It is a singular term which depends only on the background geometry and on the partial differential operator $P$ ruling the dynamics,
\item Let $\mathcal{U}\subset M$ be a geodesic convex, open neighbourhood. Following \cite{Hollands:2001nf}, we define the normal ordered squared Wick polynomial via the map 
$$
f \in C^\infty_0(\mathcal{U})\mapsto\normal{\widehat{\phi}^2}_H(f)\doteq\int\limits_{M\times M}d\mu_g(x)\,d\mu_g(x^\prime)\,\left(\widehat\phi(x)\widehat\phi(x^\prime)-H(x,x^\prime)\right)f(x)\delta(x,x^\prime),
$$
where $d\mu_g$ is the metric induced measure and the integral is taken over the whole manifold on account of the support properties of $f$. 
Notice that the expectation value  
\begin{equation}\label{eq:n-order}
\widetilde\omega(\normal{\widehat{\phi}^2}_H(f))
=
\int\limits_{M\times M}d\mu_g(x)\,d\mu_g(x^\prime)\,\left(\widetilde\omega(x,x^\prime)-H(x,x^\prime)\right)f(x)\delta(x,x^\prime)
\end{equation}
is well defined when computed on Hadamard states. Furthermore, we recall that the Hadamard parametrix is uniquely determined up to smooth terms yielding the standard regularization freedoms \cite{Hollands:2001nf}. In particular, if $(M,g)$ is Minkowski spacetime and $\widetilde\omega^0$ is the Poincar\'e vacuum for a massless Klein-Gordon field, then $H(x,x^\prime)$ can be chosen to be equal to  $\widetilde\omega_2(x,x^\prime)$ and in this case, for all $f \in C^\infty_0(\bR^4)$, $\normal{\widehat{\phi}^2}_{H}(f)$ vanishes up to the regularization freedom. The goal of the functional approach is to recollect all observables which are regularized in a coherent body, endowing it with the structure of a $*$-algebra. In other words it encodes the so-called Wick theorem.
\end{itemize}

An elegant way of introducing an algebraic structure on the set of Wick polynomials is provided by methods of perturbative algebraic quantum field theory. We recall here this construction for the Klein-Gordon case on the whole Minkowski spacetime. Notice that, in order to encompass Wick polynomials in the algebra of functionals, it would be desirable to extend $\mathcal{A}^{KG}(\bR^4)$ adding non linear local generators like 
\begin{equation}\label{phi2}
F^{(2)}_f(\phi) = \int_{\mathbb{H}^4} d\mu_g(x) \phi^2(x) f(x) ,  
\end{equation}
where $f\in C^\infty_0(\bR^4)$ while $\phi\in\mathcal{S}(\bR^4)$. The composition of two of these functionals via the $\star-$product introduced in \eqref{algprod2} is, however, ill-defined at a microlocal level. In order to overcome this difficulty, we follow in \cite{ Brunetti:2009qc, Brunetti:2009pn, Fredenhagen:2012sb}, modifying the composition rule in $\mathcal{A}^{KG}(\bR^4)$ and then enlarging such set to include also additional regularized fields.
The sought modification must preserve the commutation relations among the generators of $\mathcal{A}^{KG}(\bR^4)$. It can be written as in \eqref{algprod} with $\Gamma_E$ replaced by 
$$
\Gamma_{H}=-i \int\limits_{\bR^4\times\bR^4} H(x,x^\prime)\frac{\delta}{\delta \phi(x)}\otimes\frac{\delta}{\delta \phi(x^\prime)}.
$$
The product obtained in this way is denoted by $\star_H$ and on $\mathcal{A}^{KG}(\bR^4)$ takes the same form given in \eqref{algprod2} where the integral kernel $E(x,x^\prime)$ is replaced by $-2i H(x,x^\prime)$,  up to multiplicative constants the (global) Hadamard parametrix. Notice that the antisymmetric part of $-2iH(x,x^\prime)$ coincides with $E(x,x^\prime)$ and hence the canonical commutation relations among the generators of $\mathcal{A}^{KG}(\bR^4)$ are left untouched. Furthermore, since the new $\star$-product is built only out of local structures, covariance of the scheme is guaranteed. In addition, the form \eqref{WF} of the wavefront set of $H(x,x^\prime)$ entails that powers of $H(x,x^\prime)$ are meaningful since the H\"ormander criterion for multiplication of distributions is satisfied -- see \cite[Th. 8.2.10]{Hormander1}.

Equipping $\mathcal{F}_0(\bR^4)$ with the product $\star_H$ instead of 
the original $\star$ we obtain an algebra which is isomorphic to $\mathcal{A}^{KG}(\bR^4)$. Furthermore, following \cite{Brunetti:2009pn}, this isomorphism can be understood as a deformation of the original algebra $\mathcal{A}^{KG}(\bR^4)$ which is generated by 
\begin{equation}\label{eq:def-alpha}
\alpha_{H}\doteq \sum\limits_{n=0}^\infty \frac{\Gamma_{H}^n}{ n!}:\mathcal{A}^{KG}\to\mathcal{A}^{KG}
\end{equation}
via
\begin{equation}\label{deformed}
\left(F\star_H F^\prime\right)=\alpha_H\left(\alpha_H^{-1}(F)\star\alpha_H^{-1}(F^\prime)\right).
\end{equation}
After such deformation, the set of elements constituting the algebra can be enriched by adding also local non linear functionals like those of the form \eqref{phi2}.  For completeness, we recall the form of the set on which, after the deformation, the algebra of fields can be extended. 
\begin{definition}
We call {\bf microcausal functionals} for the Klein-Gordon field, $\mathcal{A}_\mu^{KG}(\bR^4)$, the collection of all smooth functionals $F:\mathcal{C}^{KG}(\bR^4)\to\bC$ such for all $n\geq 1$ and for all $\phi\in\mathcal{C}^{KG}(\bR^4)$, $F^{(n)}[\phi]\in\mathcal{E}^\prime(\bR^4)^{\otimes n}$. Only a finite number of functional derivatives do not vanish and $\textrm{WF}(F^{(n)})\subset\Xi_n$,  where
$$
\Xi_n\doteq T^*(\bR^4)^n\setminus\left\{(x_1,...,x_n,k_1,...,k_n)\,|\;(k_1,...,k_n)\in\left.\left(\overline{V}^n_+\cup\overline{V}^n_-\right)\right|_{(x_1,...,x_n)}\right\},
$$ 
where $\overline{V}_\pm$ are the subsets of $T^*\bR^4$ formed by elements $(x_i,k_i)$ where each covector $k_i$, $i=1,...,n$ lies in the closed future ($+$) and in the closed past ($-$) light cone. The pair $(\mathcal{A}_\mu^{KG}(\bR^4),\star_H)$ is called {\bf extended algebra of Wick polynomials}.
\end{definition}

Notice that the expectation values of products of generators of $\mathcal{A}^{KG}(\bR^4)$ with respect to a state $\omega$ must be invariant under the deformation. In other words $\mathcal{A}^{KG}_\mu(\bR^4)$ contains a $*$-subalgebra isomorphic to $\mathcal{A}^{KG}(\bR^4)$. As a last remark on this procedure we stress that, since only the antisymmetric part of $H(x,x^\prime)$ is fixed, there is a freedom in the definition of the extended objects. This is related to the renown {\bf regularization freedom}, a discussion of which can be found for example in \cite{Hollands:2001nf}. In this paper we will not enter into the details, since they are not necessary to our purposes. 

We recall that the procedure discussed so far can be applied almost slavishly on every globally hyperbolic spacetime. Hence, as far as a Casimir-Polder system is concerned, our strategy is to start by constructing an extended algebra of Wick polynomials starting from any $*$-algebra $\mathcal{A}^{CP}(O)$ as in Proposition \ref{timeslice}. Recall that $O$ is a globally hyperbolic submanifold of $\bH^4$. 

To this avail we recall the definition of support for functionals as introduced in \cite{Fredenhagen:2012sb} and adapted to our case.
\begin{definition}\label{support}
Let $F:\mathcal{C}^{CP}(\bH^4)\to\bC$ be any functional on the space of off-shell configurations for a Casimir-Polder system as per Definition \ref{CPoff}. We call {\em support} of $F$
\begin{gather*}
\supp(F)\doteq\{x\in\bH^4\;|\;\forall\,\textrm{neighbourhoods}\, U\ni x,\,\exists u,u^\prime\in\mathcal{C}^{CP}(\bH^4),\,\supp(u)\subseteq U,\\
\;\textrm{such that}\;F[u+u^\prime ]\neq F[u]\}.
\end{gather*}
\end{definition}

Let $O\subset\bH^4$ be any globally hyperbolic submanifold, to which we associate $\mathcal{A}^{CP}(O)$ as per Proposition \ref{timeslice}. In view of Definition \ref{HadCP} we follow the same procedure, used to build for $\mathcal{A}_\mu^{KG}(\bR^4)$, to obtain $\mathcal{A}_\mu^{CP}(O)$ an extended algebra of Wick polynomials. Furthermore, in view of Proposition \ref{timeslice},  $\mathcal{A}_\mu^{CP}(O)$ is $*$-isomorphic to $\mathcal{A}_\mu^{KG}(O)$, the restriction of $\mathcal{A}_\mu^{KG}(\bR^4)$ to $O$. 
 
The next step consists of gluing together all $\mathcal{A}^{CP}_\mu(O)$, so to obtain a global extend algebra of Wick polynomials for a Casimir-Polder system. 
The following remark shows that an obstruction arises in considering $\star_H$ 
as the product for the global extended algebra. It turns out that the gluing becomes possible only after a suitable deformation of $\star_H$.

\begin{remark}
Let $O_1$ and $O_2$ be two globally hyperbolic submanifolds of $\mathbb{H}^4$ whose union is not contained in a third globally hyperbolic submanifold of $\mathbb{H}^4$. Consider now $F_f^{(2)}\in \mathcal{A}^{CP}_\mu(O_1,\star_H)$  and $F_{f^\prime}^{(2)}\in \mathcal{A}^{CP}_\mu(O_2,\star_H)$ such that
\[
F_f^{(2)}(u)\doteq  \int_{\mathbb{H}^4} d\mu_g(x) f(x) u^2(x) ,\qquad F_{f^\prime}^{(2)}(u)\doteq  \int_{\mathbb{H}^4} f^\prime(x) d\mu_g(x) u^2(x) ,
\] 
where $u\in\mathcal{C}^{CP}(\bH^4)$ while $\supp{f}\subset O_1$ and $\supp{f^\prime}\subset O_2$. In view of Proposition \ref{timeslice}, we choose the Hadamard parametrix $H(x,x^\prime)$ to be the same one as for a Klein-Gordon scalar field on Minkowkski spacetime, though restricted to the region(s) of interest. In order to compute the correlations between the above two elements, we need to recognize them as being part of a larger extended algebra.
Yet, if we try to follow the same procedure used in \eqref{deformed} for $\mathcal{A}^{KG}_\mu(\bR^4)$, we notice that the local $\star$-products for the undeformed algebra are defined replacing $\frac{i}{2}E_{\mathbb{H}^4}(x,x^\prime)$ with
\[
H(x,x^\prime) + \frac{i}{2}\left(E_{\mathbb{H}^4}(x,x^\prime) -E(x,x^\prime) \right).
\]
Pathologies in the computation of $F_f^{(2)}\star_H F_{f'}^{(2)}$ occur, as terms including $\left(E_{\mathbb{H}^4}(x,x^\prime) -E(x,x^\prime) \right)$ multiplied with itself do appear. They are ill-defined.

Such obstructions\footnote{We are grateful to Chris Fewster for suggesting this approach.} can be removed exploiting the fact that algebras whose $\star-$products are constructed with different Hadamard functions are $*-$isomorphic \cite{Brunetti:2009qc}. Mimicking the construction of $E_{\mathbb{H}^4}$ starting from $E$, and in view of  \eqref{int_ker}, let us consider the bidistribution $H_{\mathbb{H}^4}$ whose integral kernel is
\[
H_{\mathbb{H}^4}(\underline{x},z,\underline{x}^\prime,z^\prime)
\doteq
H(\underline{x},z,\underline{x}^\prime,z^\prime)-
H(\underline{x},-z,\underline{x}^\prime,z^\prime).
\]
Notice that $H_{\mathbb{H}^4}$ yields $\mathcal{A}^{CP}_\mu(\mathbb{H}^4,\star_{H_{\mathbb{H}^4}})$, a well defined global algebra. Hence, the correlations among elements of $\mathcal{A}^{CP}_\mu(O_1,\star_H)$ and of $\mathcal{A}^{CP}_\mu(O_2,\star_H)$ are meaningful only if we embed them in $\mathcal{A}^{CP}_\mu(\mathbb{H}^4,\star_{H_{\mathbb{H}^4}})$. Such embedding is realized by $\alpha_{H_{\mathbb{H}^4}-H}$ as in \eqref{eq:def-alpha} and it is an injective $*-$isomorphism.  
\end{remark}

Despite this hurdle, concepts like smeared energy density are still well-defined within each $\mathcal{A}^{CP}_\mu(O)$. 
Furthermore, regardless of the existence of an extended algebra of observables, well-known blows-up in computing quantities, such as the Casimir total energy, still remain due to additional divergences present in observables supported on the boundaries. 

\bigskip

We can make finally a correspondence to the standard results in the literature, in particular recovering the dependence of the energy density on the forth power of the distance along the $z$-axis between a point in the bulk and one on the boundary. Before stating the result, we recall that, on Minkowski spacetime, the so-called {\em improved stress-energy tensor} of a massless conformally coupled scalar field $\phi$ is on-shell \cite{Callan:1970ze, Moretti:2001qh}
\begin{equation}\label{improved}
  T_{\mu\nu}=\partial_\mu \phi \partial_\nu \phi - \frac12 \eta_{\mu\nu} \partial^\rho \phi \partial_\rho \phi %
        + \xi (\eta_{\mu\nu} \Box - \partial_\mu \partial_\nu )\phi^2,
\end{equation}
where $\xi$ is the coupling constant with the scalar curvature $R$ introduced in \eqref{dynCP}.

\begin{lemma}\label{reale}
Let us consider a massless, arbitrarily coupled to scalar curvature, scalar field and let $\omega^0$ be the Hadamard state for $\mathcal{A}^{CP}(\bH^4)$ induced from the Poincar\'e vacuum $\widetilde\omega^0$ via Lemma \ref{comparison-Mink}. Let $\widetilde\omega^0_{2}(x,x^\prime)$ be the associated integral kernel of the two-point function on the whole $\bR^4$. Then, for all $f\in C^\infty_{0}(\mathring{\bH}^4)$,
$$
\omega^0(\normal{\widehat{\phi}^2}_{H}(f))=-\frac{1}{32\pi^2}\int\limits_{\bR^4}d^4x \frac{f(\underline{x},z)}{z^2},
$$
and
$$\omega^0(\normal{\widehat{T_{\mu\nu}}}_{H}(f))=A_{\mu\nu}\frac{6\xi-1}{32\pi^2}\int\limits_{\bR^4}d^4x \frac{f(\underline{x},z)}{z^4},$$
where $\{T_{\mu\nu}\}$ are the components of the stress-energy tensor \eqref{improved} while $A=\mathrm{diag}(-1,1,1,0)$.
\end{lemma}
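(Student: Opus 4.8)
The plan is to reduce both identities to coincidence limits of (derivatives of) the \emph{reflected} Minkowski two-point function. First I would exploit that, for a massless field on $\bR^4$, the Hadamard parametrix $H$ may be chosen -- up to the usual regularisation freedom -- to coincide with the Poincar\'e vacuum two-point function $\widetilde\omega^0_2$. Feeding this into the kernel \eqref{int_ker} of Lemma \ref{comparison-Mink}, the Hadamard-subtracted two-point function of $\omega^0$ becomes, on $\bH^4\times\bH^4$,
\[
W(\underline{x},z,\underline{x}',z')\doteq\omega^0_2(\underline{x},z,\underline{x}',z')-H(\underline{x},z,\underline{x}',z')=-\,\widetilde\omega^0_2(\underline{x},-z,\underline{x}',z').
\]
Near the diagonal, with $z>0$, the argument $(\underline{x},-z)$ is spacelike separated from $(\underline{x}',z')$, so $W$ is smooth there; this is precisely the remark following Lemma \ref{comparison-Mink} that no reflected null geodesic stays inside a globally hyperbolic subregion of $\bH^4$. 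Hence all the coincidence limits below are well defined.

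For $\normal{\widehat{\phi}^2}_H$ I would then invoke the Casimir--Polder analogue of \eqref{eq:n-order}, namely $\omega^0(\normal{\widehat{\phi}^2}_H(f))=\int_{\mathring{\bH}^4}f(x)\,[W]_{x'=x}\,d^4x$, the integral converging because $\supp f\subset\mathring{\bH}^4$. Since the two arguments of $W$ at coincidence, $(\underline{x},-z)$ and $(\underline{x},z)$, have vanishing time separation and are separated by $2z$ along the $z$-axis, inserting the explicit massless kernel gives $[W]_{x'=x}\propto z^{-2}$ with constant $-1/(32\pi^2)$, which is the first identity.

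For the stress tensor I would write the improved expression \eqref{improved} in point-split form: its $\partial\phi\,\partial\phi$ part corresponds to the bidifferential operator $\frac12(\partial_\mu^x\partial_\nu^{x'}+\partial_\nu^x\partial_\mu^{x'})-\frac12\eta_{\mu\nu}\eta^{\rho\sigma}\partial_\rho^x\partial_\sigma^{x'}$ applied to $W$, while the $\xi$-term corresponds to $\xi(\eta_{\mu\nu}\Box-\partial_\mu\partial_\nu)$ applied to the coincidence limit $w(x)\doteq[W]_{x'=x}$ from the previous step; in flat space the further local terms (proportional to $m^2$, $R$, $R_{\mu\nu}$) that would be needed for covariant conservation all vanish, and since $R=0$ the $\xi R$ coupling leaves $W$ itself untouched, so the only $\xi$-dependence is the explicit one in \eqref{improved}. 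The only genuine bookkeeping is that $W$ depends on $z+z'$, not on $z-z'$, in its last slot, so a $\partial_z$ on the first entry enters with the opposite sign of a $\partial_z$ on the second, whereas the $t,x,y$ derivatives are blind to the reflection; carrying out the coincidence limits, $[\partial_\mu^x\partial_\nu^{x'}W]_{x'=x}$ comes out diagonal with an anomalous $zz$-entry, and after the trace subtraction plus the $\xi$-contribution (which only feels $\partial_z^2 w$, as $w$ is a function of $z$ alone) every component reorganises into $A_{\mu\nu}=\mathrm{diag}(-1,1,1,0)$ times a single scalar $\propto(6\xi-1)\,z^{-4}$, with overall constant $1/(32\pi^2)$. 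The vanishing of the $zz$-component is then explained a posteriori by $\partial^\mu\langle\normal{\widehat{T_{\mu\nu}}}_H\rangle=0$ together with translation invariance along the plate, and a convenient consistency check is that the trace, proportional to $(6\xi-1)\,\eta^{\mu\nu}A_{\mu\nu}=3(6\xi-1)$, vanishes exactly at the conformal coupling $\xi=1/6$.

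I expect the stress-energy computation to be the main obstacle, not conceptually but because one must track the relative signs produced by the reflection carefully enough to land on the precise tensor $A$ and the exact numerical constant; by contrast the $\phi^2$ case is immediate once the identification of $W$ in the first step is in hand, and the smoothness of $W$ near the diagonal that makes the point-splitting meaningful is already supplied by the microlocal discussion following Lemma \ref{comparison-Mink}.
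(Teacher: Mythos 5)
Your strategy is the same as the paper's: choose the Hadamard parametrix $H$ to coincide (up to regularisation freedom) with the massless Poincar\'e vacuum kernel $\widetilde\omega^0_2$, reduce the Hadamard subtraction to the single reflected term coming from \eqref{int_ker}, observe that this term is smooth near the diagonal for $z>0$ so the coincidence limit exists, and for the stress tensor apply the point-splitting bidifferential operator \eqref{D00} to the same reflected kernel. The structure of your argument is therefore sound and matches the paper's proof step for step, including the observation that only the reflected geodesic contributes to the singular structure and that everything is finite away from $\partial\bH^4$.

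The one point you should not wave away as mere ``sign bookkeeping'' is the overall normalisation. Taking \eqref{int_ker} at face value, your subtracted kernel is $W(x,x')=-\widetilde\omega^0_2(\underline{x},-z,\underline{x}',z')$, whose coincidence limit is $-\frac{1}{4\pi^2}\frac{1}{(2z)^2}=-\frac{1}{16\pi^2 z^2}$, i.e.\ \emph{twice} the constant claimed in the lemma. The paper lands on $-\frac{1}{32\pi^2 z^2}$ only because its proof carries the $1/\sqrt{2}$ normalisation of the map $\eta$ in \eqref{eta-map} through the computation: it smears against the antisymmetrised test function $\zeta=f-\iota_z(f)$ over all of $\bR^4$ and correspondingly subtracts $\tfrac12 H$ rather than $H$, which produces the extra factor $\tfrac12$ (and likewise in the stress-tensor computation). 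So either you adopt that convention explicitly, or your first identity comes out as $-\frac{1}{16\pi^2}\int f/z^2$. Your consistency checks for the stress tensor (conservation plus planar translation invariance forcing $A_{zz}=0$, and tracelessness at $\xi=1/6$) are a nice addition not present in the paper, but they fix only the tensor structure $A_{\mu\nu}$ and the $(6\xi-1)$ dependence, not the overall constant, so they do not resolve the factor-of-two ambiguity.
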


\begin{proof}
We need only to recollect what already proven together with the explicit form of
\begin{equation}\label{vacuum}
\widetilde\omega^0_{2}(x,x^\prime)=\lim\limits_{\epsilon\to 0^+}\frac{1}{4\pi^2}\frac{1}{\left(\eta_3^{\mu\nu}(\underline{x}_\mu-\underline{x}^\prime_\nu)+(z-z^\prime)^2\right)+i\epsilon(\underline{x}_0-\underline{x}^\prime_0)+\epsilon^2},
\end{equation}
where $\eta_3=\textrm{diag}(-1,1,1)$. On account of both Proposition \ref{comparison} and Lemma \ref{comparison-Mink}, we know that $\omega^0$ is a Hadamard state for $\mathcal{A}^{CP}(\bH^4)$. The definition of $\mathcal{A}^{CP}_\mu(\bH^4)$ together with both $\widetilde\omega^0_{2}(x,x^\prime)=H(\underline{x},z,\underline{x}^\prime,z^\prime)=H(\underline{x},-z,\underline{x}^\prime,-z^\prime)$ and Proposition \ref{comparison} entail that, calling $\zeta=f-\iota_z(f)\in C^\infty_0(\bR^4)$,
\begin{gather*}
\omega^0(\normal{\widehat{\phi}^2}_H(\zeta))\doteq
\int\limits_{\bR^4\times\bR^4}d^4x\,d^4x^\prime\,\left(\widetilde\omega^0_{2}(x,x^\prime)-\frac{1}{2}H(\underline{x},z,\underline{x}^\prime,z^\prime)\right)\zeta(x)\delta(x-x^\prime)=\\
=-\int\limits_{\bR^4\times\bR^4}d^4x\,d^4x^\prime\,\left(\frac{1}{4}\left(\omega^0_{2}(\underline{x},-z,\underline{x}^\prime,z^\prime)+\omega^0_{2}(\underline{x},z,\underline{x}^\prime,-z^\prime)\right)\right)\zeta(x)\delta(x-x^\prime)=\\
=-\int\limits_{\bR^4\times\bR^4}d^4x\,d^4x^\prime\,\frac{1}{2}H(\underline{x},-z,\underline{x}^\prime,z^\prime)\zeta(x)\delta(x-x^\prime)=-\frac{1}{32\pi^2}\int\limits_{\bR^4}d^4x \frac{\zeta(\underline{x},z)}{z^2}
\end{gather*}
In order to compute $\omega^0(\normal{\widehat{T_{\mu\nu}}}_{H}(\zeta))$ it suffices to apply the point-splitting scheme as introduced in \cite{Moretti:2001qh}. All results obtained in this cited paper apply without modifications to the case at hand. In particular it holds that
$$\omega^0(\normal{\widehat{T_{\mu\nu}}}_{H}(\zeta))=\int\limits_{\bR^4\times\bR^4}d^4x\,d^4x^\prime\,\left(D_{\mu\nu}^{(x,x^\prime)}\left(\widetilde\omega^0_{2}(x,x^\prime)-\frac{1}{2}H(\underline{x},z,\underline{x}^\prime,z^\prime)\right)\right)\zeta(x)\delta(x-x^\prime),$$
where -- see \cite[\S 4]{Hack:2010iw}
\begin{equation}\label{D00}
D_{\mu\nu}^{(x,x^\prime)}=\frac{\partial}{\partial x^\mu}\frac{\partial}{\partial x'^\nu}-\frac12 \eta_{\mu\nu}\eta^{\alpha\beta}\frac{\partial}{\partial x^\alpha}\frac{\partial}{\partial x'^\beta}+\xi \left( \eta_{\mu\nu}\eta^{\rho\lambda}\frac{\partial}{\partial x^\rho}\frac{\partial}{\partial x^\lambda} - \frac{\partial}{\partial x^\mu}\frac{\partial}{\partial x^\nu}\right).
\end{equation}
Inserting this expression in the above integral and replacing $\widetilde\omega^0_{2}(x,x^\prime)-\frac{1}{2}H(\underline{x},z,\underline{x}^\prime,z^\prime)$ with $\frac{1}{2}H(\underline{x},-z,\underline{x}^\prime,z^\prime)$ yields the sought result.
\end{proof}

\begin{remark}\label{alternative}
Notice that we have defined the Wick polynomials only for those smooth and compactly supported functions whose support does not intersect the boundary of the region of interest. The reason can be seen explicitly looking at the last lemma: If we inspect the integral kernels $\widetilde\omega_2(\underline{x},-z,\underline{x}^\prime,z^\prime)$ and $\widetilde\omega_2(\underline{x},z,\underline{x}^\prime,-z^\prime)$, they become singular at $z=z^\prime=0$ so that they cannot be tested with $\delta(z-z^\prime)$. This is no surprise and it is at the heart of the often mentioned problem that, in a Casimir or in a Casimir-Polder system, the total energy, computed out of the integral of the time-component of the stress-energy tensor diverges.
\end{remark}


\se{Algebraic Quantum Field Theory and the Casimir effect}

In this section we shall focus on the second scenario, we are interested in, namely the one describing the attraction force between two parallel, perfectly conducting, plates as discussed for the first time in \cite{Casimir:1948dh}. We shall refer to it as {\em Casimir system}. As in the previous section we shall investigate this model from the point of view of algebraic quantum field theory and using the so-called functional formalism. Following the same path as in a Casimir-Polder system, we shall proceed in three main steps:

\vskip .2cm

\noindent\textbf{\em Part 1 -- Dynamical configurations:} At a geometric level, the model consists of the region $Z\doteq\bR^3\times [0,d]\subset\bR^4$ endowed with the (restriction of the) Minkowski metric. In analogy to the previous section, the interval $[0,d]$ runs along the spacelike $z$-direction. At a field theoretical level, our starting point are are all $u\in C^\infty(Z)$, where smoothness is meant as in Definition \ref{smoothonbound} since $Z\subset\bH^4$. {\em Dynamical configurations} are instead the elements of the vector space $\mathcal{S}^C(Z)$ built out of the smooth solutions of
\begin{equation}\label{Casimir-eom}
\left\{\begin{array}{l}
Pu=(\Box-\xi R -m^2)u=0,\quad m^2\geq 0,\quad\xi\in\bR\\
u(\underline{x},0)=u(\underline{x},d)=0
\end{array}\right. ,
\end{equation}
where $R$ is the scalar curvature. Since the scalar curvature vanishes, the term $\xi R$ plays no role at a dynamical level, but it affects the structure of the stress-energy tensor which we will consider later. 

Notice that, in full analogy with the previous section, neither $(Z,\eta)$ is  a globally hyperbolic spacetime, nor \eqref{Casimir-eom} is an initial value problem, rather it is a boundary value problem. Hence, in order to characterize $\mS^C(Z)$, we follow the same strategy used in a Casimir-Polder system, namely we identify each smooth solution of \eqref{Casimir-eom} with a specific counterpart for a Klein-Gordon field on the whole Minkowski spacetime. Before outlining the details, we introduce the auxiliary regions
\begin{equation}\label{Y-regions}
Y_0\doteq \bR^3\times [-d,d],\quad Y_n\doteq\{x\in\bR^4\;|\;\exists (\underline{x},z)\in Y_0\;\textrm{for which}\;x=(\underline{x},z+2nd)\},\;n\in\mathbb{Z}.
\end{equation}
As a consequence $\bR^4=\bigcup\limits_{n\in\bZ}Y_n$. 

\begin{proposition}\label{dynCas}
There exists a vector space isomorphism between $\mS^C(Z)$ and the quotient $\frac{C^\infty_{tc,C}(\bR^4)}{P[C^\infty_{tc,C}(\bR^4)]}$ where $C^\infty_{tc,C}(\bR^4)$ is the collection of all $\alpha\in C^\infty_{tc}(\bR^4)$ such that the following conditions are met:
\begin{enumerate}
\item $\alpha\in C^\infty_{tc,-}(\bR^4)$, that is $\alpha(\underline{x},z)=-\alpha(\underline{x},-z)$
\item $\alpha(\underline{x},z)=-\alpha(\underline{x},2d-z)$
\end{enumerate}
\end{proposition}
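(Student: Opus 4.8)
The plan is to mimic the argument of Proposition \ref{CPdynchara}, replacing the single reflection $\iota_z$ by the group of reflections generated by the two hyperplanes $z=0$ and $z=d$. First I would set up the relevant isometries of $\bR^4$: besides $\iota_z:(\underline{x},z)\mapsto(\underline{x},-z)$ introduced already, let $\iota_d:(\underline{x},z)\mapsto(\underline{x},2d-z)$ be the reflection across $z=d$. The composition $\tau\doteq\iota_d\circ\iota_z$ is the translation $(\underline{x},z)\mapsto(\underline{x},z+2d)$, and $\iota_z,\iota_d$ generate the infinite dihedral group acting on $\bR^4$ with fundamental domain $Z=\bR^3\times[0,d]$; the orbit of $Z$ tiles $\bR^4$ by the regions $Y_n$ of \eqref{Y-regions}. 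Conditions 1 and 2 in the statement say precisely that $\alpha$ is odd under both $\iota_z$ and $\iota_d$ (hence invariant under $\tau$), which is the natural symmetry class for the method of images with Dirichlet data on both plates. As in the previous section one records that $\iota_z\circ E=E\circ\iota_z$ and $\iota_d\circ E=E\circ\iota_d$ by Poincar\'e covariance, so $E$ maps $C^\infty_{tc,C}(\bR^4)$ into smooth functions with the same two-fold antisymmetry, which therefore vanish on $\partial Z=(\bR^3\times\{0\})\cup(\bR^3\times\{d\})$ and solve $Pu=0$; restricting to $Z$ thus lands in $\mS^C(Z)$, giving a well-defined map $\rho_Z\circ E$ on the quotient.

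For \textbf{surjectivity}, given $u\in\mS^C(Z)$ I would extend it to all of $\bR^4$ by the reflection recipe: on $Y_0=\bR^3\times[-d,d]$ set $\widetilde u$ to agree with $u$ on $[0,d]$ and with $-u(\underline{x},-z)$ on $[-d,0]$, and then propagate to each $Y_n$ by $\widetilde u\circ\tau^{-n}$; equivalently $\widetilde u$ is the unique function on $\bR^4$ odd under $\iota_z$ and under $\iota_d$ restricting to $u$ on $Z$. The content of this step is that $\widetilde u\in C^\infty(\bR^4)$: one must check smoothness across each gluing hyperplane $z=nd$. This is done exactly as in Proposition \ref{CPdynchara} --- at $z=0$ and $z=d$ the odd extension is continuous because $u$ vanishes there, $\partial_z\widetilde u$ is continuous because it is even, and the crucial observation for the second $z$-derivative is that $\partial_z^2 u=(\partial_t^2-\partial_x^2-\partial_y^2+m^2+\xi R)u$ also vanishes on the boundary, with higher derivatives handled by iteration and Schwarz's theorem. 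Since $\widetilde u$ is a global smooth solution of $Pu=0$, by the discussion in Section \ref{notations} there is $[\alpha]\in\frac{C^\infty_{tc}(\bR^4)}{P[C^\infty_{tc}(\bR^4)]}$ with $\widetilde u=E(\alpha)$. The antisymmetry $\widetilde u+\iota_z\widetilde u=0$ forces $E(\alpha+\iota_z\alpha)=0$, hence $\alpha+\iota_z\alpha\in P[C^\infty_{tc,+}(\bR^4)]$, and similarly $\alpha+\iota_d\alpha\in P[\,\text{($\iota_d$-even part)}\,]$; using that $P$ preserves each reflection-parity subspace and that $P[C^\infty_{tc}(\bR^4)]$ splits accordingly, one projects $\alpha$ onto the subspace of elements odd under both reflections without changing the class modulo $P[C^\infty_{tc,C}(\bR^4)]$, producing the desired $[\alpha]\in\frac{C^\infty_{tc,C}(\bR^4)}{P[C^\infty_{tc,C}(\bR^4)]}$.

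For \textbf{injectivity}, the map is injective by construction: if $\rho_Z(E(\alpha))=0$ on $Z$ then its doubly-odd extension is the global solution $E(\alpha)$, which therefore vanishes on all of $\bR^4$ (the reflections sweep $Z$ over $\bR^4$), so $\alpha\in P[C^\infty_{tc}(\bR^4)]$ and, after the parity projection above, $\alpha\in P[C^\infty_{tc,C}(\bR^4)]$, i.e. $[\alpha]=0$. Linearity in $\alpha$ is immediate.

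The step I expect to be the \textbf{main obstacle} is the smoothness of the doubly-reflected extension $\widetilde u$ across \emph{all} the hyperplanes $z=nd$ simultaneously, together with the bookkeeping needed to show that the parity decomposition of $C^\infty_{tc}(\bR^4)$ really has a three-fold refinement (odd/even under $\iota_z$ crossed with odd/even under $\iota_d$) that is preserved by $P$ and under which $P[C^\infty_{tc}(\bR^4)]$ splits --- this is what lets one replace an arbitrary preimage $\alpha$ by a genuinely doubly-odd one. The reflection-parity splitting used for the single plate in \eqref{split} needs to be iterated, and one should note that the subspace "odd under $\iota_z$ and odd under $\iota_d$" coincides with "odd under $\iota_z$ and $2d$-periodic-antiperiodic in the appropriate sense," so the timelike-compactness condition is compatible with it. Once these structural facts are in place, the proof is a routine adaptation of Proposition \ref{CPdynchara}; no genuinely new analytic input beyond the boundary-regularity argument already given there is required.
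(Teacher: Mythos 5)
Your proposal follows the paper's proof essentially step by step: the doubly-odd (equivalently, odd and $2d$-periodic) extension of $u$ to all of $\bR^4$, the boundary-regularity argument borrowed from Proposition \ref{CPdynchara}, the identification of the extended solution as $E(\alpha)$ for some $[\alpha]$, and the reduction of $\alpha$ to a representative in $C^\infty_{tc,C}(\bR^4)$ using the two reflection symmetries, with the same injectivity argument via $E(\alpha)=0\Rightarrow\alpha=P\rho$ with $\rho$ inheriting the symmetries. The only cosmetic difference is your explicit infinite-dihedral-group framing; the substance, including the (equally terse) step where one passes from an arbitrary preimage $\alpha$ to one that is simultaneously $\iota_z$- and $\iota_d$-odd, matches the paper's argument.
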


\begin{proof}
As a first step we show that there exists an isomorphism between $\mS^C(Z)$ and a vector subspace of $\mS^{KG}(\bR^4)\doteq\{\phi\in C^\infty(\bR^4)\;|\; P\phi = 0\}$. Let $u\in\mS^C(Z)$ and let  
$$v(x)\doteq \left\{\begin{array}{ll}
u(x), & x\in Z\\
-u(-x), & x \in Y_0\setminus Z
\end{array}\right. .$$
Following the same argument as in the proof of Proposition \ref{CPdynchara}, we can conclude that $v\in C^\infty(Y_0)$ and $v(\underline{x},0)=v(\underline{x},d)=v(\underline{x},-d)=0$. Define $\phi(x)=\phi(\underline{x},z)\doteq v(\underline{x},z-2nd),$ for any $x\in Y_n$. By a similar argument as for $v(x)$, it descends that $\phi\in C^\infty(\bR^4)$ and that, moreover, $P\phi=0$, as this property is traded from that of $u$. In other words we have found a linear map 
\begin{gather}
F:\mS^C(Z)\to \mS^C(\bR^4)\subset\mS^{KG}(\bR^4)\notag\\
\mS^C(\bR^4)=\left\{\phi\in C^\infty_-(\bR^4)\;|\;P\phi=0\;\textrm{and}\;\phi(\underline{x},2d-z)=-\phi(\underline{x},z)\right\}.\label{image-sol}
\end{gather}
The map is per construction surjective, since for every $\phi\in\mS^C(\bR^4)$, $\left.\phi\right|_Z\in\mS^C(Z)$ and $F(\left.\phi\right|_Z)=\phi$. Furthermore $F$ is also injective since $F(u)=0\in\mS^C(\bR^4)$ implies $\phi=0$ and, thus $u=\left.\phi\right|_Z=0$. In other words $F$ is an isomorphism of vector spaces. To prove the statement of the proposition we need to show that $\mS^C(\bR^4)$ is isomorphic to $\frac{C^\infty_{tc,C}(\bR^4)}{P[C^\infty_{tc,C}(\bR^4)]}$. 
As a first step we show that the map induced by $E$ is surjective. Let thus $\phi\in\mS^C(\bR^4)$. Since $P\phi=0$, there must exist $\alpha\in C^\infty_{tc}(\bR^4)$ such that $\phi=E(\alpha)$. Since $\phi$ is odd per reflection along the hyperplane $z=0$, we know from Proposition \ref{CPdynchara} that $\alpha$ must lie in $C^\infty_{tc,-}(\bR^4)$. Repeating slavishly the proof of Proposition \ref{CPdynchara} with respect to the condition $\phi(\underline{x},2d-z)=-\phi(\underline{x},z)$ we obtain that $\alpha\in C^\infty_{tc,-,d}(\bR^4)$ where $C^\infty_{tc,-,d}(\bR^4)=\left\{\alpha\in C^\infty_{tc}\;|\;\alpha(\underline{x},2d-z)=-\alpha(\underline{x},z)\right\}$. Putting all together $\alpha\in C^\infty_{tc,-}(\bR^4)\cap C^\infty_{tc,-,d}(\bR^4)=C^\infty_{tc,C}(\bR^4)$. Taking into account that $E\circ P=0$, we have associated to each element in $\mS^C(\bR^4)$ an equivalence class in $\frac{C^\infty_{tc,C}(\bR^4)}{P[C^\infty_{tc,C}(\bR^4)]}$. 
We focus now on injectivity. Let $\alpha\in C^\infty_{tc,C}(\bR^4)$ and let $\phi_\alpha\doteq E(\alpha)$ where $E$ is the causal propagator of $P$ on Minkowski spacetime. Per construction $P\phi_\alpha=0$. Furthermore since both the map $\iota_z:\bR^4\to\bR^4$ such that $\iota_z(\underline{x},z)=(\underline{x},-z)$ and $\iota_s:\bR^4\to\bR^4$ such that $\iota_s(\underline{x},z)=(\underline{x}, z+s)$, $s\in\bR$, are isometries of $(\bR^4,\eta)$ it holds that $E\circ\iota_z=\iota_z\circ E$ and $E\circ\iota_s=\iota_s\circ E$. Consequently $\phi=E(\alpha)=E(-\iota_z\alpha)=-\iota_z E(\alpha)=-\iota_z\phi$ which entails $\phi(\underline{x},0)=0$. At the same time, replacing $\iota_z$ with $\iota_s\circ\iota_z$, $s=2d$, we obtain that $\phi(\underline{x},2d-z)=-\phi(\underline{x},z)$ which implies $\phi(\underline{x},d)=0$. Since $E\circ P=0$, the map which associates to each $[\alpha]\in\frac{C^\infty_{tc,C}(\bR^4)}{P[C^\infty_{tc,C}(\bR^4)]}$, $E(\alpha)\in\mS^C(\bR^4)$ does not depend on the choice of the representative in $[\alpha]$ and it is, moreover, injective. As a matter of facts, suppose $E(\alpha)=0$. This entails that there exists $\rho\in C^\infty_{tc}(\bR^4)$ such that $\alpha=P\rho$. Yet, since $\alpha(\underline{x},z)=-\alpha(\underline{x},-z)=-\alpha(\underline{x},2d-z)$ and since $P$ is invariant both under the map $(\underline{x},z)\mapsto (\underline{x},-z)$ and $(\underline{x},z)\mapsto(\underline{x},z+2d)$, $\rho\in C^\infty_{tc,C}(\bR^4)$. As a consequence $P\rho$ lies in the trivial equivalence class of $\frac{C^\infty_{tc,C}(\bR^4)}{P[C^\infty_{tc,C}(\bR^4)]}$.
\end{proof}

\begin{remark}\label{image-conf}
It is noteworthy that the two conditions defining the elements of $\mS^C(\bR^4)$ in \eqref{image-sol} are actually already implementing the method of images at a level of dynamical configurations. As a matter of facts, consider any $\phi\in\mS^C(\bR^4)$: For any $n\in\bZ$, first applying the reflection along the hyperplane $(\underline{x},d)$ and then the one along $(\underline{x},0)$, the following chain of identities holds true:
$$\phi(\underline{x},z+2nd)=-\phi(\underline{x},-z-2(n-1)d)=\phi(\underline{x},z+2(n-1)d),$$
and equivalently $\phi(\underline{x},z+2nd)=\phi(\underline{x},z+2(n+1)d)$. In other words every element in $\mS^C(\bR^4)$ is both odd with respect to the reflection along the hyperplane $z=0$ and $2d$-periodic.\\
\end{remark}

Our next goal is to expand cohesively the content of the above remark. Therein our philosophy was to show that, to each dynamical configuration for a Casimir system, we can associate a solution of the equation of motion of a Klein-Gordon scalar field, which is periodic along the $z$-direction. From the quantum field theory point of view, especially when constructing algebraic states, we will be interested in a complementary problem, namely we would like to start from an element of $\mathcal{S}^{KG}(\bR^4)\doteq\{\phi\in C^\infty(\bR^4)\;|\;P\phi=0\}$ and associate to it one in $\mS^C(\bR^4)$. Following an argument almost identical to that of Proposition \ref{dynCas}, this problem can be translated to associating to an element of $C^\infty_{tc}(\bR^4)$ one of $C^\infty_{tc,C}(\bR^4)$. 
Barring the reflection along the plane $z=0$, the key procedure consists of making a smooth function on $\bR^4$ periodic. This operation, which is strongly tied to the Poisson's summation formula -- see \cite[\S 7.2]{Hormander1}, does not yield in general a well-defined result on the whole $C^\infty_{tc}(\bR^4)$. Yet we can individuate a notable subset which suffices to reach our goal. More precisely

\begin{proposition}\label{surjective}
Let $C^\infty_{0,C}(\bR^4)\doteq\{\alpha\in C^\infty_{tc,C}(\bR^4)\;|\;\supp(\alpha)\cap\left(\bR^3\times\{z\}\right)\;\textrm{is compact}\;\forall z\in\bR\}$ and let $N: C^\infty_{0}(\bR^4)\to C^\infty_{0,C}(\bR^4)$ be defined as 
\begin{equation}\label{imagesum}
N(f)(\underline{x},z)=\sum\limits_{n=-\infty}^\infty \left(f(\underline{x},z+2nd)-f(\underline{x},-z+2nd)\right).
\end{equation}
The following statements hold true:
\begin{enumerate}
\item The map $N$ is surjective, but not injective.
\item $N$ is an isomorphism between $C^\infty_0(\mathring{Z})\subset C^\infty_0(\bR^4)$ and $C^\infty_{0,I}(\bR^4)\subset C^\infty_{0,C}(\bR^4)$, where 
$$C^\infty_{0,I}(\bR^4)\doteq\{\alpha\in C^\infty_{tc,C}(\bR^4)\;|\;\supp(\alpha)\cap \mathring{Z}\;\textrm{is compact}\}.$$
\end{enumerate}
\end{proposition}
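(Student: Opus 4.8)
The plan is to prove \emph{(1)} by constructing an explicit right inverse of $N$ from a periodic partition of unity (surjectivity) and by exhibiting an obvious element of the kernel (non-injectivity), and to prove \emph{(2)} by showing that on the ``fundamental domain'' $\mathring{Z}$ the map $N$ acts as the identity, from which both injectivity and the construction of a preimage are immediate. The basic structural fact I would use repeatedly is that, by conditions \emph{1.} and \emph{2.} in the definition of $C^\infty_{tc,C}(\bR^4)$, every $\alpha\in C^\infty_{tc,C}(\bR^4)$ is odd in $z$ and $2d$-periodic in $z$, so it is determined by its restriction to $\bR^3\times[0,d]$. Before anything else I would check that $N$ is well defined: for a fixed point only finitely many summands in \eqref{imagesum} are nonzero and the sum is locally finite, so $N(f)\in C^\infty(\bR^4)$; a reindexing of the two sums gives $N(f)(\underline x,-z)=-N(f)(\underline x,z)$ and $N(f)(\underline x,2d-z)=-N(f)(\underline x,z)$; and since \eqref{imagesum} only translates in the $z$-direction, $\supp N(f)$ has the same bounded extent in $t$ and in $(x,y)$ as $\supp f$, whence $\supp N(f)\cap J^\pm(p)$ is contained in a truncated light cone and is compact, and each slice $\supp N(f)\cap(\bR^3\times\{z\})$ is compact. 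Thus $N$ maps $C^\infty_0(\bR^4)$ into $C^\infty_{0,C}(\bR^4)$.

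For surjectivity in \emph{(1)} I would fix an even $\chi\in C^\infty_0(\bR)$ with $\sum_{n\in\bZ}\chi(z+2nd)=1$ (obtained by normalising the periodisation of an even bump), and, given $\alpha\in C^\infty_{0,C}(\bR^4)$, set $g\doteq\tfrac12\chi(z)\alpha$. Using periodicity and oddness of $\alpha$ one gets $N(g)(\underline x,z)=\tfrac12\alpha(\underline x,z)\left(\sum_n\chi(z+2nd)+\sum_n\chi(2nd-z)\right)=\alpha(\underline x,z)$, so $N(g)=\alpha$ provided $g\in C^\infty_0(\bR^4)$; this last point reduces to checking that $\supp\alpha$ meets the slab $\bR^3\times\supp\chi$ in a compact set, which must be extracted from the three defining properties of $C^\infty_{tc,C}(\bR^4)$ together with slicewise compactness. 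This support bookkeeping is the technical heart of \emph{(1)} and the step I expect to be the main obstacle. Non-injectivity is then immediate: any nonzero $f\in C^\infty_0(\bR^4)$ that is even in $z$ satisfies $N(f)=0$ by a reindexing of the two sums in \eqref{imagesum}.

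For \emph{(2)}, the key observation is that for $f\in C^\infty_0(\mathring{Z})$, i.e. with $\supp f$ a compact subset of $\bR^3\times(0,d)$, and for $z\in(0,d)$, a summand $f(\underline x,z+2nd)$ is nonzero only when $n=0$, while $f(\underline x,-z+2nd)$ is nonzero for no $n$; hence $N(f)|_{\mathring{Z}}=f|_{\mathring{Z}}$, and in particular $\supp N(f)\cap\mathring{Z}=\supp f$ is compact, so $N(f)\in C^\infty_{0,I}(\bR^4)$. Injectivity follows at once: $N(f)=0$ forces $f=N(f)|_{\mathring{Z}}=0$.

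For surjectivity onto $C^\infty_{0,I}(\bR^4)$ I would argue that if $\alpha\in C^\infty_{tc,C}(\bR^4)$ has $\supp\alpha\cap\mathring{Z}$ compact \emph{in the open set} $\mathring{Z}$, then $\alpha$ vanishes on a full neighbourhood of $\{z=0\}$ and of $\{z=d\}$, hence, by oddness and $2d$-periodicity, on neighbourhoods of every hyperplane $z\in d\bZ$; therefore the function $f$ obtained by restricting $\alpha$ to $\bR^3\times(0,d)$ and extending by zero is smooth and compactly supported in $\mathring{Z}$. By the previous computation $N(f)$ and $\alpha$ agree on $\mathring{Z}$, and since both lie in $C^\infty_{0,C}(\bR^4)$ they are odd and $2d$-periodic in $z$, hence agree on $\bR^3\times[0,d]$ by continuity and therefore everywhere; thus $N(f)=\alpha$. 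Finally, $N$ restricted to $C^\infty_0(\mathring{Z})$ is linear and bijective onto $C^\infty_{0,I}(\bR^4)$, and both it and its inverse (a restriction map) are continuous, which gives the asserted isomorphism. The only delicate points are the two support arguments, the harder of the two being the one flagged above in the proof of \emph{(1)}.
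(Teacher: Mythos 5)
Your proof is correct and follows essentially the same route as the paper's: surjectivity in \emph{(1)} is obtained in both cases by multiplying $\alpha$ by a cutoff $\chi(z)$ whose $2d$-translates form a partition of unity and using the oddness and $2d$-periodicity of $\alpha$ to collapse the sum (your normalisation $g=\tfrac12\chi\alpha$ is in fact the correct one: with $\sum_n\chi(z+2nd)=1$ both the direct and the reflected sum contribute a full copy of $\alpha$, so the paper's unnormalised $N(\chi\zeta)$ comes out as $2\zeta$); and \emph{(2)} is in both cases reduced to the observation that $N$ restricts to the identity on the fundamental domain. You differ in two places, both to your advantage. For non-injectivity you note that every nonzero $z$-even element of $C^\infty_0(\bR^4)$ lies in $\ker N$, which is cleaner and more informative than the paper's explicit pair $f,f^\prime$ with $N(f)=N(f^\prime)$ (whose verification is not spelled out there). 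In \emph{(2)} you justify why compactness of $\supp\alpha\cap\mathring{Z}$ forces $\alpha$ to vanish on a neighbourhood of the hyperplanes $z\in d\bZ$, and why agreement of $N(f)$ and $\alpha$ on $\mathring{Z}$ propagates to all of $\bR^4$ via oddness and periodicity; the paper simply asserts $f\doteq\zeta|_Z\in C^\infty_0(Z)$ and $N(f)=\zeta$. The support bookkeeping you flag as the main obstacle in \emph{(1)} (that $\chi\alpha$ is compactly supported) is glossed over by the paper as well; it is immediate once one reads the definition of $C^\infty_{0,C}(\bR^4)$, as the paper's later uses suggest, as requiring compactness of $\supp\alpha\cap(\bR^3\times[-2d,2d])$ rather than merely of each individual slice, and you are right that the literal slicewise condition does not obviously yield this on its own.
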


\begin{proof}
We remark, that, per construction $N(f)$ is a smooth function which is $2d$-periodic and odd for reflection along the $z$-axis for any $f\in C^\infty_0(\bR^4)$. The compact support ensures the convergence of the series. 

Let us focus on {\em 1.}: To show that $N$ is surjective, let $\zeta\in C^\infty_{0,C}(\bR^4)$ and let $\chi\in C^\infty(\bR^4)$ be a function constructed as follows. It depends only on $z$ and, at fixed value of $\underline{x}$, $\chi(z)\in C^\infty_0(\bR^4)$ in such a way that $\chi$ vanishes for all $|z|\geq 2d-\alpha$, $\alpha\in (0,d)$. Furthermore $\chi(z)=1$ if $z\in (-\alpha,\alpha]$ and for all other values of $z$ it is such to satisfy the identity $\chi(z)+\chi(z+2d)=1$ for all $z\in [-2d,0]$. Consequently $\chi\zeta\in C^\infty_0(\bR^4)$ and a direct calculation shows that $N(\chi\zeta)=\zeta$. Hence $N$ is surjective. To show that $N$ is not injective it suffices to exhibit an explicit example: Consider any $\beta\in C^\infty_0\left((0,d)\times\bR^3\right)$ and 
$f(\underline{x},z)$ as $\beta(\underline{x},z)$ if $z>0$ and as $-\beta(\underline{x},-z)$ if $z<0$. At the same time define $$\beta^\prime(\underline{x},z)=\left\{\begin{array}{ll}
\frac{1}{2}\beta(\underline{x},z) & z\in (0,d)\\
\frac{1}{2}\beta(\underline{x},z-d) & z\in (d,2d)
\end{array}\right. .$$
If we consider $f^\prime(\underline{x},z)$ as $\beta^\prime(\underline{x},z)$ if $z>0$ and as $-\beta^\prime(\underline{x},-z)$ if $z<0$, using \eqref{imagesum}, it turns out that $N(f)=N(f^\prime)$. 

Let us now focus on {\em 2.}: Let $\zeta\in C^\infty_{0,I}(\bR^4)$; per definition $f\doteq\left.\zeta\right|_{Z}\in C^\infty_0(Z)$. On account of \eqref{imagesum} $N(f)=\zeta$, that is $N$ is surjective on $C^\infty_{0,C}(\bR^4)$. Let us assume that there exists $f^\prime\in C^\infty_0(Z)$ such that $N(f^\prime)=0$. Formula \eqref{imagesum} entails that $\left.N(f^\prime)\right|_{Z}=f^\prime=0$, which proves that $N$ is injective. 
\end{proof}

According to our overall strategy, the next step calls for the identification of a counterpart for a Casimir system of ${E}_{\bH^4}$ which played a key role in studying a Casimir-Polder system. Notice that the key role of ${E}_{\bH^4}$ was on the one hand to generate all smooth solutions with the wanted boundary conditions, while on the other hand, it yielded a symplectic form on the space of classical observables. We have emphasized this second aspect since it is easy to grasp that identifying eventually a symplectic form in Casimir system, is more difficult on account of the periodicity of the elements in $\mS^C(Z)$. A solution to this problem lies in this proposition:

\begin{proposition}\label{sc-solutions}
We call $\mS^C_{sc}(Z)$ the collection of all solutions $u\in C^\infty(Z)$ of \eqref{Casimir-eom} such that $\supp(u)\cap\left(\{t\}\times\bR^2\times [0,d]\right)$ is compact for all $t\in\bR$. This is  
\begin{enumerate}
\item a vector space isomorphic to $\frac{C^\infty_{0,C}(\bR^4)}{P[C^\infty_{0,C}(\bR^4)]}$,
\item a symplectic space if endowed with the following weakly non-degenerate symplectic form:
\begin{equation}\label{E-sympl}
\sigma_C(u,u^\prime)=\sigma_C([\zeta],[\zeta^\prime])=\left(\zeta,E(\zeta^\prime)\right)_C=-\left(E(\zeta),\zeta^\prime\right)_C,
\end{equation}
where $\zeta$ and $\zeta^\prime$ are representatives of $[\zeta],[\zeta^\prime]\in \frac{C^\infty_{0,C}(\bR^4)}{P[C^\infty_{0,C}(\bR^4)]}$ so that $u=E(\zeta)$ and $u^\prime = E(\zeta^\prime)$ and where 
\begin{equation}\label{E-sympl2}
\left(\zeta,E(\zeta^\prime)\right)_C\doteq\int\limits_{\bR^3}d^3\underline{x}\int\limits_{0}^d dz\, \zeta E(\zeta^\prime)=
-\int\limits_{\bR^3}d^3\underline{x}\int\limits_{0}^d dz\, E(\zeta) \zeta^\prime.
\end{equation}
\end{enumerate}
\end{proposition}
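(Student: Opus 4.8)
The plan is to transfer as much as possible from the already-established analysis of the Casimir-Polder case, since the two conditions defining $C^\infty_{tc,C}(\bR^4)$ are nothing but the two reflection symmetries encoding the boundary conditions at $z=0$ and $z=d$. For item (1), I would argue exactly as in Proposition \ref{dynCas}, but now keeping track of supports. The map $u\mapsto E(\zeta)$ (with $\zeta$ the class associated to $u$ via the construction in Proposition \ref{dynCas}) sends $\mS^C_{sc}(Z)$ into $\frac{C^\infty_{0,C}(\bR^4)}{P[C^\infty_{0,C}(\bR^4)]}$: a solution with spacelike-compact-in-$Z$ support corresponds, after reflecting and $2d$-periodizing, to a solution of $P\phi=0$ on $\bR^4$ whose intersection with each slice $\bR^3\times\{z\}$ is compact, hence (using the isomorphism $E:\frac{C^\infty_{0}(\bR^4)}{P[C^\infty_0(\bR^4)]}\to\mS_{sc}^{KG}$ recalled in Section \ref{notations}, adapted to the periodic setting) arises from a representative in $C^\infty_{0,C}(\bR^4)$. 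Conversely, given $[\zeta]\in\frac{C^\infty_{0,C}(\bR^4)}{P[C^\infty_{0,C}(\bR^4)]}$, the function $E(\zeta)$ is smooth, spacelike compact on each slice, odd under $\iota_z$, and $2d$-periodic (this is verified exactly as in Proposition \ref{dynCas} using $E\circ\iota_z=\iota_z\circ E$ and $E\circ\iota_s=\iota_s\circ E$), so its restriction to $Z$ lies in $\mS^C_{sc}(Z)$. Independence of the representative and injectivity follow from $E\circ P=0$ together with the observation, already used in Proposition \ref{dynCas}, that $P$ commutes with both reflections, so if $\zeta=P\rho$ with $\zeta\in C^\infty_{0,C}$ then $\rho$ may be taken in $C^\infty_{0,C}$ as well (symmetrize $\rho$; the slicewise-compactness is inherited).

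For item (2), bilinearity and antisymmetry of $\sigma_C$ are immediate from the definition \eqref{E-sympl2} and the antisymmetry $E^*=-E$. The first point to check is well-definedness: $\sigma_C$ must not depend on the choice of representatives $\zeta,\zeta'$ in their classes. If $\zeta'=P\rho'$ with $\rho'\in C^\infty_{0,C}(\bR^4)$, then $E(\zeta')=E(P\rho')=0$, so $(\zeta,E(\zeta'))_C=0$; symmetrically in the first slot one uses $(E(\zeta),\zeta')_C$ and integration by parts over the slab $\bR^3\times[0,d]$. The one genuinely delicate integration-by-parts step is precisely that the boundary terms at $z=0$ and $z=d$ vanish — and here one invokes conditions 1.\ and 2.\ of Proposition \ref{dynCas}: every representative $\zeta\in C^\infty_{0,C}$ vanishes on $z=0$ and $z=d$ (oddness under $\iota_z$ forces $\zeta(\underline x,0)=0$, and $\zeta(\underline x,z)=-\zeta(\underline x,2d-z)$ forces $\zeta(\underline x,d)=0$), and likewise $E(\zeta)$ vanishes there by the same argument applied to the solution; so all boundary contributions in the integration by parts cancel, giving the equality of the two expressions in \eqref{E-sympl2} and the identity $\sigma_C=(\zeta,E(\zeta'))_C=-(E(\zeta),\zeta')_C$.

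The remaining claim is weak non-degeneracy. Suppose $\sigma_C([\zeta],[\zeta'])=0$ for all $[\zeta']$. I would argue that this forces $u=E(\zeta)$ to vanish on $Z$, hence $[\zeta]=0$ in the quotient. Using surjectivity of the periodization map $N$ of Proposition \ref{surjective}, any test function $g\in C^\infty_0(\mathring Z)$ extends to $N(g)\in C^\infty_{0,I}(\bR^4)\subset C^\infty_{0,C}(\bR^4)$, and one computes $(\zeta,E(N(g)))_C$ and relates it, via the unfolding/periodizing correspondence and a change of variables slice by slice, to the pairing $(E(\zeta),g)$ taken over all of $\mathring Z$ — essentially $\int_{\mathring Z} u\,g$ up to the combinatorial factor coming from the image sum. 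Since $u=E(\zeta)$ is a fixed smooth function on $Z$ and $g$ ranges over all of $C^\infty_0(\mathring Z)$, vanishing of all these pairings yields $u\equiv 0$ on $\mathring Z$, hence on $Z$ by continuity, hence $[\zeta]=0$ by item (1). I expect this last non-degeneracy argument to be the main obstacle: the symmetrization and the bookkeeping in the image sum $N$ must be handled carefully so that one indeed recovers the full pairing against $C^\infty_0(\mathring Z)$ rather than only against its image under $N$ (which is why $N$ being surjective onto $C^\infty_{0,C}$, not merely an isomorphism onto $C^\infty_{0,I}$, is the relevant input), and one must make sure no cancellation between the reflected and translated copies destroys the conclusion; the spacelike-compactness of $u$ is what guarantees all slice integrals converge throughout.
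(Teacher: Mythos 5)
Your proposal is correct and follows essentially the same route as the paper's proof: item (1) by rerunning the argument of Proposition \ref{dynCas} while tracking supports, item (2) by integrating by parts over the slab $\bR^3\times[0,d]$ and noting that the boundary terms at $z=0$ and $z=d$ vanish because both $\zeta$ and $E(\zeta)$ are annihilated there by the two antisymmetry conditions, and weak non-degeneracy by reducing to the standard non-degenerate pairing against $C^\infty_0(\mathring{Z})$ (the paper tests against $\zeta^\prime$ whose support meets $Z$ only inside $\mathring{Z}$ and then upgrades $\zeta=P\alpha$ to $\alpha\in C^\infty_{0,C}(\bR^4)$ via $\alpha=E^\pm(\zeta)$, while you test against $N(g)$ and invoke injectivity from item (1); the two are interchangeable since $N(g)|_{\mathring{Z}}=g$, and no combinatorial factor actually appears). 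The one slip, which does not affect the argument, is in item (1): the compactness on constant-$z$ slices pertains to the labelling function $\alpha$ (which is in addition timelike compact), not to the periodized solution $\phi$, whose support on a slice $\bR^3\times\{z\}$ extends over all times.
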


\begin{proof}
On account of Proposition \ref{dynCas}, to every element $u\in\mS^C_{sc}(Z)\subset\mS^C(Z)$, we can associate via the map $F$ in \eqref{image-sol} a function $\phi\in C^\infty_{-}(\bR^4)$, solution of $P\phi=0$, so that $u=\left.\phi\right|_Z$. Furthermore, there exists  $[\alpha]\in\frac{C^\infty_{tc,C}(\bR^4)}{P\left[C^\infty_{tc,C}(\bR^4)\right]}$ such that $\phi=E(\alpha)$. Since $\phi$ is per hypothesis compactly supported along the $x,y$-directions, but neither in time nor along $z$, the standard support properties of the causal propagator $E$ entail, in turn, that $\alpha$ must be smooth and compactly supported along the $t,x,y$-directions without additional constraints imposed along the $z$-direction. Repeating slavishly the proof of Proposition \ref{dynCas}, {\em 1.} descends. 

Let us focus on {\em 2.}: As a first step, we show that \eqref{E-sympl2} is well-posed. Since for any $u,u^\prime\in\mS^C(Z)$, there exists $[\zeta],[\zeta^\prime]\in\frac{C^\infty_{0,C}(\bR^4)}{P[C^\infty_{0,C}(\bR^4)]}$, such that $u^{(\prime)}=E(\zeta^{(\prime)})$, well-posedness descends from showing that for any $\zeta^\prime\in C^\infty_{0,C}(\bR^4)$ it vanishes the integral
$$\int\limits_{\bR^3}d^3\underline{x}\int\limits_{0}^d dz\, P(\zeta^\prime) E(\zeta).$$
Define $P_{(3)}=P-\frac{\partial^2}{\partial z^2}$ and rewrite the integral as 
$$\int\limits_{\bR^3}d^3\underline{x}\int\limits_{0}^d dz\,\left(P_{(3)}+\frac{\partial^2\zeta^\prime}{\partial z^2}\right) E(\zeta)=
\int\limits_{\bR^3}d^3\underline{x}\int\limits_{0}^d dz\,\left( \frac{\partial^2\zeta^\prime}{\partial z^2} E(\zeta)+\zeta^\prime P_{(3)}E(\zeta)\right),$$
where we used both that $P_{(3)}$ is a formally self-adjoint operator which does not depend on $z$ and that we are integrating along the whole $\bR^3$. If we use the identity $P_{(3)} E(\zeta)= PE(\zeta)-\frac{\partial^2 E(\zeta)}{\partial z^2}$ and we integrate by parts, it holds
$$\int\limits_{\bR^3}d^3\underline{x}\int\limits_{0}^d dz\,\left( \frac{\partial^2\zeta^\prime}{\partial z^2} E(\zeta)+\zeta^\prime P_{(3)}E(\zeta)\right)=\left.\left(\frac{\partial\zeta^\prime}{\partial z}E(\zeta)-\zeta^\prime\frac{\partial E(\zeta)}{\partial z}\right)\right|_{0}^d=0,$$
where we used that both $\zeta^\prime$ and $E(\zeta)$ vanish both at $z=0$ and at $z=d$. From this computation it also descends that, for any $\zeta, \zeta^\prime\in C^\infty_{0,C}(\bR^4)$  
$$(\zeta,E^+(\zeta^\prime))_C = (P E^-\zeta, E^+(\zeta^\prime))_C=(E^-\zeta, \zeta^\prime)_C,$$
where $E^\pm$ are the advanced and the retarded fundamental solutions of $P$ on the whole Minkowski spacetime. From this last identity it descends that $(\zeta, E(\zeta^\prime))_C=-(E(\zeta),\zeta^\prime)_C$. In other words, $\sigma_C$ is both bilinear and antisymmetric. To prove non-degenerateness, suppose there exists $[\zeta]\in\frac{C^\infty_{0,C}(\bR^4)}{P[C^\infty_{0,C}(\bR^4)]}$ such that 
$(\zeta,E(\zeta^\prime))_C=0$ for all $[\zeta^\prime]\in\frac{C^\infty_{0,C}(\bR^4)}{P[C^\infty_{0,C}(\bR^4)]}$. In particular this entails that $(E(\zeta),\zeta^\prime)_C=0$. If we choose $\zeta^\prime$ so that $\left(\supp(\zeta^\prime)\cap Z\right)\subset\mathring{Z}$, calling $\zeta_0\doteq\left.\zeta^\prime\right|_Z$ the following identity holds true:
$$(E(\zeta),\zeta^\prime)_C=\int\limits_{\bR^4}d^4x\, E(\zeta) \zeta_0.$$
Notice that $\zeta\in C^\infty_{0,C}(\bR^4)\subset C^\infty_{tc}(\bR^4)$ and that the right hand side coincides with the standard pairing between $\frac{C^\infty_{tc}(\bR^4)}{P[C^\infty_{tc}(\bR^4)]}$ and $\frac{C^\infty_0(\bR^4)}{P[C^\infty_0\bR^4)]}$ on the whole Minkowski spacetime, which is non degenerate -- see for example \cite{Benini}. Hence there must exist $\alpha\in C^\infty_{tc}(\bR^4)$ such that $\zeta=P\alpha$. Notice that, since \eqref{imagesum} guarantees us that $N$ is built out of isometries of the standard Minkowski metric, it holds that $E^\pm\circ N = N\circ E^\pm$. Since $\alpha=E^+(\zeta)=E^-(\zeta)$ and $\zeta\in C^\infty_{0,C}(\bR^4)$ per hypothesis, $\alpha$ lies in $C^\infty_{0,C}(\bR^4)$, concluding the proof that $\sigma_C$ is weakly non-degenerate.
\end{proof}

Notice that restricting the domain of integration in \eqref{E-sympl2} is necessary to obtain finite quantities and it encodes the physical idea that only the information contained between the boundaries at $z=0$ and at $z=d$ are physically relevant. Before concluding this part of our investigation of a Casimir system, we elaborate from Proposition \ref{sc-solutions} the following Definition

\begin{definition}\label{Casimirpropagator}
We call {\bf Casimir causal propagator} the map $E_Z:C^\infty_0(\bR^4)\to\mathcal{S}_{sc}^C(Z)$ where 
\[
E_Z\doteq \rho_Z\circ E\circ N,
\]
where $N$ is defined in \eqref{imagesum}, $E$ is the causal propagator of the Klein-Gordon scalar field on Minkowski spacetime, while $\rho_Z$ is the restriction map to $Z$.
\end{definition}

\begin{remark}\label{nogamenolife} 
Notice that there is no symplectic isomorphism between $\mathcal{S}_{sc}^C(\bR^4)$ and the space of spacelike compact solutions of the Klein-Gordon equation on Minkowski spacetime . The reason is that $N$ does not preserve the symplectic form, since for arbitrary $f,f^\prime\in C^\infty_0(\bR^4)$,
\begin{equation}\label{tantecarecose}
E(f,f^\prime)\neq\sigma_C\left((E\circ N)(f), (E\circ N)(f^\prime)\right)=(\zeta,E(\zeta^\prime))_C,
\end{equation}
where $\sigma_C$ is the one introduced in \eqref{E-sympl} and, setting $\zeta=N(f)$ and $\zeta^\prime=N(f^\prime)$, the last equality holds on account of \eqref{E-sympl2}. The main consequence of this failure will be the impossibility at a later stage to construct states for the algebra of observables for a Casimir system as the pull-back of states for the counterpart on the whole Minkowski spacetime.
\end{remark}

\vskip .2cm

\noindent\textbf{\em Part 2 -- The off-shell algebra:}  Having characterized all possible dynamical configurations for a Casimir system, we can address the question on how to build an algebra of observables following the example given in Section \ref{notations}. Our guiding principle will be the same as in section \ref{CPsystem} and, in particular, we shall use the functional formalism. We stress that there will be several modifications in comparison to our analysis of the previous section. These can be ultimately ascribed to the more complicated underlying geometry and to the fact that we have well-under control the convergence of the series \eqref{imagesum} only with respect to compactly supported functions. 

\begin{definition}\label{Coff}
We call {\em space of kinematical/off-shell configurations} for a Casimir system 
$$\mathcal{C}^{C}(Z)\doteq\left\{u\in C^\infty(Z)\;|\;\left.u\right|_{\partial Z}=0\;\textrm{and}\;\exists \phi\in \mathcal{C}^{KG}(\bR^4)\;\textrm{such that}\;u=\left.\phi\right|_Z\right\},$$
We consider $\mathcal{C}^{C}(Z)$ endowed with the compact-open topology.
\end{definition}

Notice that $\mathcal{S}^C(Z)\subset\mathcal{C}^C(Z)$. As next step, we want to construct a space of functionals measuring off-shell configurations and we want to endow it with the structure of a $*$-algebra. In this respect Definition \ref{Casimirpropagator} plays a key role.  
\begin{definition}\label{functionalC}
Let $F:\mathcal{C}^{C}(Z)\to\bC$ be any smooth functional. We call it {\bf regular} if for all $k\geq 1$ and for all $u\in\mathcal{C}^{C}(Z)$, $F^{(k)}[u]\in C^{\infty}_{0,C}(\mathring{Z}^k)$, and if only finitely many functional derivatives do not vanish. We indicate this set as $\mathcal{F}_0^C(Z)$.
\end{definition}

\noindent Let us consider the following map:
$$\star_Z:\mathcal{F}_0^C(Z)\times\mathcal{F}_0^C(Z)\to\mathcal{F}_0^C(Z),$$
which associates to each $F,F^\prime\in\mathcal{F}_0^C(Z)$
\begin{equation}\label{starN}
\left(F\star_ZF^\prime\right)(u)=\left(\mathcal{M}\circ\exp(i\Gamma_{E_Z})(F\otimes F^\prime)\right)(u)
\end{equation}
Here $\mathcal{M}$ stands for the pointwise multiplication, {\it i.e.}, $\mathcal{M}(F\otimes F^\prime)(u)\doteq F(u)F^\prime(u)$, whereas 
$$
\Gamma_{E_Z}\doteq\frac{1}{2}\int_{Z\times Z}E_Z(x,x^\prime)\frac{\delta}{\delta u(x)}\otimes\frac{\delta}{\delta u(x^\prime)},
$$
where $E_Z(x,x^\prime)$ is the integral kernel of \eqref{modified}. The exponential in \eqref{starN} is defined intrinsically in terms of the associated power series and, consequently, we can rewrite the product also as 
\begin{equation}\label{algprod5}
\left( F\star_Z F^\prime\right)(u)=\sum\limits_{n=0}^\infty\frac{i^n}{2^n n!}\langle F^{(n)}(u),E_Z^{\otimes n}(F^{\prime (n)})(u)\rangle,
\end{equation}
where the $0$-th order is defined as the pointwise multiplication, that is $\langle F^{(0)}(u),F^{\prime (0)}(u)\rangle\doteq F(u)F^\prime(u)$. Notice that \eqref{algprod5} is well-defined, since $E_Z=E\circ N$ and thus elements in $C^\infty_{0,C}(Z)$ are per definition such that their image under the action of $N$ lies in $C^\infty_{tc,C}(Z)$. To summarize

\begin{definition}
We call $\mathcal{A}^{C}(Z)\equiv\left(\mathcal{F}_0^C(Z),\star_Z\right)$ the {\em off-shell $*$-algebra} of a Casimir system endowed with complex conjugation as $*$-operation. 
\end{definition} 

\begin{remark}\label{PolC}
Notice that, in complete analogy with $\mathcal{A}^{CP}(\bH^4)$ and with $\mathcal{A}^{KG}(\bR^4)$, $\mathcal{A}^C(Z)$ can be seen as being generated by the functionals $F_h(u)=\int\limits_{\bR^3}d^3\underline{x}\int\limits_0^d dz\, u(\underline{x},z)h(\underline{x},z)$ where $h\in C^\infty_{0,C}(Z)$, while $u\in\mathcal{C}^C(Z)$. At the same time, if we consider as generating functionals only those whose labeling space is $C^\infty_{0,I}(Z)$, we obtain the {\em extensible $*$-algebra} $\mathcal{A}^{C}_{ext}(Z)$, which is a $*$-subalgebra of both $\mathcal{A}^{C}(Z)$. Notice that $\mathcal{A}^{C}_{ext}(Z)$ plays a distinguished role as we will be able to define Hadamard states only for such algebra.
\end{remark}

The causal propagator for the Casimir system is constructed modifying the causal propagator of the Minkowski spacetime with the operator $N$. 
Thanks to the causal properties of $E$, when employed on test functions supported in a globally hyperbolic set $O$ strictly contained in $Z$ it holds that 
\[
E_Z(f,f') =  E(f,N(f'))   ,\qquad  f,f'\in C^\infty_0(O)
\]
because the reflections and the translations used in $N$ map the support of $f'$ in regions which are causally disjoint from $O$. The following proposition states that the local algebra of observables of a Casimir system cannot be distinguished from Klein-Gordon counterpart.  

\begin{proposition}\label{localcomparison-c}
Let $O$ be any globally hyperbolic open region strictly contained in $Z$. There exists a $*$-isomorphism between $\mathcal{A}^{KG}(O)\doteq\left.\mathcal{A}^{KG}(\bR^4)\right|_{O}$ and $\mathcal{A}^{C}(O)\doteq\left.\mathcal{A}^{C}(Z)\right|_O$. The isomorphism is implemented by the identity map.
\end{proposition}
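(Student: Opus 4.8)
The plan is to mirror the argument already used in Proposition~\ref{timeslice} for the Casimir--Polder system, now in the two-plate geometry. Both $\mathcal{A}^{KG}(O)$ and $\mathcal{A}^{C}(O)$ are, as associative unital $*$-algebras, generated by the linear functionals $F_f(u)=\int_O f\,u$ labelled by $f\in C^\infty_0(O)$, with the $*$-operation being complex conjugation in either case. Since complex conjugation is manifestly unaffected by the identification $F_f\leftrightarrow F_f$, the whole content of the statement reduces to verifying that the two $\star$-products agree on generators, i.e.\ that for all $f,f'\in C^\infty_0(O)$ and all $u\in\mathcal{C}^C(Z)$,
\begin{equation*}
\left(F_f\star_Z F_{f'}\right)(u)=F_f(u)F_{f'}(u)+\frac{i}{2}\,E_Z(f,f')
\stackrel{!}{=}F_f(u)F_{f'}(u)+\frac{i}{2}\,E(f,f')=\left(F_f\star F_{f'}\right)(u),
\end{equation*}
because higher orders in \eqref{algprod5} vanish on linear functionals. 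Hence everything comes down to the identity $E_Z(f,f')=E(f,f')$ for test functions supported in $O$.

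First I would record that, by Definition~\ref{Casimirpropagator}, $E_Z=\rho_Z\circ E\circ N$, so that $E_Z(f,f')=\big(f,\,E(N(f'))\big)_C$, the pairing being integration over $Z=\bR^3\times[0,d]$. Using the explicit form \eqref{imagesum} of $N$, one has $N(f')=\sum_{n\in\bZ}\big(\iota_{2nd}f'-\iota_{2nd}\iota_z f'\big)$, a locally finite sum of reflected and translated copies of $f'$. Since $\iota_z$ and $\iota_s$ are isometries of Minkowski spacetime, $E$ commutes with each, so $E(N(f'))=\sum_{n\in\bZ}\big(\iota_{2nd}E(f')-\iota_{2nd}\iota_z E(f')\big)$. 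Now the point is purely about supports: $O$ is a \emph{globally hyperbolic open region strictly contained} in $Z$, hence its closure sits in the open slab $\bR^3\times(0,d)$ and is bounded away from the plates $z=0,d$. Consequently, for every $(n,\pm)$ other than the trivial copy $n=0$ with the ``$+$'' (non-reflected) sign, the image region $\iota_{2nd}(\operatorname{supp} f')$ or $\iota_{2nd}\iota_z(\operatorname{supp} f')$ is pushed either across a plate or by a nonzero multiple of $2d$, and therefore lies outside the causal shadow of $O$; more precisely $\operatorname{supp}(f)$ is causally disjoint from it. Since $\operatorname{supp}(E(g))\subseteq J(\operatorname{supp} g)$, all those terms pair to zero against $f$. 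The surviving term is exactly $\big(f,E(f')\big)_C=\int_O f\,E(f')=E(f,f')$, which is the desired equality.

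The only genuine step requiring care — and the one I would present in detail — is the support/causality argument showing that all nontrivial images under $N$ are causally disjoint from $O$; everything else is bookkeeping. Concretely, because $O$ is globally hyperbolic and relatively compact-in-$z$ strictly inside the slab, there is $\varepsilon>0$ with $O\subset\bR^3\times[\varepsilon,d-\varepsilon]$, and the reflection $\iota_z$ together with the translations $\iota_{2nd}$ move the $z$-coordinate of $\operatorname{supp} f'$ into disjoint slabs whose $z$-ranges do not overlap that of $O$; since displacement purely in a spacelike direction produces spacelike-separated, hence causally disjoint, regions, $J(O)$ cannot reach any of them, forcing $f\cdot E(\iota_{2nd}^{(\pm)}f')=0$ on $Z$. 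This mirrors the last paragraph of the proof of Proposition~\ref{timeslice}, where the analogous fact $\langle f,E_{\bH^4}(f')\rangle=\langle f,E(f')\rangle$ was used, the role of the single reflection $\iota_z$ now being played by the full family generated by $\iota_z$ and $\iota_{2d}$. Having established $E_Z=E$ on $C^\infty_0(O)$, the displayed identity on generators holds, the identity map intertwines $\star_Z$ and $\star$ and commutes with $*$, and so extends to the asserted $*$-isomorphism $\mathcal{A}^{C}(O)\simeq\mathcal{A}^{KG}(O)$.
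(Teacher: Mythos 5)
Your overall strategy coincides with the paper's: the published proof is a one-line deferral to the argument of Proposition~\ref{timeslice}, supplemented by the observation that, for $f,f'\in C^\infty_0(O)$, all the reflected and translated copies of $\supp f'$ produced by $N$ are causally disjoint from $O$, so that $E_Z(f,f')=E(f,f')$ and the identity map intertwines $\star_Z$ with $\star$ on generators. Your reduction to generators, the remark that the higher orders of \eqref{algprod5} vanish on linear functionals, and the use of $E\circ\iota_z=\iota_z\circ E$ and $E\circ\iota_s=\iota_s\circ E$ all match the intended argument.

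The difficulty lies in the one step you yourself single out as ``the only genuine step requiring care''. The principle you invoke --- that since the images $\iota_{2nd}(\supp f')$ and $\iota_{2nd}\iota_z(\supp f')$ lie in $z$-slabs disjoint from that of $O$, and since ``displacement purely in a spacelike direction produces spacelike-separated, hence causally disjoint, regions'', the images lie outside $J(O)$ --- is false for extended regions. Two sets occupying disjoint $z$-slabs are causally connected as soon as their temporal separation exceeds their spatial separation: for example $(0,0,0,d/2)$ and $\iota_z(2d,0,0,d/2)=(2d,0,0,-d/2)$ sit in disjoint $z$-slabs, yet the second lies in the timelike future of the first because $2d>d$. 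Disjointness of the $z$-ranges therefore proves nothing by itself. What has to be established is the quantitative statement that any two points $p,q\in O$ satisfy $(t_p-t_q)^2<(x_p-x_q)^2+(y_p-y_q)^2+(z_p+z_q)^2$, together with the analogous inequalities for the remaining images, and this is exactly where global hyperbolicity of $O$ and its containment in the slab must be used \emph{jointly}. For a double cone $I^+(p')\cap I^-(q')$ this is a concrete computation: its spatial projection is the ellipsoid with foci $\mathbf{x}_{p'},\mathbf{x}_{q'}$ and major semi-axis $(t_{q'}-t_{p'})/2$, and containment of that ellipsoid in $0<z<d$ is precisely the condition excluding causal contact with the reflected copies. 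For an arbitrary globally hyperbolic $O$ the step is not automatic (the paper is, admittedly, just as terse on this point). As written, your proof substitutes an incorrect general principle for the one nontrivial assertion, so the key step remains unproved.
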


The proof of this proposition can be obtained along the guidelines of that of Proposition \ref{timeslice} together with the property of $E_Z$ stated above. 

\vskip .2cm

\noindent\textbf{\em Part 3 -- The on-shell algebra:} Having investigated the algebra probing kinematical configurations, we want to conclude our analysis by constructing the counterpart on the solutions to the equation of motion. This is tantamount to restricting the allowed configurations from $\mathcal{C}^{C}(Z)$ to $\mathcal{S}^{C}(Z)$. As outlined in Section \ref{notations} and in Section \ref{CPsystem} for a Casimir-Polder system, this entails that several functionals become redundant as they are automatically vanishing when evaluated on any solution. This calls for the identification and for the elimination of these observables via a suitable quotient. At a level of algebras the solution of this problem is contained in Proposition \ref{dynCas} and in the isomorphism between $\mS^C(Z)$ and $\frac{C^\infty_{tc,C}(\bR^4)}{P[C^\infty_{tc,C}(\bR^4)]}$. This suggests to consider the functionals $F_{[\zeta]}:\frac{C^\infty_{tc,C}(\bR^4)}{P[C^\infty_{tc,C}(\bR^4)]}\to\bR$ so that $F_{[\zeta]}([\alpha])=(\zeta,E(\alpha))_C$, where the right hand side is defined in \eqref{E-sympl2}. 

Notice that, still in view of Proposition \ref{dynCas}, we can rewrite each of these functionals also as $F_{[\zeta]}:\mS^{C}(Z)\to\bC$, thus as a genuine classical observable on the dynamical configurations of a Casimir system. The underlying philosophy is to single out via the labeling space $\frac{C^\infty_{0,C}(\bR^4)}{P[C^\infty_{0,C}(\bR^4)]}$ the generators of an on-shell algebra of observables for a Casimir system. As a preliminary step, we exhibit some relevant properties of these generating functionals, which justify their choice:

\begin{proposition}\label{obsC}
We call {\bf classical observable for a Casimir system} the map $F_{[\zeta]}:\frac{C^\infty_{tc,C}(\bR^4)}{P[C^\infty_{tc,C}(\bR^4)]}\to\bC$, $[\zeta]\in\frac{C^\infty_{0,C}(\bR^4)}{P[C^\infty_{0,C}(\bR^4)]}$, defined as
\begin{equation}\label{classicalobservable}
F_{[\zeta]}([\alpha])=(\zeta,E(\alpha))_C,
\end{equation}
where $\zeta$ and $\alpha$ are arbitrary representatives of $[\zeta]$ and $[\alpha]$ respectively. The collection of all classical observables $\mO^C(Z)$ is a vector space which is both separating and optimal in the sense of Proposition \ref{obsCP}. Furthermore $(\mO^C(Z),\sigma_C)$ is a symplectic space, $\sigma_C$ being defined in \eqref{E-sympl}.
\end{proposition}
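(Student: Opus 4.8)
The plan is to follow closely the scheme used for Proposition \ref{obsCP}, taking as structural input Proposition \ref{dynCas} (which identifies $\mathcal{S}^C(Z)$ with $\frac{C^\infty_{tc,C}(\bR^4)}{P[C^\infty_{tc,C}(\bR^4)]}$), Proposition \ref{sc-solutions} (the pairing $(\cdot,\cdot)_C$ of \eqref{E-sympl2} together with its representative-independence and the weak non-degeneracy of $\sigma_C$), and the second part of Proposition \ref{surjective} (the isomorphism $N:C^\infty_0(\mathring{Z})\to C^\infty_{0,I}(\bR^4)\subset C^\infty_{0,C}(\bR^4)$). There are four things to check: that \eqref{classicalobservable} does not depend on the chosen representatives, that the resulting assignment $[\zeta]\mapsto F_{[\zeta]}$ makes $\mathcal{O}^C(Z)$ a vector space isomorphic to $\frac{C^\infty_{0,C}(\bR^4)}{P[C^\infty_{0,C}(\bR^4)]}$, that $\mathcal{O}^C(Z)$ is separating and optimal, and that $\sigma_C$ of \eqref{E-sympl} endows it with a symplectic structure.

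First I would verify well-definedness of \eqref{classicalobservable}. Replacing $\alpha$ by $\alpha+P\rho$ with $\rho\in C^\infty_{tc,C}(\bR^4)$ leaves $E(\alpha)$ unchanged since $E\circ P=0$. Replacing $\zeta$ by $\zeta+P\rho$ with $\rho\in C^\infty_{0,C}(\bR^4)$ changes $(\zeta,E(\alpha))_C$ by $(P\rho,E(\alpha))_C$, which vanishes by exactly the integration-by-parts computation carried out in the proof of Proposition \ref{sc-solutions}: the boundary contributions at $z=0$ and $z=d$ drop because $\rho\in C^\infty_{0,C}(\bR^4)$ and, by Proposition \ref{dynCas}, $E(\alpha)$ vanishes on the hyperplanes $z=0$ and $z=d$.

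Next come the separating and optimal properties. For separating, let $[\alpha]\neq[\alpha']$; then $E(\alpha-\alpha')$, which by Proposition \ref{dynCas} lies in $\mathcal{S}^C(\bR^4)$, is not identically zero, and being odd under $z\mapsto-z$, $2d$-periodic in $z$ and vanishing on $\{z=0\}\cup\{z=d\}$, it must be nonzero at some point of $\mathring{Z}=\bR^3\times(0,d)$. By continuity there is $\beta\in C^\infty_0(\mathring{Z})$ with $\int_{\mathring{Z}}\beta\,E(\alpha-\alpha')\neq0$; setting $\zeta\doteq N(\beta)\in C^\infty_{0,I}(\bR^4)\subset C^\infty_{0,C}(\bR^4)$ one gets $F_{[\zeta]}([\alpha])-F_{[\zeta]}([\alpha'])=(\zeta,E(\alpha-\alpha'))_C=\int_{\mathring{Z}}\beta\,E(\alpha-\alpha')\neq0$, the middle equality following from $\left.N(\beta)\right|_Z=\beta$, which is immediate from \eqref{imagesum} since $\supp\beta\subset\mathring{Z}$. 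For optimality, that is to distinguish $[\zeta]\neq[\zeta']$ in $\frac{C^\infty_{0,C}(\bR^4)}{P[C^\infty_{0,C}(\bR^4)]}$ by a configuration, weak non-degeneracy of $\sigma_C$ provides $[\zeta'']\in\frac{C^\infty_{0,C}(\bR^4)}{P[C^\infty_{0,C}(\bR^4)]}$ with $(\zeta-\zeta',E(\zeta''))_C\neq0$; since $C^\infty_{0,C}(\bR^4)\subset C^\infty_{tc,C}(\bR^4)$, the representative $\zeta''$ also determines a configuration class $[\alpha]\in\frac{C^\infty_{tc,C}(\bR^4)}{P[C^\infty_{tc,C}(\bR^4)]}$, and then $F_{[\zeta]}([\alpha])\neq F_{[\zeta']}([\alpha])$, which is the form of optimality established in the proof of Proposition \ref{obsCP}. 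The same argument shows that $[\zeta]\mapsto F_{[\zeta]}$ is injective; since it is tautologically surjective onto $\mathcal{O}^C(Z)$ and linear in $\zeta$, this yields the vector-space isomorphism $\mathcal{O}^C(Z)\simeq\frac{C^\infty_{0,C}(\bR^4)}{P[C^\infty_{0,C}(\bR^4)]}$.

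Finally, transporting $\sigma_C$ of \eqref{E-sympl} along this isomorphism, bilinearity and antisymmetry are immediate and weak non-degeneracy is precisely the second part of Proposition \ref{sc-solutions}, so $(\mathcal{O}^C(Z),\sigma_C)$ is a symplectic space. I expect the only delicate point to be the bookkeeping of function spaces --- in particular ensuring that the pairing $(\cdot,\cdot)_C$, which integrates only over $z\in[0,d]$, is finite and representative-independent, that $N$-images restrict correctly to $Z$, and that elements of $C^\infty_{0,C}(\bR^4)$ may legitimately double as configuration labels --- but all of this has already been secured by Propositions \ref{dynCas}, \ref{surjective} and \ref{sc-solutions}, so the remaining work reduces to assembling those results.
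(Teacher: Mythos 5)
Your proposal is correct and follows essentially the same route as the paper: well-definedness via the integration-by-parts of Proposition \ref{sc-solutions}, the separating property via test functions supported in $\mathring{Z}$ (you make explicit the step $\zeta=N(\beta)$ where the paper appeals to standard functional-analytic separation), optimality via the weak non-degeneracy of the pairing, and the symplectic structure transported from Proposition \ref{sc-solutions}. The only cosmetic difference is that the paper proves optimality by letting $\alpha$ range over all of $C^\infty_{tc,C}(\bR^4)$ and concluding $E(\zeta)$ vanishes on the slab, whereas you cite the already-established non-degeneracy of $\sigma_C$ on compactly supported labels; both are valid and rest on the same prior results.
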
 

\begin{proof}
We notice that \eqref{classicalobservable} is a well-defined quantity whose right hand side does not depend on the representatives chosen, as one can infer by repeating slavishly the same reasoning as in Proposition \ref{sc-solutions} using additionally that $\left(\supp(\zeta)\cap\supp(E(\alpha)\right)\cap Z$ is compact.

Since $\mO_{[\zeta]}$ is linear in $[\zeta]$, $\mO^C(Z)$ is a vector space which is isomorphic to $\mS_{sc}^C(Z)$. Hence, since the latter is a symplectic space as proven in Proposition \ref{sc-solutions}, so is $\mO^C(Z)$ endowed with $\sigma_C$. We need only to show that the collection of classical observables is separating and optimal. The first descends from the following remark:$\frac{C^\infty_{tc,C}(\bR^4)}{P[C^\infty_{tc,C}(\bR^4)]}$ is isomorphic via $E$ to $\mS^C(Z)$ which in turn identifies a vector subspace of $\frac{C^\infty(\bR^3\times(0,d))}{P[C^\infty(\bR^3\times(0,d))]}$. With respect to the pairing we have introduced, standard 
arguments in functional analysis guarantee that $\frac{C^\infty_0(\bR^3\times(0,d))}{P[C^\infty_0(\bR^3\times(0,d))]}$ separates $\frac{C^\infty(\bR^3\times(0,d))}{P[C^\infty(\bR^3\times(0,d))]}$. Since $C^\infty_0(\bR^3\times(0,d))\subset C^\infty_{0,C}(Z)$ the sought statement holds true. 

To conclude we show that our choice is optimal. Suppose that there exists a classical observable generated by $\zeta\in C^\infty_{0,C}(\bR^4)$ such that $(\zeta,E(\alpha))_C=0$ for all $\alpha\in C^\infty_{tc,C}(\bR^4)$. Equivalently this entails that $(E(\zeta),\alpha)_C=0$. Since $\alpha$ is an arbitrary timelike compact function in $\bR^3\times (0,d)$, the same reasoning as for the scalar field on the whole Minkowski spacetime entails that $E(\zeta)$ must vanish thereon. In other words $\zeta\in P[C^\infty_{0,C}(\bR^4)]$, that is it generates the trivial class in $\frac{C^\infty_{0,C}(\bR^4)}{P[C^\infty_{0,C}(\bR^4)]}$. 
\end{proof} 

\noindent We have all the ingredients to introduce the following structure:

\begin{definition}\label{Calgebra}
We call {\bf on-shell $*$-algebra of observables for a Casimir system} the algebra $\left(\mathcal{A}^C_{on}(Z),\star_Z\right)$ generated by the functionals $F_{[\zeta]}:\mathcal{S}^C(Z)\to\bC$ with $[\zeta]\in\frac{C^\infty_{0,C}(\bR^4)}{P[C^\infty_{0,C}(\bR^4)]}$ such that $F_{[\zeta]}(u)=\int\limits_{\bR^3}d^3\underline{x}\int\limits_0^d dz\, \zeta(\underline{x},z)u(\underline{x},z)$, $u\in\mathcal{S}^C(Z)$.
\end{definition}

\noindent Let us show that our choice for the algebra of observables enjoys notable properties:

\begin{lemma}
The algebra $\mathcal{A}^C_{on}(Z)$ is causal and it satisfies the time-slice axiom.
\end{lemma}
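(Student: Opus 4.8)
The plan is to follow the proof of Proposition~\ref{timeslice} for the Casimir--Polder system, the only genuinely new feature being the $z$-periodicity of the configurations and of the labels. Throughout I shall use Proposition~\ref{obsC} (the generators $F_{[\zeta]}$ of $\mathcal A^C_{on}(Z)$ are indexed by $[\zeta]\in\frac{C^\infty_{0,C}(\bR^4)}{P[C^\infty_{0,C}(\bR^4)]}$), the fact established in the proof of Proposition~\ref{sc-solutions} that $N$ is built out of isometries of $(\bR^4,\eta)$, so that $E\circ N=N\circ E$ and $E^\pm\circ N=N\circ E^\pm$, and the observation (Remark~\ref{image-conf}) that every $\alpha\in C^\infty_{0,C}(\bR^4)$ is odd and $2d$-periodic along $z$, while $P$, $E$ and $E^\pm$ commute with $\iota_z$ and with the $z$-translation by $2d$.

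For causality, since only the zeroth and first functional derivatives of a generator are non-zero, \eqref{algprod5} immediately yields
\[
F_{[\zeta]}\star_Z F_{[\zeta^\prime]}-F_{[\zeta^\prime]}\star_Z F_{[\zeta]}=i\,\sigma_C([\zeta],[\zeta^\prime])=i\,(\zeta,E(\zeta^\prime))_C ,
\]
with $\sigma_C$ the symplectic form \eqref{E-sympl}. Writing $\zeta=N(\zeta_0)$, $\zeta^\prime=N(\zeta_0^\prime)$ with $\zeta_0,\zeta_0^\prime\in C^\infty_0(\bR^4)$ (Proposition~\ref{surjective}) and using $E\circ N=N\circ E$ together with \eqref{E-sympl2}, a change of variables in the integral over $\bR^3\times[0,d]$, legitimate because the integrand produced by $N$ is odd and $2d$-periodic along $z$, rewrites this commutator as
\[
i\int\limits_{\bR^4}d^4x\;\zeta_0(\underline x,z)\sum\limits_{n\in\bZ}\Big(E(\zeta_0^\prime)(\underline x,z+2nd)-E(\zeta_0^\prime)(\underline x,-z+2nd)\Big).
\]
By the support properties of $E^\pm$, this vanishes as soon as $\supp(\zeta_0)$ misses every $\iota_{2nd}$- and $\iota_z\iota_{2nd}$-image of $J^+(\supp\zeta_0^\prime)\cup J^-(\supp\zeta_0^\prime)$, which is exactly the condition that the two generators be causally disjoint inside $Z$ once reflections off both boundary planes are taken into account. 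Hence $\mathcal A^C_{on}(Z)$ is causal.

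For the time-slice axiom, let $\Sigma$ be a Cauchy surface of the ambient Minkowski spacetime and $\mathcal N$ a causally convex neighbourhood of it, which we may take invariant under $\iota_z$ and under $z$-translations, and let $\mathcal A^C_{on}(\mathcal N\cap Z)$ be the $*$-subalgebra of $\mathcal A^C_{on}(Z)$ generated by those $F_{[\zeta]}$ whose label admits a representative supported in $\mathcal N$. By Proposition~\ref{obsC} it suffices to prove that every $[\zeta]\in\frac{C^\infty_{0,C}(\bR^4)}{P[C^\infty_{0,C}(\bR^4)]}$ has such a representative. Mimicking \eqref{aux1}, pick Cauchy surfaces $\Sigma^\pm$ with $\Sigma\subset J^+(\Sigma^-)\cap J^-(\Sigma^+)\subset\mathcal N$ and a cutoff $\chi^+\in C^\infty(\bR^4)$ which depends only on $t$ (in particular is $z$-independent), equals $1$ on $J^+(\Sigma^+)$ and $0$ on $J^-(\Sigma^-)$, and put, for any representative $\zeta$,
\[
\widetilde\zeta\doteq\zeta-P\big(E^-(\zeta)+\chi^+E(\zeta)\big)=-[P,\chi^+]E(\zeta).
\]
Then $\widetilde\zeta-\zeta\in P[C^\infty_{0,C}(\bR^4)]$ and $\supp(\widetilde\zeta)\subseteq\supp(d\chi^+)\cap J(\supp\zeta)\subset\mathcal N$. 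The point to check is that $\widetilde\zeta$ still lies in $C^\infty_{0,C}(\bR^4)$ and not merely in $C^\infty_{tc,C}(\bR^4)$: it is smooth; it is odd and $2d$-periodic along $z$ because $P$, $E^\pm$ and, thanks to its $z$-independence, $\chi^+$ all commute with $\iota_z$ and with the $z$-translation by $2d$, so these symmetries are inherited from $\zeta$; and $\supp(\widetilde\zeta)\cap(\bR^3\times\{z\})$ is compact for each $z$, since $\supp(d\chi^+)$ lies in a finite time-slab and the intersection of $J(\supp\zeta)$ with such a slab is compact along the $(t,x,y)$-directions. This proves the time-slice axiom.

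The remaining bookkeeping — bilinearity and antisymmetry of $\star_Z$, the change of variables collapsing the $N$-sum to an integral over $\bR^4$, the elementary support estimate in the last step — is routine and I would not spell it out. The real difficulty, and the only place where the reasoning departs from the globally hyperbolic case, is keeping every manipulation inside the class $C^\infty_{0,C}(\bR^4)$: the non-compactness along $z$ forbids the use of a generic cutoff in the time-slice construction and forces the choice of a $z$-independent $\chi^+$, and the same periodicity is what makes the reflections hidden in $N$ enter the notion of causal disjointness in the commutator. Note finally that, in contrast with Proposition~\ref{timeslice}, one cannot here fall back on an identification with a subalgebra of $\mathcal A^{KG}_{on}(\bR^4)$, by Remark~\ref{nogamenolife}, so both properties must be established intrinsically as above.
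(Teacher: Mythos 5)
Your proposal is correct and follows essentially the same route as the paper: causality is reduced, via the commutator of generators, to the support properties of the causal propagator acting through the symplectic form $\sigma_C$ of Proposition \ref{sc-solutions}, and the time-slice axiom is obtained by transplanting the construction \eqref{aux1} from Proposition \ref{timeslice} with a $z$-independent cutoff $\chi^+$ so that all manipulations stay inside $C^\infty_{0,C}(\bR^4)$. The paper states both steps only in outline ("mutatis mutandis"), so your write-up simply supplies the details — in particular the unfolding of the periodic representatives via $N$ and the verification that $\widetilde\zeta$ retains the oddness, $2d$-periodicity and fibrewise compactness — which are exactly the intended ones.
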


\begin{proof}
The property of an algebra being causal is tantamount to showing that spacelike separated observables do commute. It suffices to check it for all generators and it is equivalent to proving that, for all $[\zeta], [\zeta^\prime]\in \frac{C^\infty_{0,C}(\bR^4)}{P[C^\infty_{0,C}(\bR^4)]}$, it holds $\sigma_C([\zeta],[\zeta^\prime])=0$ if there exists two representative $\zeta,\zeta^\prime$ which are spacelike separated. On account of Proposition \ref{sc-solutions} this is a consequence of the support properties of the causal propagator. 

With respect to the time-slice axiom, mutatis mutandis, the procedure is identical to the one outlined in the proof of Lemma \ref{timeslice} and we shall thus not repeat it.
\end{proof}

To conclude we remark that $\mathcal{A}^{C}_{on}(Z)$ could have been realized also as the quotient between $\mathcal{A}^C(Z)$ and the $*$-ideal generated by elements of the form $Ph$, where $P$ is the Klein-Gordon operator and $h\in C^\infty_{0,C}(Z)$.

\subsection{Hadamard states for a Casimir system}

In this section we discuss a possible way to construct a certain class of states for the Casimir system. 
We shall restrict our attention to those which are quasi-free and have suitable regularity. In particular we follow the same philosophy used in the previous section, namely  we will focus our attention on those states from which stems a prescription to construct Wick polynomials which coincides with the standard one if we restrict our attention to any globally hyperbolic submanifold $O\subset Z$. Well-posedness of this line of thought is a by-product of Proposition \ref{localcomparison-c}, which guarantees that $\mathcal{A}^{C}(O)$ is $*$-isomorphic to $\mathcal{A}^{KG}(O)$. Accordingly,

\begin{definition}\label{def:hadamard-casimir}
A state $\omega:\mathcal{A}^C(Z)\to\bC$ is of {\bf Hadamard} form if it is normalized, positive, quasi-free and, if, for any globally hyperbolic submanifold $O\subset Z$, the restriction of $\omega$ to $\mathcal{A}^C(O)$ is such that there exists $\omega_2\in\mathcal{D}^\prime(O\times O)$ whose wavefront set is 
$$WF(\omega_2)=\left\{(x,x^\prime,k_x,-k_{x^\prime})\in T^*(O\times O)\setminus \{\mathbf 0\}\;|\;
(x,k_x)\sim(x^\prime,k_{x^\prime}),\;k_x\triangleright 0\right\}, $$
 and, for all $F_{h},F_{h^\prime}\in\mathcal{A}^{C}(O)$
$$
\omega\left(F_h\star_{Z} F_{h^\prime}\right)=\omega_2(h,h^\prime),\quad h, h^\prime \in C^\infty_0(O).
$$
\end{definition}

\vskip .2cm

As for a Casimir-Polder system we want to exhibit explicit examples of Hadmard states for a Casimir system and our initial plan is to build them starting from a quasi-free counterpart $\widetilde\omega:\mathcal{A}^{KG}(\bR^4)\to\bC$, which is of Hadamard form itself. In other words we would like to mimic the content of Proposition \ref{comparison}. Alas, there does not exist a $*-$homomorphism between $\mathcal{A}^C(Z)$ and $\mathcal{A}^{KG}(\bR^4)$ and hence no corresponding pull-back of states. We shall avoid such hurdle by working directly at the level of the two-point function adapting the {\bf image method} used previously in Definition \ref{Casimirpropagator} for the causal propagator. Notice that, with this procedure, we will be constructing actually a state for $\mathcal{A}^C_{ext}(Z)$.

More precisely our starting point is any Hadamard state $\widetilde\omega:\mathcal{A}^{KG}(\bR^4)\to\bC$, whose associated two-point function $\widetilde\omega_2\in\mathcal{D}^\prime(\bR^4\times\bR^4)$. In view of Definition \ref{Casimirpropagator}, applying the image method to $\widetilde\omega_2$ is tantamount to proving that $\widetilde\omega_2\circ (\mathbb{I}\otimes N)\in\mathcal{D}^\prime(\mathring{Z}\times \mathring{Z})$. Notice that the outcome does not define an {\em image state} for $\mathcal{A}^C(Z)$ but only for $\mathcal{A}^C_{ext}(Z)$.

Since our goal is to exhibit explicit cases where this procedure works, we restrict the attention only to quasi-free states for $\mathcal{A}^{KG}(\bR^4)$ whose associated two-function has an integral kernel which is invariant under the simultaneous action on both entries of both $\iota_z$, the reflection along the hyperplane $z=0$ and of $\iota_s$, the translation of step $s$ along the $z$-direction, $s\in\bR$:
\begin{equation}\label{eq:symmetry-casimir}
\widetilde{\omega}_2(\iota_z(f),\iota_z(f^\prime)) = \widetilde{\omega}_2(\iota_s(f),\iota_s(f^\prime))= \widetilde{\omega}_2(f,f^\prime),
\end{equation}
where $f,f^\prime\in C^\infty_0(\bR^4)$. As an additional ingredient we recall, that all two-points functions of Hadamard form differ only by a smooth integral kernel. Hence, since in this section we are interested in a massless real scalar field, we can split
\begin{equation}\label{eq:hadamard-casimir}
\widetilde{\omega}_2(x,x^\prime)=\widetilde{\omega}_2^0(x,x^\prime)+W(x,x^\prime)
\end{equation}
where $W\in C^\infty(\bR^4\times\bR^4)$, while $\widetilde\omega_2^0(x,x^\prime)$ is the integral-kernel of the two-point function of the Poincar\'e vacuum. Therefore, we will analyze separately $W(x,x^\prime)$ and $\widetilde\omega_2^0(x,x^\prime)$ starting from the latter, which fulfills the requirements of \eqref{eq:symmetry-casimir}. Recall that
\begin{equation}\label{eq:2pt-mink}
\widetilde{\omega}_2^0(f,f^\prime) \doteq \lim_{\epsilon\to 0^+} \frac{1}{4\pi^2}\int\limits_{\bR^4\times\bR^4} d^4x\, d^4x^\prime   \frac{f(x) f^\prime(x^\prime)}{-(x^0-x^{\prime 0}-i\epsilon)^2+(x^1-x^{\prime 1})^2+(x^2-x^{\prime 2})^2+(x^3-x^{\prime 3})^2}.
\end{equation}
Upon Fourier transform, we can rewrite the last expression as
\begin{equation}\label{eq:2pt-ft}
\widetilde{\omega}_2^0(f,f^\prime) =\int_\bR dt \int_\bR dt' \int_{\bR^3} d^3{\bf k}   \frac{1}{2|{\bf k}|}   e^{-i (t-t')|{\bf k}|}    \widehat{f}(t,{\bf k}) \overline{\widehat{f}^\prime(t',{\bf k})}
\end{equation}
where $\widehat{f}(t,{\bf k})$ is the three dimensional spatial Fourier transform\footnote{Our convention for the spatial Fourier transform is the following: $\widehat{f}(t,{\bf k}) \doteq \frac{1}{(\sqrt{2\pi})^3} \int d^3 {\bf x} e^{i{\bf k}\cdot x}f(t,{\bf x}) $.} of $f(t,{\bf x})$.

\begin{proposition}\label{PoincN}
Let $\widetilde{\omega}^0_2$ be the two-point function of the Poincar\'e vacuum for a real, massless scalar field on Minkowski spacetime.
Then $\omega^0_2\doteq\widetilde\omega_2^0\circ\left(\mathbb{I}\otimes N\right)=\widetilde\omega_2^0\circ\left(N\otimes\mathbb{I}\right)\in\mathcal{D}^\prime(\mathring{Z}\times\mathring{Z})$. Furthermore the integral kernel of $\omega^0_2$ can be written as the $\epsilon\to 0$ limit of the following $\epsilon$-regularized integral kernel:
\begin{equation}\label{eq:2pt-c0-limit}
\frac{1}{8\pi d\chi_\epsilon}\left(\frac{\sinh\frac{\pi\chi_\epsilon}{d}}{\cosh\frac{\pi\chi_\epsilon}{d}-\cos\left(\frac{\pi}{d}(z-z^\prime)\right)}-
\frac{\sinh\frac{\pi\chi_\epsilon}{d}}{\cosh\frac{\pi\chi_\epsilon}{d}-\cos\left(\frac{\pi}{d}(z+z^\prime)\right)}\right)
\end{equation}
where $\chi_\epsilon \doteq -(\underline x^0-\underline x^{\prime 0}-i\epsilon)^2 + (\underline x^1-\underline x^{\prime 1})^2 + (\underline x^2-\underline x^{\prime 2})^2$.
\end{proposition}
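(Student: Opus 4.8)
The plan is to reduce the statement to a concrete computation involving the Fourier representation \eqref{eq:2pt-ft} and the Poisson-type summation built into the operator $N$ of \eqref{imagesum}. First I would record that, since $N$ is a finite-times-iterated sum of pull-backs by the isometries $\iota_z$ and $\iota_{2nd}$, and since the Poincaré vacuum two-point function $\widetilde\omega_2^0$ satisfies the invariance \eqref{eq:symmetry-casimir}, the two candidate distributions $\widetilde\omega_2^0\circ(\mathbb I\otimes N)$ and $\widetilde\omega_2^0\circ(N\otimes\mathbb I)$ agree formally term by term; the only genuine issue is whether the $z$-series converges in $\mathcal D'(\mathring Z\times\mathring Z)$. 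For this I would test against $f,f^\prime\in C^\infty_0(\mathring Z)$ and use the representation \eqref{eq:2pt-ft}: smearing $\widetilde\omega_2^0$ in the second slot against $N(f^\prime)$ produces $\sum_{n}\int dt\,dt^\prime\int d^3{\bf k}\,\frac{1}{2|{\bf k}|}e^{-i(t-t^\prime)|{\bf k}|}\widehat f(t,{\bf k})\overline{\widehat{f^\prime_n}(t^\prime,{\bf k})}$, where $f^\prime_n$ is $f^\prime$ shifted/reflected along $z$. Because $f^\prime$ has compact support in $\mathring Z$, the spatial Fourier transforms of the shifted copies carry oscillatory factors $e^{2ind k_z}$ and the sum over $n$ is an honest distributional Fourier series in the $z,z^\prime$ variables; convergence in $\mathcal D'$ then follows from the standard fact that Poisson summation converts a compactly supported profile into a locally finite (in fact smooth away from the reflection loci) periodic kernel. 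This is essentially the argument already invoked for $E_Z$ in Definition \ref{Casimirpropagator} and Proposition \ref{sc-solutions}, now run with the positive-frequency kernel instead of the commutator.

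Next I would establish the closed form \eqref{eq:2pt-c0-limit}. The cleanest route is to start from the $\epsilon$-regularized Minkowski kernel $\widetilde\omega_2^0(x,x^\prime)=\frac{1}{4\pi^2}\big(\chi_\epsilon+(z-z^\prime)^2\big)^{-1}$ with $\chi_\epsilon$ as defined in the statement, and to apply $N$ in the second entry, which by \eqref{imagesum} replaces $(z-z^\prime)^2$ by the image sum: the result is
\[
\omega_2^0(x,x^\prime)=\frac{1}{4\pi^2}\sum_{n\in\mathbb Z}\left(\frac{1}{\chi_\epsilon+(z-z^\prime+2nd)^2}-\frac{1}{\chi_\epsilon+(z+z^\prime+2nd)^2}\right).
\]
Each of the two series is of the form $\sum_n\big(\chi_\epsilon+(a+2nd)^2\big)^{-1}$, which is summed by the classical identity $\sum_{n\in\mathbb Z}\frac{1}{b^2+(a+2nd)^2}=\frac{\pi}{2db}\,\frac{\sinh(\pi b/d)}{\cosh(\pi b/d)-\cos(\pi a/d)}$, valid for $b=\sqrt{\chi_\epsilon}$ with the branch fixed by $\mathrm{Re}\,b>0$ (which holds for $\epsilon>0$ since $\chi_\epsilon$ then has positive real part or is off the negative real axis). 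Applying this with $a=z-z^\prime$ and $a=z+z^\prime$ respectively, and absorbing the prefactor $\frac{1}{4\pi^2}\cdot\frac{\pi}{2d}=\frac{1}{8\pi d}$, yields exactly \eqref{eq:2pt-c0-limit} with $\chi_\epsilon^{1/2}$ written as $\chi_\epsilon$ in the displayed hyperbolic arguments — here I would take care to state that the symbol $\chi_\epsilon$ in \eqref{eq:2pt-c0-limit} is understood with the appropriate square root, or adjust notation accordingly.

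Finally I would check that the $\epsilon\to0$ limit of \eqref{eq:2pt-c0-limit} reproduces, as a distribution on $\mathring Z\times\mathring Z$, the object constructed by the image method, i.e. that interchanging the limit with the smearing is legitimate. This is where the main obstacle lies: one must control the $n=0$ direct term, whose $\epsilon\to0$ limit is the standard Hadamard singularity $\frac{1}{4\pi^2}(\chi_0+(z-z^\prime)^2+i0)^{-1}$, together with the reflected $n=0$ term $\frac{1}{4\pi^2}(\chi_0+(z+z^\prime)^2+i0)^{-1}$, which is \emph{smooth} on $\mathring Z\times\mathring Z$ because $z+z^\prime>0$ there, and all $n\neq0$ terms, which are likewise smooth on $\mathring Z\times\mathring Z$ since the corresponding null separations cannot be achieved inside the strip. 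Thus the only contribution to the wave front set is the diagonal $n=0$ term, whose $i\epsilon$-prescription is inherited unchanged from \eqref{eq:2pt-mink}; dominated convergence applied to the tail $\sum_{|n|\geq1}$ (uniformly in $\epsilon$, using that the denominators are bounded below on $\supp f\times\supp f^\prime$) then secures the interchange of limit and sum, and with it the claimed identity of distributions. I would remark in passing that this same splitting is precisely what makes $\omega_2^0$ of Hadamard form on every globally hyperbolic $O\subset Z$, anticipating the subsequent development, but the present proposition asks only for the well-definedness on $\mathring Z\times\mathring Z$ and the explicit kernel, which the above completes.
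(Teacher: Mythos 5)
Your derivation of the closed form is sound and matches the paper's: both rest on the classical summation identity $\sum_{n}\bigl(a^2+(b+n)^2\bigr)^{-1}=\frac{\pi}{a}\frac{\sinh(2\pi a)}{\cosh(2\pi a)-\cos(2\pi b)}$ applied to the $\epsilon$-regularized image sum, and your observation that the symbol $\chi_\epsilon$ in \eqref{eq:2pt-c0-limit} must really be read as a square root of the quadratic form is a correct and worthwhile remark (the paper's own proof silently switches to $\chi^2=-(\underline x^0-\underline x^{\prime 0})^2+\dots$). The genuine problem is in your final step. You claim that the reflected $n=0$ term $\bigl(\chi_0+(z+z^\prime)^2+i0\bigr)^{-1}$ and all $n\neq 0$ terms are smooth on $\mathring Z\times\mathring Z$ ``because the corresponding null separations cannot be achieved inside the strip.'' This is false: the strip is unbounded in time, so for any $n$ there are pairs $(x,x^\prime)\in\mathring Z\times\mathring Z$ with $(t-t^\prime)^2=(x^1-x^{\prime1})^2+(x^2-x^{\prime2})^2+(\pm z-z^\prime+2nd)^2$, on which the corresponding denominator vanishes. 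These are exactly the reflected and translated null geodesics that the paper later exhibits in the wave front set of $\omega_2$; if your smoothness claim were true, that proposition would be empty. As a consequence, your dominated-convergence step --- which needs the denominators of all $|n|\geq 1$ terms to be bounded below on $\supp f\times\supp f^\prime$ --- fails as soon as the time extent of the supports exceeds the relevant multiple of $2d$.

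The gap is repairable but requires a different argument from the one you give. Either note that on any fixed compact subset of $\mathring Z\times\mathring Z$ only finitely many terms of the image sum can be singular, that each such term is individually a distribution (being the pull-back of the Minkowski kernel by an isometry transverse to the diagonal), and that the remaining tail converges uniformly with its derivatives; or do what the paper does, namely perform the summation first at fixed $\epsilon>0$ to get the closed form \eqref{eq:2pt-c0-limit}, establish that it extends to an analytic function of the complexified coordinates bounded by $C\epsilon^{-2}$ near the real boundary, and invoke H\"ormander's theorem on distributional boundary values of analytic functions (Theorem 3.1.15 of \cite{Hormander1}) to conclude that the $\epsilon\to 0^+$ limit exists in $\mathcal D^\prime(\mathring Z\times\mathring Z)$. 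The latter route avoids any term-by-term discussion of singular supports and is the reason the paper never needs your (false) smoothness claim.
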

\begin{proof}
With respect to the standard Cartesian coordinates (but keeping the notation $x=(\underline x,z)$) and fixing $e^0 = (1,0,0,0)$ and $e^3= (0,0,0,1)$, we can write the formal expression
\begin{gather*}
\widetilde{\omega}_2^0(f,Nf^\prime) = \lim_{\epsilon\to 0^+} \int d^4x d^4x^\prime 
\left[\widetilde{\omega}_2^0(x+i\epsilon e^0,x^\prime) - \widetilde{\omega}_2^0(x+i\epsilon e^0,\iota_z x^\prime)\right] f(x) \sum_{n} f^\prime(x^\prime+2nd e^3).
 \end{gather*}
 
Up to a change of variables of integration for every element of the sum, we obtain 
\[
\lim_{\epsilon\to 0^+} \int  d^4x d^4x^\prime
 \sum_{n} \left[\widetilde{\omega}_2^0(x+i\epsilon e^0,x^\prime+2dn e^3) - \widetilde{\omega}_2^0(x+i\epsilon e^0,\iota_z x^\prime+2dn e^3)\right] f(x) f^\prime(x^\prime).
\] 
For every $\epsilon>0$,  
\[
\left(\widetilde{\omega}_2^0\circ\left(\mathbb{I}\otimes N\right)\right)(x+i\epsilon e^0,x^\prime) = \lim_{m\to\infty}\sum_{|n|<m} \left[\widetilde{\omega}_2^0(x+i\epsilon e^0,x^\prime+2dn e^3) - \widetilde{\omega}_2^0(x+i\epsilon e^0,\iota_z x^\prime+2dn e^3)\right].
\]
If we recall that for complex variables $a,b\in\bC$, it holds -- see \cite[\S 1.445]{tableofintegrals} $$\sum\limits_{n=-\infty}^\infty\frac{1}{a^2+(b+n)^2}=\frac{\pi}{a}\frac{\sinh(2\pi a)}{\cosh(2\pi a)-\cos(2\pi b)},$$ and if we recall the form of $\widetilde{\omega}_2^0(x,x^\prime)$ given in \eqref{eq:2pt-mink} we can show that the $\widetilde{\omega}_2^0\circ \left(\mathbb{I}\otimes N\right)$  converges to
\[
\omega_2^0(x,x^\prime)=\lim_{\epsilon\to0^+}\frac{1}{8\pi d\chi_\epsilon}\left(\frac{\sinh\frac{\pi\chi_\epsilon}{d}}{\cosh\frac{\pi\chi_\epsilon}{d}-\cos\left(\frac{\pi}{d}(x^3-y^3)\right)}-
\frac{\sinh\frac{\pi\chi_\epsilon}{d}}{\cosh\frac{\pi\chi_\epsilon}{d}-\cos\left(\frac{\pi}{d}(x^3+y^3)\right)}\right).
\]
in the limit of $n\to \infty$. 

We interpret $\underline x^0-\underline x^{\prime 0}+i\epsilon$ as an extension of $\underline x^0-\underline x^{\prime 0}$ to the complex plane and we investigate the properties of \eqref{eq:2pt-c0-limit} as an analytic function. Notice that $\sinh(\xi)/\xi$ is entire analytic as a function of $\xi^2$. Hence its composition with $\xi^2=(\pi/d)^2\chi^2(\underline x,\underline x^\prime)=(\pi/d)^2(-(\underline x^0-\underline x^{\prime 0})^2 + (\underline x^1-\underline x^{\prime 1})^2 + (\underline x^2-\underline x^{\prime 2})^2)$ is in turn entire analytic itself on $\mathbb{C}^8$.  Furthermore, since the function $1/(\cosh(\alpha)-\cos(\beta))$ can be expanded in Laurent series in terms of  $\alpha^2$ and $\beta^2$ whenever $\cosh(\alpha) \neq \cos(\beta)$, this result applies to our scenario whenever $\underline x^0-\underline x^{\prime 0}+i\epsilon$ has a sufficiently large imaginary component while the other coordinates have a small imaginary part. Under these conditions we can conlcude the existence of a domain of analyticity for \eqref{eq:2pt-c0-limit}. Notice that a boundary component of such domain is obtained constraining all spatial coordinates to be real and taking the limit $\epsilon=\Im(\underline x^0-\underline x^{\prime 0})$ to $0^+$. Furthermore, by direct inspection, \eqref{eq:2pt-c0-limit} is bounded up to a multiplicative constant by $\epsilon^{-2}$, close to the mentioned boundary component. Hence we can apply Theorem 3.1.15 of H\"ormander \cite{Hormander1} to conclude that the boundary value of \eqref{eq:2pt-c0-limit} at $\epsilon=0$ is itself a distribution.
\end{proof}

\noindent To conclude that $\omega^0_2(x,x^\prime)$ defines a state on $\mathcal{A}^C_{ext}(Z)$ we prove the following:

\begin{proposition}\label{pr:vacuum-positivity}
The distribution ${\omega}_2^0\in\mathcal{D}^\prime(\mathring{Z}\times\mathring{Z})$ built in Proposition \ref{PoincN} is the two-point function of a quasi-free state $\omega^0:\mathcal{A}^C_{ext}(Z)\to\bC$.
\end{proposition}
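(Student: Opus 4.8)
The plan is to reduce the claim to two properties of the bi-distribution $\omega^0_2$ and then establish them. As an abstract unital $*$-algebra, $\mathcal{A}^C_{ext}(Z)$ is the CCR-type algebra generated by the $F_f$, $f\in C^\infty_0(\mathring{Z})$, subject to linearity in $f$ and to the relation $F_f\star_Z F_{f'}-F_{f'}\star_Z F_f=i\langle f,E_Z(f')\rangle\,\mathbb{I}$. Hence, by the standard one-particle-structure / Fock-space construction of quasi-free states, it suffices to check: \emph{(i)} the antisymmetric part of $\omega^0_2$ reproduces this commutator, $\omega^0_2(f,f')-\omega^0_2(f',f)=i\langle f,E_Z(f')\rangle$ for real $f,f'\in C^\infty_0(\mathring{Z})$; and \emph{(ii)} $\omega^0_2$ is of positive type, $\omega^0_2(\bar f,f)\ge 0$ for every $f\in C^\infty_0(\mathring{Z})$. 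Granted \emph{(i)} and \emph{(ii)}, the quasi-free Wick prescription of Definition \ref{def:hadamard-casimir} extends $\omega^0_2$ to a normalized, positive, quasi-free functional $\omega^0$ on $\mathcal{A}^C_{ext}(Z)$ with $\omega^0(F_f\star_Z F_{f'})=\omega^0_2(f,f')$, which is precisely the assertion.

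Property \emph{(i)} is immediate from Proposition \ref{PoincN}: since $\omega^0_2=\widetilde\omega^0_2\circ(\mathbb{I}\otimes N)=\widetilde\omega^0_2\circ(N\otimes\mathbb{I})$, one has $\omega^0_2(f,f')-\omega^0_2(f',f)=\widetilde\omega^0_2(f,Nf')-\widetilde\omega^0_2(Nf',f)$, and because the antisymmetric part of the Poincar\'e-vacuum two-point function is $\frac{i}{2}E$, this equals $i\,\langle f,E(Nf')\rangle$; as $\mathrm{supp}\,f\subset\mathring{Z}$, the Minkowski pairing $\langle f,E(Nf')\rangle$ agrees with $\langle f,\rho_Z E(Nf')\rangle=\langle f,E_Z(f')\rangle$ by Definition \ref{Casimirpropagator}. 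One may equally read \emph{(i)} off the explicit kernel \eqref{eq:2pt-c0-limit}.

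Property \emph{(ii)} is the substantive point, and the idea is that $\omega^0_2$ is, up to a factor $2$, the \emph{odd part} of a genuine Wightman function. Indeed, the first summand of \eqref{eq:2pt-c0-limit} is the boundary value of the ground-state two-point function $\omega^{\mathrm{cyl}}_2$ of the massless Klein-Gordon field on the ultrastatic, globally hyperbolic spacetime $\bR^{1,2}\times S^1$ whose circle has circumference $2d$ --- this is precisely the periodic image sum $\sum_n\widetilde\omega^0_2(\,\cdot\,,\iota_{2nd}\,\cdot\,)$ already evaluated in the proof of Proposition \ref{PoincN} --- while the second summand is its pull-back under the reflection isometry $z'\mapsto-z'$; hence $\omega^0_2=\big(\omega^{\mathrm{cyl}}_2-\omega^{\mathrm{cyl}}_2\circ(\mathbb{I}\otimes\iota_z)\big)\big|_{\mathring{Z}\times\mathring{Z}}$. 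Now $\omega^{\mathrm{cyl}}_2$ is of positive type, being the two-point function of a bona fide state, and is invariant under $\iota_z$ acting simultaneously on both entries. For $f\in C^\infty_0(\mathring{Z})$ put $\widetilde f\doteq f-\iota_z f$, a smooth compactly supported function on the cylinder whose support avoids the reflection fixed locus $z\in\{0,d\}$; expanding $\omega^{\mathrm{cyl}}_2(\overline{\widetilde f},\widetilde f)$ into its four terms and using $\omega^{\mathrm{cyl}}_2(\overline{\iota_z f},\iota_z f)=\omega^{\mathrm{cyl}}_2(\bar f,f)$ together with $\omega^{\mathrm{cyl}}_2(\overline{\iota_z f},f)=\omega^{\mathrm{cyl}}_2(\bar f,\iota_z f)$ yields $\omega^{\mathrm{cyl}}_2(\overline{\widetilde f},\widetilde f)=2\,\omega^0_2(\bar f,f)$; since the left-hand side is nonnegative, so is $\omega^0_2(\bar f,f)$.

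The main obstacle is making this last identification rigorous at the distributional level: that the periodic image sum, minus its reflection, is genuinely the restriction to $\mathring{Z}\times\mathring{Z}$ of the cylinder Wightman function, with the $\epsilon\to0^+$ boundary value, the $n\to\infty$ summation and the restriction all interchangeable --- much of which is already delivered by the convergence and H\"ormander-type analysis behind Proposition \ref{PoincN}, but which must also be run for the reflected term, and one must check that $\widetilde f$ lies in the labelling space of the cylinder algebra. If one prefers to avoid the auxiliary spacetime, \emph{(ii)} can be obtained directly: with $N_m f\doteq\sum_{|n|<m}\iota_{2nd}(f-\iota_z f)\in C^\infty_0(\bR^4)$ one has $\widetilde\omega^0_2(\overline{N_m f},N_m f)\ge 0$, and translation invariance of $\widetilde\omega^0_2$ gives $\widetilde\omega^0_2(\overline{N_m f},N_m f)=2(2m-1)\,\omega^0_2(\bar f,f)+O(1)$ as $m\to\infty$, the remainder converging absolutely because $\int(f-\iota_z f)=0$ upgrades the large-$n$ decay of $\widetilde\omega^0_2(\overline{f-\iota_z f},\iota_{2nd}(f-\iota_z f))$ to $O(n^{-4})$; dividing by $2m-1$ and letting $m\to\infty$ forces $\omega^0_2(\bar f,f)\ge 0$. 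A third, equivalent route expands $\omega^0_2$ in the Dirichlet modes $\{\sqrt{2/d}\,\sin(n\pi z/d)\}_{n\ge1}$ on $[0,d]$, rendering positivity manifest.
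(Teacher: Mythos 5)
Your proposal is correct, and its main argument is genuinely different from the paper's. The paper proves positivity by a mode expansion: it restricts to factorized test functions $f=f_\perp f_z$ (dense, hence sufficient), Fourier-transforms along $z$, uses Poisson summation to write $\widehat{Nf_z}$ as a Dirac comb supported at $\xi=n\pi/d$, and reads off
$\widetilde\omega^0_2(\bar f,Nf)=\sum_{n\geq 1}\widehat{w}_2(n\pi/d)\,|f_n-f_{-n}|^2\geq 0$
from the pointwise positivity of the spectral density $\widehat{w}_2$; the only delicate point there is the delta at $\xi=0$, killed by the vanishing of $\widehat f_z(\xi)-\widehat f_z(-\xi)$ at the origin together with boundedness of $\widehat w_2$ near zero. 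This is exactly the ``third route'' you mention but do not carry out. Your primary (cylinder) route is structurally illuminating --- it explains the positivity as antisymmetrization of a genuine Wightman function on $\bR^{1,2}\times S^1_{2d}$, and your four-term computation $\omega^{\mathrm{cyl}}_2(\overline{\widetilde f},\widetilde f)=2\omega^0_2(\bar f,f)$ is correct --- but, as you yourself flag, it defers the substantive issue to the unproven (within this paper) assertion that the periodic image sum is of positive type, i.e.\ that $\omega^{\mathrm{cyl}}_2$ is a bona fide state; the zero mode on the circle, which is the cylinder avatar of the paper's $\xi=0$ delta, is harmlessly projected out because $\widetilde f$ is odd, but the positivity of the periodized kernel itself still needs an argument of exactly the kind the paper supplies. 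Your second, Ces\`aro-type route closes that gap in a self-contained way: the counting identity $\widetilde\omega^0_2(\overline{N_mf},N_mf)=\sum_{|k|\leq 2m-2}(2m-1-|k|)\,\widetilde\omega^0_2(\bar g,\iota_{2kd}g)$ with $g=f-\iota_zf$, the observation that $\int g=0$ cancels the $O(k^{-2})$ and $O(k^{-3})$ terms in the large-$k$ expansion of the vacuum kernel so that the weighted remainder converges absolutely, and the division by $2m-1$ are all sound, and this version works for arbitrary $f\in C^\infty_0(\mathring Z)$ without the factorization/density step the paper relies on. What the paper's approach buys is an explicit positive spectral representation that is reused almost verbatim in Proposition \ref{pr:image-method-casimir} for the smooth correction $W$; what yours buys is independence from the Fourier machinery and a cleaner conceptual reading of where the positivity comes from. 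Your verification of the commutator property (i) via $E(Nf')=E_Z(f')$ on test functions supported in $\mathring Z$ is also correct and makes explicit what the paper dismisses as ``the properties of the Poincar\'e vacuum.''
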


\begin{proof}
In view of the previous proposition and of the properties of the Poincar\'e vacuum, it remains to be shown that $\omega$ is positive.
We shall check it for test functions $f$ and $f^\prime$ that can be factorized in the $z-$direction, namely of the form 
$f^{(\prime)}(\underline{x},z)=f^{(\prime)}_\perp(\underline{x})f^{(\prime)}_z(z)$
 where $f^{(\prime)}_\perp\in C^\infty_0(\mathbb{R}^3)$ and where $f^{(\prime)}_z\in C^{\infty}_0((0,d))$. Notice that, although we are not exhausting all possible elements of $C^\infty_0(\mathring{Z})$, we are still considering a dense subset, which suffices as far as positivity is concerned. With respect to this kind of functions we can introduce the following distribution on $C^{\infty}_0((0,d)\times (0,d))$
\[
 w^{f^\prime_\perp,f_\perp} (f^\prime_z,f_z) \doteq \widetilde{\omega}^0_2({f^\prime_\perp}f^\prime_z,f_\perp f_z) = \lim_{\epsilon\to 0^+}\int_0^d dz \int_0^d dz'  w_{2,\epsilon}^{f^\prime_\perp,f_\perp}(z-z') f_z(z) f^\prime_z(z'),
\]
where as usual the limits are meant in the weak sense. Since $w^{f^\prime_\perp,f_\perp}$ is a Schwartz distribution, see {\it e.g.} \eqref{eq:2pt-ft}, we might rewrite it in the Fourier domain 
\begin{equation}\label{eq:2p-ft-vacuum}
\widetilde{\omega}^0_2(\overline{f^\prime_\perp}{f^\prime_z},f_\perp f_z)  
=
 w^{f^\prime_\perp,f_\perp} (f^\prime_z,f_z)
 = \int_{\mathbb{R}}  d\xi\, \widehat{w}_2^{\overline{f^\prime_\perp},f_\perp}(\xi) {\widehat{f}^\prime_z(\xi)}\widehat{f}_z(\xi).
\end{equation}
Notice that, since the two-point function $\widetilde{\omega}_2$ of the Poincar\'e vacuum is itself a quadratic form, we have that $\widehat{w}^{\overline{f_\perp},f_\perp}_2(\xi)$ is a positive function which is continuous almost everywhere. In particular, from the expression of the spectrum built in \eqref{eq:2pt-ft}, we can infer that continuity could fail only at $\xi=0$, although $\widehat{w}^{\overline{f_\perp},f_\perp}_2(\xi)$ is a locally integrable function, also in a neighbourhood of $0$.

Let us now consider $w^{f_\perp,f_\perp}$ applied to $(\overline{f_z}, N f_z)$. By Poisson summation formula it holds $\sum_{l}f_z(z+2dl) = \sum_n f_n e^{i n z \pi/d}$  where $f_n$ are the Fourier coefficients of $f_z$ computed in the interval $[-d,d]$ and they coincide with the ordinary Fourier transform evaluated at $\xi=n \pi/d $, namely $f_n=\widehat{f}_z(n \pi/d)$. Hence, taking into account the anti-symmetrization present in $N$, $N f_z = \sum_{n} (f_n-f_{-n}) e^{i n z \pi/d}$. Furthermore, its Fourier transform can be computed in a distributional sense as
\[
\widehat{N f_z} := \left(\widehat{f}_z(\xi)-\widehat{f}_z(-\xi)\right) \sum_n \delta \left(\xi-n \frac{\pi}{d}\right).
\]

Dropping the superscripts $\overline{f_\perp},f_\perp$ from both $w$ and $\widehat{w}$ it holds
\[
w(\overline{f_z}, N f_z)
= \int d\xi \; {\widehat{w}_2}(\xi)\; \left(\widehat{f}_z(\xi)-\widehat{f}_z(-\xi)\right) \sum_n \delta \left(\xi-n \frac{\pi}{d}\right)   \overline{\widehat{f}_z(\xi)}.
\]
Notice that, despite of the presence of an infinite sum of Dirac delta functions, the previous expression is well defined because $\widehat{w}_2(\xi)$ is continuous for $\xi\neq 0$, it grows at most polynomially for large $|\xi|$ and it is 
bounded close to zero\footnote{In order to check boundedness of $\widehat{w}_2(\xi)$, notice
from \eqref{eq:2pt-ft} that for some positive constant $C$ 
\[
|\widehat{w}(\xi)| \leq C\sup_{t,t'\in I}\int_{\mathbb{R}^2} dk_\perp \frac{1}{\sqrt{k_\perp^2+\xi^2}}|\widehat{f}_\perp(t, k_\perp)||\widehat{f^\prime}_\perp(t', k_\perp)|
\leq C\sup_{t,t'\in I}\int_{0}^\infty d|k_\perp| \int_0^{2\pi} d\theta |\widehat{f}_\perp(t, k_\perp)||\widehat{f^\prime}_\perp(t', k_\perp)|
\]
where the supremum is taken in some interval $I$ chosen in such a way that $I\times\mathbb{R}^2$ contains the supports of both $f_\perp$ and $f^\prime_\perp$.
Furthermore, $\widehat{f}_\perp$ and $\widehat{f}^\prime_\perp$ are the spatial Fourier transform of $f_\perp$ and $f^\prime_\perp$
and hence they decay rapidly for large values of $|k_\perp|$. The result of the two integrals can thus be bounded by some positive constant.}.
The only delta function in the sum which could give a divergent contribution is the one supported at $0$. Since $\widehat{f}_z$ is a Schwartz function, $\left(\widehat{f}_z(\xi)-\widehat{f}_z(-\xi)\right)$ vanishes, however, at zero and hence, thanks to the boundedness of $\widehat{w}_2(\xi)$ near that point,  the contribution of the delta function supported at $0$ vanishes. We have
\begin{gather*}
w(\overline{f_z}, N f_z)= \sum_n \widehat{w}_2(n\pi/d) (f_n-f_{-n})  \overline{f_n}= \sum_{n\geq 1} \widehat{w}_2(n\pi/d) |f_n-f_{-n}|^2  
\end{gather*}

where, in the last equality, we use the fact that $\omega_2$ is symmetric under $z-$reflections and hence $\widehat{w}_2(n\pi/d) = \widehat{w}_2(-n\pi/d)$. The last term of the above chain of equalities is positive because it is a sum of positive quantities, since we have started from the two-point function of a state and, hence, $\widehat{w}_2(n\pi/d)$ is a quadratic form for every $n$.
\end{proof}

In order to generalize the result obtained for another quasi-free Hadamard state $\widetilde{\omega}$ whose two-point function integral kernel enjoys the symmetries stated in \eqref{eq:symmetry-casimir}, we recall that the two-point function of such state differs from the vacuum one by a smooth function $W(x,x^\prime)$. We have now to make sure that $\mathbb{I}\otimes N$ can be applied also to $W(x,x^\prime)$. To this end, we need to impose technical restrictions on the admissible class of smooth functions. 

\begin{proposition}\label{pr:image-method-casimir}
Let $\widetilde{\omega}$ be a quasi-free state of Hadamard form for $\mathcal{A}^{KG}(\bR^4)$. Suppose that the integral kernel of its two-point function $\widetilde{\omega}_2(x,x^\prime)=\widetilde{\omega}_2^0(x,x^\prime)+W(x,x^\prime)$ is invariant under \eqref{eq:symmetry-casimir}. Suppose that the following conditions hold for the smooth part $W\in C^\infty(\bR^4\times\bR^4)$:
\begin{enumerate}
\item[(i)] the function $W^{f_\perp,h_\perp}(z,z^\prime):= \int\limits_{\bR^3\times\bR^3}d^3\underline{x}d^3\underline{x}^\prime\; W(\underline{x},z,\underline{x}^\prime,z^\prime)  f_\perp(\underline{x}) f^\prime_\perp (\underline{x}^\prime)$ 
lies in $\mathcal{S}'(\mathbb{R}^2)$  for every $f_\perp,h_\perp\in \mathcal{D}(\mathbb{R}^3)$. 
\item[(ii)] for every value of $x^3$ and $x^{\prime 3}$, $W^{f_\perp,f^\prime_\perp}(z,z^\prime)$ generates a distribution in $\mathcal{D}'(\mathbb{R}^6)$, hence it is continuous on $\mathcal{D}(\mathbb{R}^6)$. 
\item[(iii)] Let $w(z-z^{\prime})\doteq W^{f_\perp,f^\prime_\perp}(z,z^{\prime})$ and let $\widehat{w}(\xi)$ be its Fourier transform. It is a continuous function for $\xi \geq \frac{\pi}{d}$, 
\item[(iv)] $\xi\mapsto \widehat{w}(\xi) \xi$  is a continuous function in a neighbourhood of $\xi=0$ and it vanishes for $\xi=0$.
\end{enumerate}
Hence, in view of Proposition \ref{surjective} we can extend $\widetilde\omega_2$ to a map on $C^\infty_0 (\mathring{Z})\times N[C^\infty_0(\mathring{Z})]$ and 
 \[
{\omega}_2(f^\prime,f) =\widetilde{\omega}_2(f^\prime,Nf) .
\]
gives rise to a quasi-free state $\omega:\mathcal{A}^{C}_{ext}(Z)\to\bC$.
\end{proposition}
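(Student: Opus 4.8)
The plan is to reduce every claim to the vacuum case, already settled in Propositions~\ref{PoincN} and~\ref{pr:vacuum-positivity}, and to control the smooth remainder $W$ by means of the hypotheses (i)--(iv). Write $\widetilde\omega_2=\widetilde\omega_2^0+W$ as in \eqref{eq:hadamard-casimir}; since the Poincar\'e vacuum is translation and reflection invariant, $\widetilde\omega_2^0$ automatically satisfies \eqref{eq:symmetry-casimir}, hence so does $W$, and consequently the integral kernel of $W$ depends on $z,z^\prime$ only through $z-z^\prime$ and is even under $z-z^\prime\mapsto -(z-z^\prime)$.

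The first step is to show that $\omega_2\doteq\widetilde\omega_2\circ(\mathbb{I}\otimes N)$ is a well-defined element of $\mathcal{D}^\prime(\mathring{Z}\times\mathring{Z})$, so that, in view of Proposition~\ref{surjective}, $\widetilde\omega_2$ indeed extends to a map on $C^\infty_0(\mathring{Z})\times N[C^\infty_0(\mathring{Z})]$. The vacuum contribution $\widetilde\omega_2^0\circ(\mathbb{I}\otimes N)$ is exactly Proposition~\ref{PoincN}. For the contribution of $W$ I would argue, as in the proof of Proposition~\ref{pr:vacuum-positivity}, on the dense subspace of test functions that factorize along $z$, $f=f_\perp f_z$ and $f^\prime=f^\prime_\perp f^\prime_z$. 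Hypothesis (i) makes $W^{f^\prime_\perp,f_\perp}(z^\prime,z)=w(z^\prime-z)$ a tempered distribution in one variable and (ii) guarantees it is a genuine distribution; computing the Fourier transform of $Nf_z$ via the Poisson summation formula (as in Proposition~\ref{pr:vacuum-positivity}) gives $\widehat{Nf_z}(\xi)=\bigl(\widehat f_z(\xi)-\widehat f_z(-\xi)\bigr)\sum_n\delta(\xi-n\pi/d)$, so that $\widetilde\omega_2(f^\prime,Nf)$ is the series $\sum_n\widehat w(n\pi/d)\bigl(\widehat f_z(n\pi/d)-\widehat f_z(-n\pi/d)\bigr)\overline{\widehat f^\prime_z(n\pi/d)}$. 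Absolute convergence of the tail follows from continuity of $\widehat w$ for $\xi\ge\pi/d$ (hypothesis (iii)) together with the rapid decay of $\widehat f_z,\widehat f^\prime_z$, while the a priori singular $n=0$ term drops out because $\widehat f_z(\xi)-\widehat f_z(-\xi)$ vanishes linearly at $\xi=0$ whereas $\xi\,\widehat w(\xi)\to0$ there (hypothesis (iv)). The resulting bilinear functional is separately continuous; extending it from factorized functions by density and invoking the Schwartz kernel theorem produces $\omega_2\in\mathcal{D}^\prime(\mathring{Z}\times\mathring{Z})$.

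Next I would check the algebraic properties needed for $\omega_2$ to be the two-point function of a quasi-free state on $\mathcal{A}^{C}_{ext}(Z)$. The key observation is that $N$ is symmetric with respect to $\widetilde\omega_2$: writing $N=(\mathbb{I}-\iota_z)Q$ with $Q=\sum_n\iota_{2nd}$, the invariance \eqref{eq:symmetry-casimir} yields $\widetilde\omega_2(\iota_z f,g)=\widetilde\omega_2(f,\iota_z g)$ and $\widetilde\omega_2(\iota_{2nd}f,g)=\widetilde\omega_2(f,\iota_{-2nd}g)$, hence $\widetilde\omega_2(Nf,g)=\widetilde\omega_2(f,Ng)$. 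Combining this with the fact that for any state on $\mathcal{A}^{KG}(\bR^4)$ one has $\widetilde\omega_2(f,g)-\widetilde\omega_2(g,f)=iE(f,g)$, with $E_Z=\rho_Z\circ E\circ N$, and with $Nf|_Z=f$ for $f\in C^\infty_0(\mathring{Z})$, one obtains $\omega_2(f^\prime,f)-\omega_2(f,f^\prime)=iE(f^\prime,Nf)=i\,\sigma_C(Nf^\prime,Nf)$, so that the $\star_Z$-commutation relations among the generators of $\mathcal{A}^{C}_{ext}(Z)$ are reproduced; normalization together with the Wick recursion for the higher $n$-point functions then defines $\omega$.

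Finally, positivity. Using $\widetilde\omega_2(\iota_z f,\iota_z g)=\widetilde\omega_2(f,g)$ and the identities above for $N$, one checks $\omega_2(\bar f,f)=\widetilde\omega_2(\bar g,Ng)=2\,\widetilde\omega_2(\bar g,Qg)$, where $g=\tfrac12(f-\iota_z f)$ is the $z$-odd part of $f$. Restricting again to $g=g_\perp g_z$ and using the Poisson identity, $\omega_2(\bar f,f)$ becomes $\sum_{n\ge1}\widehat{w}_2^{\,\overline{g_\perp},g_\perp}(n\pi/d)\,\bigl|\widehat g_z(n\pi/d)-\widehat g_z(-n\pi/d)\bigr|^2$, where $\widehat{w}_2^{\,\overline{g_\perp},g_\perp}$ is the spatially smeared Fourier slice of the full $\widetilde\omega_2$; this is a sum of nonnegative terms because $\widehat{w}_2^{\,\overline{g_\perp},g_\perp}(\xi)\ge0$ for $\xi\neq0$, which — exactly as argued in Proposition~\ref{pr:vacuum-positivity} — is a consequence of $\widetilde\omega_2$ being a positive quadratic form and uses no special feature of the vacuum, and the $\xi=0$ delta again does not contribute. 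By density of the factorized functions this gives $\omega_2(\bar f,f)\ge0$ for all $f\in C^\infty_0(\mathring{Z})$, which completes the argument that $\omega$ is a quasi-free state. I expect the main obstacle to be this first step, namely keeping the image series under control for the genuinely non-vacuum part $W$ — precisely the task that the technical hypotheses (i)--(iv) are tailored to — together with the care needed at $\xi=0$ and in passing from factorized test functions to all of $C^\infty_0(\mathring{Z})$.
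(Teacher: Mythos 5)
Your proposal is correct and follows essentially the same route as the paper's proof: the well-definedness of $W\circ(\mathbb{I}\otimes N)$ is established on factorized test functions via the Poisson summation formula, with hypotheses (i)--(iv) playing exactly the roles you assign them (temperedness for the polynomial growth of $\widehat w$, (iii) for the lattice points $n\pi/d$ with $n\neq 0$, (iv) to kill the $\xi=0$ delta), followed by density and the kernel theorem. You additionally spell out the positivity argument and the compatibility with the commutation relations, which the paper only defers to the analogy with Proposition \ref{pr:vacuum-positivity}; these added details are consistent with the paper's intent.
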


\begin{proof}
Consider a compactly supported smooth function $f\in \mathcal{D}(\mathring{Z})$ which can be factorized in the following way $f(\underline{x},z)=f_\perp(\underline{x})f_z(z)$. Let us study $Nf$ and notice that $N$ acts only on $f_z$. Furthermore, by the Poisson summation formula (see \cite[\S 7.2]{Hormander1}), we know that $N f_z(z) = \sum_n (f_n-f_{-n}) e^{i n z \pi/d}$ and, as discussed in the proof of the previous proposition, the Fourier transform can be computed in the distributional sense yielding
\[
\widehat{N f_z}(\xi) := \left(\widehat{f}_z(\xi)-\widehat{f}_z(-\xi)\right) \sum_n \delta \left(\xi-n \frac{\pi}{d}\right)
\] 
where $\delta$ is the Dirac delta function. 
For every other $f^\prime\in\mathcal{D}(\mathring{Z})$ which can also be factorized, we analyze 
\begin{gather*}
W(f^\prime,Nf) := \int_{\mathbb{R}^4\times\mathbb{R}^4}d^4x d^4x^\prime\,  f^\prime(x) W(x,x^\prime) Nf(x^\prime)   = \int_{\mathbb{R}}d\xi\, \widehat{W}^{f^\prime_\perp,f_\perp}(\xi) \overline{\widehat{f}^\prime_z(\xi)} \widehat{N f_z}(\xi)  = \\
=\int_{\mathbb{R}} d\xi\, \widehat{w}(\xi)  \overline{\widehat{f}^\prime_z(\xi)} \left(\widehat{f}_z(\xi)-\widehat{f}_z(-\xi)\right) \sum_n \delta \left(\xi-n \frac{\pi}{d}\right).
\end{gather*}

The previous expression is well defined for the following reasons: 
 \begin{itemize}
\item[a)]  conditions $(iii)$ implies that  $\widehat{w}(\xi)$ is continuous for $|\xi| \geq \pi/d$, 
\item[b)] thanks to hypothesis $(i)$, $w(z)$ is a Schwartz distribution, hence its Fourier transform, grows at most polynomially for large $\xi$ and 
\item[c)]  requirement $(iv)$ implies that $\widehat{w}(\xi) \xi$ is continuous near zero and vanishes for $\xi=0$. \end{itemize}
Hence, the Dirac delta supported in $0$ gives a vanishing contribution to the sum because $\left(\widehat{f}_z(\xi)-\widehat{f}_z(-\xi)\right)/\xi$ is a continuous function near zero and hence $\widehat{w}(\xi) \xi \cdot \left(\widehat{f}_z(\xi)-\widehat{f}_z(-\xi)\right)/\xi$ is continuous in $0$ and there it vanishes. Furthermore, what remains is 
\[
W(f^\prime, Nf)
=
\sum_n  \widehat{w}\left( \frac{n\pi}{d}\right)  \overline{{f^\prime}_n} \left({f}_n-{f}_{-n}\right) 
= \sum_{n\geq 1}  \widehat{w}\left( \frac{n\pi}{d}\right) \overline{({f^\prime}_n - f^\prime_{-n})} \left({f}_n-{f}_{-n}\right) 
 \]
which is continuous with respect to the topology of $\mathcal{D}'((0,d)\times(0,d))$. Hence, taking into account hypothesis $(ii)$,  $W(f^\prime,Nf)$ is separately continuous on $\mathcal{D}((0,d)\times (0,d))\otimes \mathcal{D}(\mathbb{R}^6)$ and thus it is a distribution in $\mathcal{D}'(\mathring{Z}\times\mathring{Z})$.  

For this reason, ${\omega}_2$ is also a well-defined distribution being the sum of $\omega_2^0$ and $W\circ\left(\mathbb{I}\otimes N\right)$.
Positivity remains to be shown, but it can be checked following a proof similar to the proof of Proposition \ref{pr:vacuum-positivity}, hence we shall omit it.
\end{proof}

The requirements of the previous proposition are quite involved to check. For this reason, in the following lemma we give an alternative sufficient condition which implies the four points  assumed in the previous proposition.

\begin{lemma}\label{le:image-method-casimir}
Let $\widetilde{\omega}$ be a quasi-free state of Hadamard form for $\mathcal{A}^{KG}(\bR^4)$. Suppose that its two-point function $\widetilde{\omega}_2=\widetilde{\omega}_2^0$ is invariant under $z-$reflections and under $z-$translations as in \eqref{eq:symmetry-casimir}. 
Consider the smooth function $W:=\widetilde{\omega}_2-\widetilde{\omega}_2^0$. Suppose that the following conditions hold:
\[
W\in L^\infty(Z)  \;,\qquad   \frac{\partial}{\partial z}W(\underline{x},z,\underline{x}^\prime,z^\prime) \in L^1(\mathbb{R},dz)
\]
uniformly in $\underline x$ and $\underline x^\prime$, then the hypotheses of the previous proposition are satisfied and thus the following expression
\[
{\omega}_2(f^\prime,f) = \widetilde{\omega}_2(f^\prime,Nf) .
\]
is a well defined two-point function of a quasi-free state $\omega:\mathcal{A}^{C}_{ext}(Z)\to\bC$.
\end{lemma}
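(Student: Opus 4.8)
The plan is to show that the two conditions on $W$ assumed in the lemma --- namely $W\in L^\infty(Z)$ and $\partial_z W(\underline x,z,\underline x^\prime,z^\prime)\in L^1(\bR,dz)$ uniformly in the transverse variables --- imply the four technical hypotheses (i)--(iv) of Proposition \ref{pr:image-method-casimir}, after which the conclusion follows immediately from that proposition. So the whole argument is a verification, organized hypothesis by hypothesis, working with the smeared kernel $W^{f_\perp,f^\prime_\perp}(z,z^\prime)=\int W(\underline x,z,\underline x^\prime,z^\prime)f_\perp(\underline x)f^\prime_\perp(\underline x^\prime)\,d^3\underline x\,d^3\underline x^\prime$ for fixed $f_\perp,f^\prime_\perp\in\mathcal{D}(\bR^3)$.

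First I would dispatch (i) and (ii). Since $W$ is smooth and $L^\infty$ on $Z$, and $f_\perp,f^\prime_\perp$ have compact support, the integral defining $W^{f_\perp,f^\prime_\perp}$ converges absolutely and is bounded uniformly in $(z,z^\prime)$; boundedness on $\bR^2$ trivially gives a tempered distribution, so (i) holds, and smoothness in all variables together with local integrability gives (ii). The substance is in (iii) and (iv), which concern $\widehat w(\xi)$, the Fourier transform in the variable $z-z^\prime$ of $w(z-z^\prime)\doteq W^{f_\perp,f^\prime_\perp}(z,z^\prime)$ --- note this depends on $z-z^\prime$ alone because of the $z$-translation invariance in \eqref{eq:symmetry-casimir}. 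The key point is that $w$ is a bounded continuous function whose derivative $w^\prime$ is in $L^1(\bR)$: the bound on $\partial_z W$ uniform in $\underline x,\underline x^\prime$, combined with compact support of $f_\perp,f^\prime_\perp$, gives $w^\prime\in L^1(\bR)$. Then $\widehat{w^\prime}(\xi)=i\xi\,\widehat w(\xi)$ is, by Riemann--Lebesgue applied to an $L^1$ function, a continuous function of $\xi$ that tends to $0$ at infinity; hence $\xi\mapsto\xi\,\widehat w(\xi)$ is continuous everywhere and in particular continuous near $0$, and it vanishes at $\xi=0$. That is exactly (iv). For (iii), away from $\xi=0$ we can write $\widehat w(\xi)=\widehat{w^\prime}(\xi)/(i\xi)$, a ratio of a continuous function and a nonvanishing continuous function, hence continuous for $|\xi|\geq\pi/d>0$ (indeed for all $\xi\neq 0$); this gives (iii).

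With (i)--(iv) established, Proposition \ref{pr:image-method-casimir} applies verbatim: $\widetilde\omega_2^0\circ(\mathbb I\otimes N)$ is already known to be a distribution on $\mathring Z\times\mathring Z$ defining a quasi-free state by Propositions \ref{PoincN} and \ref{pr:vacuum-positivity}, and the argument there shows $W\circ(\mathbb I\otimes N)$ is a distribution on $\mathring Z\times\mathring Z$, so $\omega_2(f^\prime,f)\doteq\widetilde\omega_2(f^\prime,Nf)$ is well defined; positivity follows from the same Fourier-coefficient computation as in the proof of Proposition \ref{pr:vacuum-positivity}, using that $W$ is the smooth remainder of a genuine two-point function so that each $\widehat w(n\pi/d)$ is a positive semidefinite quadratic form in the transverse test functions. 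I expect the main obstacle to be purely bookkeeping: making the interchange of the $\underline x$-integration and the $z$-differentiation rigorous (so that $w^\prime(z-z^\prime)=\int(\partial_z W)f_\perp f^\prime_\perp$ and the $L^1$ bound passes through the smearing with the correct uniformity), and handling factorized test functions $f=f_\perp f_z$ carefully while noting, as in Proposition \ref{pr:vacuum-positivity}, that such functions are dense so positivity on them suffices. No genuinely new estimate beyond Riemann--Lebesgue is needed; the content is that ``$L^\infty$ plus $W^\prime\in L^1$'' is precisely the regularity that tames the Dirac comb $\widehat{Nf_z}=(\widehat f_z(\xi)-\widehat f_z(-\xi))\sum_n\delta(\xi-n\pi/d)$ at its only dangerous point, $\xi=0$.
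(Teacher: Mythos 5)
Your overall strategy coincides with the paper's: reduce everything to verifying hypotheses (i)--(iv) of Proposition \ref{pr:image-method-casimir}, with (i)--(iii) routine and the whole content concentrated in (iv). Items (i)--(iii) are fine (your derivation of (iii) from $\widehat{w}(\xi)=\widehat{w'}(\xi)/(i\xi)$ is if anything more careful than the paper's one-line dismissal of the first three conditions). But your argument for (iv) has a genuine gap at precisely the point the paper flags as the one "requiring a few words". Riemann--Lebesgue applied to $w'\in L^1(\bR)$ gives that $\widehat{w'}(\xi)=i\xi\,\widehat{w}(\xi)$ is continuous and tends to $0$ as $|\xi|\to\infty$; it does \emph{not} give $\widehat{w'}(0)=0$. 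Indeed $\widehat{w'}(0)=\int_\bR w'(z)\,dz$, which for a bounded $w$ with integrable derivative equals the difference of the limits of $w$ at $\pm\infty$ and has no reason to vanish in general. Nor can you argue that the explicit factor $\xi$ forces the product to vanish at $\xi=0$: a priori $\widehat w$ is only a tempered distribution which may be as singular as $1/\xi$ there --- condition (iv) exists exactly to rule out an uncancelled contribution of the Dirac delta at $\xi=0$, and in the KMS application $\widehat w$ genuinely diverges at the origin. So "and it vanishes at $\xi=0$" is asserted, not proved.

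The missing ingredient is the reflection symmetry in \eqref{eq:symmetry-casimir}, which your argument never uses. Since $W$ is invariant under the simultaneous reflection $\iota_z\otimes\iota_z$ and depends on $z-z'$ only, $w$ is an even function; hence $\widehat w$ is even and $u(\xi)\doteq\xi\,\widehat w(\xi)$ is an odd continuous function, and an odd continuous function vanishes at the origin. (Equivalently: $w$ even with $w'\in L^1$ gives $\int_\bR w'\,dz=w(+\infty)-w(-\infty)=0$.) This is exactly how the paper closes the argument, and with that one line added your proof matches it. A final caution on your positivity remark, which goes beyond what the lemma's proof in the paper contains (positivity is deferred to Proposition \ref{pr:image-method-casimir}): positivity is a property of the full kernel $\widetilde\omega_2=\widetilde\omega_2^0+W$, i.e.\ of the coefficients $\widehat{w}_2(n\pi/d)$ of the total two-point function, not of the coefficients of the smooth remainder $W$ alone, which need not be positive.
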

\begin{proof}
Since $W$ is bounded, it is the integral kernel of a Schwarz distribution.
Hence, by the Schwartz kernel theorem $W$ can be seen as a map between smooth functions over $\mathbb{R}^6$ and Schwartz distributions over $\mathbb{R}^2$.
The first three requirements of Proposition \ref{pr:image-method-casimir} descend immediately. The forth one requires a few words.
Since the derivative along $z$ of $W$ is in $L^1$, by the Riemann-Lebesgue lemma, 
its Fourier transform along the $z-$direction $\widehat{w}(\xi)$ is equal to a continuous function $u(\xi)$ divided by $\xi$. 
Furthermore, since $W$ is symmetric under reflections generated by $\iota_z$, $\widehat{w}$ must be invariant under mapping of $\xi\to-\xi$, and thus $u(\xi)=\xi\widehat{w}(\xi)$ is an odd continuous function, hence it must vanish for $\xi=0$.
\end{proof}

Before concluding this section we analyze the singular structure of Hadamard states obtained by the image method described so far. 
We already know that these states are of Hadamard form when restricted on globally hyperbolic sub regions of $\mathbb{H}$, hence therein the singular structure is known, however we expect further singularities when states for the full algebra $\mathcal{A}^C(Z)$ is considered. Actually, the following proposition holds.

\begin{proposition}
Consider the two-point function of a quasi-free state ${\omega}$ for $\mathcal{A}^{C}(Z)$ obtained by the image method starting from a quasi-free Hadamard state $\widetilde{\omega}$ of $\mathcal{A}^{KG}(\bR^4)$. 
The wave front set of its two-point function $\omega_2$ has the following form 
\[
WF({\omega}_2) = \left\{(x,x^\prime,k_x, -k_{x^\prime}) \in T^*\left(\mathring{Z}\times\mathring{Z}\right)\setminus\{0\} | (x,k_x)\sim_Z (x^\prime,k_{x^\prime}),   k_x\triangleright 0     \right\}
\]
where $(x,k_x)\sim_Z (x^\prime, k_{x^\prime})$ whenever there exists a null geodesic $\gamma$ reflected at the boundaries a countable number of times, such that $x,y$ are its end points, $k_x$ is the cotangent vector to $\gamma$ at $x$ while $k_y$ is the parallel transport of $k_x$ along $\gamma$. 
\end{proposition}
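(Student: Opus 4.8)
The plan is to compute the wavefront set of $\omega_2 = \widetilde\omega_2 \circ (\mathbb I \otimes N)$ directly from the representation of $N$ as an (infinite) sum of pullbacks by isometries of Minkowski spacetime, and then to recognize the resulting set as the collection of covectors over broken (reflected) null geodesics. First I would write, as in the proof of Proposition \ref{PoincN}, $N = \sum_{n\in\bZ}\left(\tau_{2nd} - \tau_{2nd}\circ\iota_z\right)$, where $\tau_s$ is translation by $s$ along the $z$-axis; each summand is the pullback by an isometry $g_n^\pm$ of $(\bR^4,\eta)$. Since pullback by a diffeomorphism transforms the wavefront set by the (inverse) cotangent lift, each term $\widetilde\omega_2\circ(\mathbb I\otimes g_n^\pm)$ has wavefront set $(\mathbb I\otimes (g_n^\pm)_*)^{-1}WF(\widetilde\omega_2)$, i.e.\ exactly the pairs $(x,x';k_x,-k_{x'})$ with $x$ connected to $g_n^\pm x'$ by a null geodesic and $k_x$ the cotangent lift of the parallel transport of $(g_n^\pm)_* k_{x'}$. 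Undoing the isometry, this is precisely a null geodesic from $x$ to $x'$ that has been reflected and/or translated an appropriate number of times at the hyperplanes $z=0$ and $z=d$ — that is, a $\sim_Z$-related pair. Conversely every broken null geodesic in $Z$ with a countable number of reflections unfolds, under the covering/reflection structure encoded by $N$, to an honest null geodesic in $\bR^4$ landing on $g_n^\pm x'$ for exactly one pair $(n,\pm)$, so each element of the claimed set appears in exactly one summand.

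The next step is to handle the passage from the (locally finite, for fixed $x,x'$) sum to its boundary value as a distribution on $\mathring Z\times\mathring Z$. Here I would use that, by Proposition \ref{PoincN} (and Propositions \ref{pr:image-method-casimir}, \ref{le:image-method-casimir} for the general Hadamard case), $\omega_2$ is already known to be a well-defined distribution, obtained as an $\epsilon\to 0^+$ boundary value of a function analytic in a tube; by the standard theorem on wavefront sets of boundary values of analytic functions (H\"ormander, Theorem 8.4.8, or the argument already used via Theorem 3.1.15), $WF(\omega_2)$ is contained in the closure of the limiting directions, which are the future-pointing covectors $k_x\triangleright 0$ dictated by the $i\epsilon e^0$ prescription inherited from $\widetilde\omega_2$. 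This pins down the sign/positivity condition $k_x\triangleright 0$ and gives the inclusion $\subseteq$. For the reverse inclusion I would argue locally: near any pair $(x_0,x_0')$ related by a single broken null geodesic with a fixed finite number of reflections, only finitely many summands of $N$ are singular, the rest contributing smooth kernels on a neighbourhood (reflected/translated copies of the support are pushed to spacelike or timelike separation), so on that neighbourhood $\omega_2$ differs by a smooth function from a single term $\widetilde\omega_2\circ(\mathbb I\otimes g_n^\pm)$, whose wavefront set over that neighbourhood is exactly the claimed set; since $\widetilde\omega_2$ has the full Hadamard wavefront set (no cancellations), the singularity genuinely survives.

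**Main obstacle.** The delicate point is the behaviour near pairs $(x,x')$ that lie at the accumulation of infinitely many reflected geodesics — equivalently, near the boundary $\partial Z$, where arbitrarily short reflected geodesics exist and infinitely many summands become simultaneously singular. On the open region $\mathring Z\times\mathring Z$ this does not actually occur for a fixed compact neighbourhood (the number of reflections needed to connect two fixed interior points by a null geodesic of bounded affine length is finite), so the local finiteness argument is safe; but making this quantitative — showing that on any compact $K\subset\mathring Z\times\mathring Z$ only finitely many $g_n^\pm$ can produce null-related pairs, with a uniform bound — requires a short geometric lemma bounding the number of reflections in terms of $\operatorname{dist}(K,\partial Z)$ and the time-extent of $K$. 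I expect this counting lemma, together with the careful bookkeeping that the "unfolding" map from broken geodesics in $Z$ to straight null geodesics in $\bR^4$ hitting the orbit of $x'$ under $\langle \iota_z,\tau_{2d}\rangle$ is a bijection onto the relevant index set, to be where the real work lies; the microlocal input (boundary values, pullback by diffeomorphisms, propagation of the Hadamard condition in each summand) is then routine.
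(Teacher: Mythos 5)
Your proposal is correct and follows essentially the same route as the paper: decompose $N$ into reflections and translations along $z$, transport $WF(\widetilde\omega_2)$ by the cotangent lift of each isometry, identify the resulting sets with covectors over null geodesics reflected at the boundaries, and rule out cancellations because each element of the claimed set arises from exactly one summand (the wavefront sets of distinct terms are disjoint). The paper is terser on the local-finiteness and no-cancellation points that you flag as the ``real work,'' but the substance of the argument is identical.
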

\begin{proof}

We recall that
\[
{\omega}_2(x,x^\prime) = \sum_{n\in\mathbb{N}} \left[ \widetilde{\omega}_2(x,(\underline{x}^\prime,z^\prime + 2nd)) -   \widetilde{\omega}_2(x,(\underline{x}^\prime,-z^\prime + 2nd)) \right],
\]
Hence, $WF({\omega}_2)$ is contained in the union of the wavefront sets of $\widetilde{\omega}_2(x,(\underline{x}^\prime,z^\prime + 2nd))$ and of $\widetilde{\omega}_2(x,(\underline{x}^\prime,-z^\prime + 2nd))$. 

Let us analyze $WF(\widetilde{\omega}_2(x,(\underline{x}^\prime,z^\prime + 2nd)))$. 
Notice that $\widetilde{\omega}_2(x,(\underline x^\prime,z^\prime+2nd))$ is nothing but as $\widetilde{\omega}_2$ in Minkowski with a translation applied to $x^\prime$. Hence we just need to apply the corresponding transformation on its wavefront set to obtain the wavefront set of $WF(\widetilde{\omega}_2(x,(\underline{x}^\prime,z^\prime + 2nd)))$. Furthermore, if the points $(x,x^\prime)$ are contained in its singular support, this means that $x$ and $\iota_{2nd}(x^\prime)$ are connected by a null geodesic in Minkowski spacetime.
This geodesic in Minkowski passes trough the points $z$ where $z^3$ is a multiple of $d$, $|2n|$ times. Hence, in the Casimir region, it is like a null geodesic reflected $2n$ times at the boundaries. 
We can treat in a similar way $WF(\widetilde{\omega}(x,(\underline{x},-z^\prime + 2nd)))$ and it coincides with the wave front set of $\widetilde{\omega}$ where the second entry of that distribution is reflected and translated $2n$ times. Hence, $(x,x^\prime)$ are in its singular support only if they are connected by a null geodesic reflected $|2n-1|$ times at the boundaries.

Finally, we notice that the wave front set of $\widetilde{\omega}(x,(\underline{x}^\prime,z^\prime + 2nd))$ and of $\widetilde{\omega}(x,(\underline{x}^\prime,-z^\prime + 2nd))$ are all disjoined, (their singular support might overlap only when both $z=z^\prime=d/2$ but in this case the corresponding covectors have opposite $z-$direction). Hence, in the sum defining ${\omega}_2$ no cancellation of singularity might occur. We thus conclude that $WF({\omega}_2)$ coincides with the union of the wave front sets of the distributions in the sum written above.
\end{proof}

\subsection{The vacuum and the KMS states for the Casimir system}\label{lastsec}

In this subsection, we shall construct states $\omega^T:\mathcal{A}^C_{ext}(Z)\to\bC$ at finite temperature $T$ for the Casimir system. We shall show that these states are obtained applying the image method to a KMS state for a Klein-Gordon field on Minkowski spacetime. As a corollary, we obtain that $\omega^0$ is the vacuum state of the theory and it coincides with $\lim_{T\to 0} \omega^T$. Our computations are consistent with the literature on the topic, see for example \cite{Brown:1969na, Fulling, Kennedy:1979ar} for the thermal case and \cite{Fulling:1989nb} for the vacuum. 

As before, we work at the level of two-point function. Hence, let us suppose that the hypotheses of Proposition \ref{pr:image-method-casimir} are met. If so, we can apply the image method to a state $\widetilde{\omega}$ on $\mathcal{A}^{KG}(\bR^4)$ to obtain a quasi-free Hadamard state ${\omega}$ for $\mathcal{A}^C_{ext}(Z)$, such that $\omega_2(f,f^\prime) = \widetilde{\omega}_2(f,Nf^\prime)$, $f,f^\prime\in C^\infty_{0,C}(\bR^4)$.
Suppose also that the state $\widetilde{\omega}$ is invariant under the natural action induced on it by the time translation $t_\xi$ of step $\xi\in\bR$. Since $N$ commutes with $t_\xi$, also the state $\omega$ must be invariant under time translations.

Consider now the quasi-free KMS state $\widetilde{\omega}^T:\mathcal{A}^{KG}(\bR^4)\to\bC$ at temperature $T$ which is invariant under the action induced by $t_\xi$. For every $f,f^\prime\in C^\infty_0(\bR^4)$  the function 
$\xi\mapsto \omega_2(t_\xi f  , g)$ is analytic in the strip $\Im(\xi) \in [0,\beta]$ where $\beta = (k_B T)^{-1}$ is the inverse temperature and $k_B$ is the Boltzmann constant. Furthermore, the KMS condition holds, namely 
\[
\widetilde{\omega}^T_2(t_{i\beta} f  , f^\prime) = \widetilde{\omega}^T_2(f^\prime,f).
\]
We recall also that
\[
\widetilde{\omega}^{T}_2(x,x^\prime) = 
\lim_{\epsilon\to 0^+} \frac{1}{2\pi \beta |{\bf x}-{\bf x}^\prime|}
\frac{\sinh{\left(2\pi \frac{|{\bf x}-{\bf x}^\prime|}{\beta}\right)}}{\cosh{\left(2\pi \frac{|{\bf x}-{\bf x}^\prime|}{\beta}\right)}-\cosh{\left(2\pi \frac{(x^0-x^{\prime 0}-i\epsilon)}{\beta}\right)}}
\;, \Im(x^0-x^{\prime 0})\in(-\beta+\epsilon,0]\;,
\]
where we use $x^0$ for the time coordinate and $\bf x$ for the space coordinates. Furthermore 
\[
\widetilde{\omega}_2^T(x,x^\prime) = \lim_{\epsilon\to 0^+ }\int_{\bR^3} d^3{\bf k} 
\frac{e^{i {\bf k} \cdot ({\bf x} -{\bf x}^\prime)}}{2{|{\bf k}|}}  
\left(
\frac{e^{-i |{\bf k}| (x^0-x^{\prime 0})}}{1- e^{-\beta|{\bf k}|}} +\frac{ e^{i|{\bf k}|(x^0-x^{\prime 0})}}{ e^{\beta|{\bf k}|}-1}
\right) e^{-\epsilon|{\bf k}|}
\]
We shall check that, it is possible to apply the image method to this state by analyzing the behavior of $W := \widetilde{\omega}_2^T-\widetilde{\omega}_2^0$
and verifying that the hypotheses of Lemma \ref{le:image-method-casimir}
is satisfied and thus Proposition \ref{pr:image-method-casimir} holds. First of all, we notice that $W$ is a Schwartz distribution, 
which has the desired symmetry properties \eqref{eq:symmetry-casimir}. The spatial Fourier transform of its integral kernel has the following form 
\[
\widehat{W}(x^0,x^{\prime 0};{\bf k}) = C\frac{1}{{|{\bf k}|}}  
\left(
\frac{\cos(|{\bf k}|(x^0-x^{\prime 0})) }{ e^{\beta|{\bf k}|}-1}
\right).
\]
It is a smooth function except when $|{\bf k}| = 0$ and it decays rapidly for large $|{\bf k}|$. From this observation conditions
 $(i)$,$(ii)$ and $(iii)$ of Proposition \ref{pr:image-method-casimir} are met. It remains to prove the $(iv)$. In order to check it we proceed  analyzing 
\[
\widehat{w}^T(\xi) = \int_\bR dt \int_\bR dt' \int_{\mathbb{R}^2} dk_\perp 
\widehat{W}(t,t';k_\perp,\xi) \widehat{f}_\perp(t, k_\perp) \overline{\widehat{f}^\prime_\perp(t', k_\perp)}.
\]
for a pair of compactly supported function $f_\perp,f^\prime_\perp\in\mathcal{D}(\mathbb{R}^3)$. Above, $\widehat{f}_\perp(t, k_\perp)$ is the spatial (two-dimensional) Fourier transform of $f_\perp(t,\underline x^1,\underline x^2)$.
Notice that there exists a positive constant $C$ such that  $|\widehat{W}(t,t';{\bf k})| \leq C/|{\bf k}|^2$. Hence 
\[
|\widehat{w}^T(\xi)| \leq C\sup_{t,t'\in I}\int_{\mathbb{R}^2} dk_\perp \frac{1}{k_\perp^2+\xi^2}|\widehat{f}_\perp(t, k_\perp)||\widehat{h}_\perp(t', k_\perp)|,
\]
where the supremum is taken in an interval $I$ chosen in accordance to the supports of both $f_\perp$ and $f^\prime_\perp$.
Since, $\widehat{f}_\perp$ and $\widehat{f}^\prime_\perp$ are two Schwartz functions it holds that 
 \[
|\widehat{w}(\xi)| \leq \sup_{t,t'\in I}C'(t,t')\int_{0}^\infty dk \frac{k}{k^2+\xi^2}\frac{1}{1+ k^2}
\]
for some positive set of constants $C'(t,t')$ bounded in $I^2$. The $k-$integral can be computed and it yields a function of $\xi$ which is logarithmically divergent near $0$, and hence, also requirement $(iv)$ of Proposition \ref{pr:image-method-casimir} is met.

\bigskip 
For completeness we check the applicability of the image method directly on the two-point function. We obtain
\begin{gather}
{\omega}^T_{2}(x,x^\prime)\doteq \left(\widetilde{\omega}^{T}_2\left(\mathbb{I}\otimes N\right)\right)(x,x^\prime)=
\notag
\\
-\sum\limits_{n=-\infty}^\infty\left(\frac{1}{2\pi\beta r_n}\frac{\sinh\frac{2\pi r_n}{\beta}}{\cosh\frac{2\pi r_n}{\beta}-\cos\frac{2\pi i}{\beta}(\underline{x}^0-\underline{x}^{\prime 0}+i\epsilon)}
-
\frac{1}{2\pi\beta \widetilde{r}_n}\frac{\sinh\frac{2\pi \widetilde{r}_n}{\beta}}{\cosh\frac{2\pi \widetilde{r}_n}{\beta}-\cos\frac{2\pi i}{\beta}(\underline{x}^0-\underline{x}^{\prime 0}+i\epsilon)}\right),
\label{eq:2pt-kms-casimir}
\end{gather}
where $r^2_n\doteq (\underline{x}^1-\underline{x}^{\prime 1})^2+(\underline{x}^2-\underline{x}^{\prime 2})^2+(z-z^\prime+2nd)^2$ while $\widetilde{r}^2_n\doteq (\underline{x}^1-\underline{x}^{\prime 1})^2+(\underline{x}^2-\underline{x}^{\prime 2})^2+(-z-z^\prime+2nd)^2$.
Notice that, for every $\epsilon>0$ and for every $x,y$ in $Z$ we have the the sum is absolutely convergent. 
As a matter of facts, for large $n$, both $r_n$ and $\widetilde{r}_n$ grow like $2nd$ hence, the asymptotic behavior of the $n-$th element of series is 
governed by 
\begin{gather*}
\frac{1}{2\pi\beta r_n} -\frac{1}{2\pi\beta \widetilde{r}_n} =  
\frac{1}{2\pi\beta } \frac{\widetilde{r}_n-r_n}{r_n\widetilde{r}_n} =
\frac{1}{2\pi\beta } \frac{\widetilde{r}_n^2-r_n^2}{r_n\widetilde{r}_n (r_n+\widetilde{r}_n)}
\end{gather*}
and the right hand side of the previous expression is majored by $C/n^2$ hence it can be summed.

\bigskip
We conclude this section with a proposition which ensures that the image method preserves the thermal properties of states.

\begin{proposition}
The quasi-free state $\omega^T:\mathcal{A}^C_{ext}(Z)\to\bC$, whose two-point function
$\omega_2^T$  is obtained applying the image method to the two-point function $\widetilde{\omega}_2^T$ of the KMS state $\widetilde{\omega}^T$ as in \eqref{eq:2pt-kms-casimir} is a KMS state. The limit of $\omega^T$ as $T\to 0$ is a vacuum state.  
\end{proposition}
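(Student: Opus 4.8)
The plan is to transfer both assertions from Minkowski spacetime to the Casimir strip through the operator $N$ of \eqref{imagesum}, using that $\omega^T$ is --- by the preceding analysis --- an established quasi-free state on $\mathcal{A}^C_{ext}(Z)$. The crucial structural observation is that $N$, and hence the Casimir causal propagator $E_Z=\rho_Z\circ E\circ N$ of Definition \ref{Casimirpropagator} and the image prescription $\omega^T_2(f,f')=\widetilde\omega^T_2(f,Nf')$, is built solely from isometries of Minkowski acting on the $z$-coordinate --- the reflection $\iota_z$ and the translations of step $2nd$ --- so that $N$ commutes with the time-translation group $t_\xi$ and, since $\sum_{n\in\bZ}$ of these translations is symmetric under $n\mapsto-n$ and $\iota_z^2=\mathrm{id}$, is symmetric with respect to the $L^2$-pairing. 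As $E$, $N$ and $\rho_Z$ all commute with $t_\xi$, the product $\star_Z$ is $t_\xi$-covariant and time translations lift to a $\ast$-automorphism group $\{\alpha_\xi\}$ of $\mathcal{A}^C_{ext}(Z)$; moreover $\omega^T$ is $\alpha_\xi$-invariant because $\widetilde\omega^T$ is $t_\xi$-invariant and $N$ commutes with $t_\xi$. Since $\omega^T$ is quasi-free, the $(\alpha_\xi,\beta)$-KMS property is equivalent to $\alpha_\xi$-invariance together with the KMS property of the two-point function, which is what I would verify.

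First I would establish, term by term on the absolutely convergent series \eqref{eq:2pt-kms-casimir}, that for generators $F_f,F_{f'}$ of $\mathcal{A}^C_{ext}(Z)$ the map $\xi\mapsto\omega^T_2(\alpha_\xi f,f')$ extends analytically to the strip $\Im\xi\in(0,\beta)$, is continuous up to its boundary, and satisfies the KMS boundary relation. Writing $N=\sum_{n\in\bZ}(\tau_n-\widetilde\tau_n)$ with $\tau_n$, $\widetilde\tau_n$ the $z$-translation and the reflected $z$-translation of step $2nd$, one has $\omega^T_2(\alpha_\xi f,f')=\sum_{n\in\bZ}[\widetilde\omega^T_2(t_\xi f,\tau_n f')-\widetilde\omega^T_2(t_\xi f,\widetilde\tau_n f')]$, and each $\tau_n f'$, $\widetilde\tau_n f'$ is still smooth and compactly supported, so each summand inherits from the KMS property of $\widetilde\omega^T$ on Minkowski both the analyticity strip and the identities $\widetilde\omega^T_2(t_{i\beta}f,\tau_n f')=\widetilde\omega^T_2(\tau_n f',f)$ (and likewise for $\widetilde\tau_n$). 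Summing these and then using the reflection and translation invariance \eqref{eq:symmetry-casimir} of $\widetilde\omega^T_2$ to move $N$ from one slot to the other gives
$$\omega^T_2(\alpha_{i\beta}f,f')=\widetilde\omega^T_2(t_{i\beta}f,Nf')=\widetilde\omega^T_2(Nf',f)=\widetilde\omega^T_2(f',Nf)=\omega^T_2(f',f),$$
which is the KMS condition at inverse temperature $\beta$.

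For the zero-temperature limit I would let $\beta\to\infty$ inside \eqref{eq:2pt-kms-casimir} term by term, using that $\widetilde\omega^T_2\to\widetilde\omega^0_2$ on Minkowski as $T\to0$ and that the $O(1/n^2)$ tail bound discussed below is uniform for large $\beta$; dominated convergence then yields $\lim_{T\to0}\omega^T_2(f,f')=\widetilde\omega^0_2(f,Nf')=\omega^0_2(f,f')$, the two-point function of Proposition \ref{PoincN}. Since quasi-free states are fixed by their two-point functions, $\lim_{T\to0}\omega^T=\omega^0$. It remains to recognise $\omega^0$ as a vacuum, i.e. a ground state for $\{\alpha_\xi\}$: from the representation \eqref{eq:2pt-ft} of $\widetilde\omega^0_2$ the time-frequency spectrum of $\widetilde\omega^0_2$ is contained in $[0,\infty)$, and since $N$ leaves the $t$-dependence untouched the same holds for $\omega^0_2$; together with $\alpha_\xi$-invariance and quasi-freeness this is precisely the spectrum condition characterising a (quasi-free) ground state.

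The hard part will be the uniformity needed to commute the infinite sum in \eqref{eq:2pt-kms-casimir} with the analytic continuation in the complex time variable and with the distributional $\epsilon\to0^+$ boundary limit: one must check that the $O(1/n^2)$ control of the tail --- coming from $(\widetilde r_n^2-r_n^2)/\big(r_n\widetilde r_n(r_n+\widetilde r_n)\big)$, already used after \eqref{eq:2pt-kms-casimir} for real time differences --- persists locally uniformly when the time difference carries an imaginary part in a compact subinterval of $(0,\beta)$ (where the $\cosh$ in the denominators of \eqref{eq:2pt-kms-casimir} stays away from its poles) and blows up no faster than $\epsilon^{-2}$ near the boundary, so that the term-by-term analyticity and KMS identities assemble into the corresponding statements for the full series; the same uniform tail bound, now uniform in $\beta\to\infty$, underlies the dominated convergence in the $T\to0$ limit. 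Everything else is a transcription of the Minkowski KMS and ground-state properties across the $z$-only operator $N$.
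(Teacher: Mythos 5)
Your argument is correct, but it checks a different (equivalent) formulation of the KMS condition than the paper does, so it is worth contrasting the two. The paper uses the commutator form: it writes the Minkowski KMS property as $\widetilde\omega^T_2(t_{i\beta}f,f')-\widetilde\omega^T_2(f,f')=-iE(f,f')$, inserts $Nf'$ in the second slot, and concludes in one line from $E(f,Nf')=E_Z(f,f')$ (Definition \ref{Casimirpropagator}); the entire content is that the image method converts the Minkowski commutator function into the Casimir one, so compatibility with the $\star_Z$-product is manifest. You instead verify the boundary-value identity $\omega^T_2(\alpha_{i\beta}f,f')=\omega^T_2(f',f)$, which obliges you to (i) apply the Minkowski KMS relation term by term to the translates $\tau_nf'$ and reflected translates $\widetilde\tau_nf'$ and control the resummation, and (ii) invoke the invariance \eqref{eq:symmetry-casimir} to move $N$ across the pairing. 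Both routes rest on the same two inputs --- $N$ commutes with $t_\xi$, and $\widetilde\omega^T$ is KMS on Minkowski --- and your extra steps are all legitimate (the symmetry of $N$ under the pairing does follow from \eqref{eq:symmetry-casimir} together with $\iota_z^2=\mathrm{id}$ and the $n\mapsto-n$ symmetry of the sum, and for $\Im\xi$ in a compact subset of $(0,\beta)$ the denominators in \eqref{eq:2pt-kms-casimir} stay uniformly away from zero, so the tail estimate survives the continuation). Your treatment of the $T\to0$ limit is in fact more careful than the paper's, which dispatches it in a single sentence (with a slip, writing $\beta\to 0$ where $\beta\to\infty$ is meant) and never argues the spectrum condition; your dominated-convergence argument and the positivity of the time-frequency spectrum inherited from \eqref{eq:2pt-ft} supply the justification the paper omits.
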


\begin{proof}
In order to prove the proposition, we want now to show that ${\omega}^T_2(f,f^\prime) = \widetilde{\omega}^T_2(f,N f^\prime)$ for $f,f^\prime\in C^\infty_0(\mathring{Z})$, enjoys the KMS condition in $\mathcal{A}^C_{ext}(Z)$. To this end we recall that the KMS condition can alternatively be written as 
\[
\widetilde{\omega}^T_2(t_{i\beta}(f),f^\prime) - \widetilde{\omega}^T_2(f,f^\prime) =   - i E(f,f^\prime) 
\]
where $E$ is the causal propagator of the theory. 
Hence, let us analyze it for ${\omega}^T$
\[
{\omega}^T_2(t_{i\beta}(f),f^\prime) - {\omega}^T_2(f,f^\prime) = 
\widetilde{\omega}^T_2(t_{i\beta}(f),N f^\prime) - \widetilde{\omega}^T_2(f,N f^\prime) =   - i E(f,N f^\prime) = - i E_Z(f,f^\prime),
\]
where $E_Z$ is constructed in Definition \ref{Casimirpropagator}
Since in the limit $\beta\to 0$ we recover ${\omega}_2^0$ we might safely say that ${\omega}^0$ is the ground state of the Casimir system.
\end{proof}

Notice that the very same conclusion could have been drawn using instead a more general argument following the analysis of \cite{Sahlmann:2000fh}. It is noteworthy that the analysis of this section could have been performed for the Hadamard states of a massive real scalar field on the whole Minkowski spacetime. Yet, in such case, on account of the fall-off properties at infinity of the Poincar\'e vacuum, we would have obtained far better convergence results of the image method.

\subsection{Wick ordering in a Casimir system} 

\noindent To conclude the section, as for a Casimir-Polder system we want to make contact with the standard results in the literature concerning the expectation value of the regularized two-point function and stress-energy tensor (see \cite{Sopova&Ford:2002}). To this avail we need first of all to define the extended algebra of Wick polynomials. The procedure is identical to the one discussed in Section \ref{Wick1} and, thus, we will not repeat it here. Recall that the main outcome was the possibility to introduce an algebra of extended observables on globally hyperbolic submanifolds $O$.
Furthermore, thereon, $\mathcal{A}^{C}_\mu(O)$ is $*$-isomorphic (actually it coincides) to $\mathcal{A}^{KG}_\mu(O)$.  For the same reasons discussed in the Casimir-Polder case, however, the extended algebras $\mathcal{A}^{C}_\mu(O)$ can be realized as part of a global extended algebra $\mathcal{A}^C_{ext}(Z)$ only after a deformation of the $\star_Z$ product into a globally defined one. This  can be built for example by replacing $H$ with the two-point function of a Hadamard state . 

Despite of this difficulty, we can locally make sense to observables like the stress tensor or the Wick square, and in particular, we have:
\begin{proposition}\label{Casimirphys}
Let us consider a massless, conformally coupled real scalar field and let $\omega:\mathcal{A}^C_{ext}(Z)\to\bC$ be the quasi-free state whose two-point function $\omega_2=\left(N\otimes\mathbb{I}\right)\omega_{2,V}$ is built with the image method from the Poincar\'e vacuum. Then, for all $\zeta\in C^\infty_0(Z)$,
$$
\omega^0(\normal{\widehat{\phi}^2}_{H}(\zeta))=\int\limits_{\bR^4}d^4x \frac{\zeta(x)}{48d^2}\left(1-\frac{3}{\sin^2\frac{\pi z}{d}}\right),$$
and
$$
\omega^0(\normal{\widehat{T_{\mu\nu}}}_{H}(\zeta))=A^\prime_{\mu\nu}\frac{-1}{1440 d^4}\int\limits_{\bR^4}d^4x\,\zeta(x)\left[ 1 + (6\xi-1)\frac{5\pi^2}{2}\left( \psi^{(3)}\left(1-\tfrac{z}{d}\right) - \psi^{(3)}\left(\tfrac{z}{d}\right) \right)\right],$$
where $\{T_{\mu\nu}\}$ are the components of the stress-energy tensor \eqref{improved} and $\psi(x)$ is the logarithmic derivative of the gamma function. Furthermore $A^\prime$ is the matrix $\mathrm{diag}(-1,1,1,3)$
\end{proposition}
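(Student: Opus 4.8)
The plan is to reduce the computation, exactly as in the Casimir--Polder case (Lemma \ref{reale}), to a point-split evaluation of the difference $\omega_2 - H$ restricted to the diagonal, where $H$ is the local Hadamard parametrix chosen to coincide with the Minkowski vacuum two-point function $\widetilde\omega_2^0$. First I would invoke Proposition \ref{Casimirphys}'s hypotheses together with Proposition \ref{PoincN}: the two-point function $\omega_2^0$ of the image state can be written either in the closed trigonometric form \eqref{eq:2pt-c0-limit} or as the sum over images $\sum_{n}[\widetilde\omega_2^0(x,(\underline x',z'+2nd)) - \widetilde\omega_2^0(x,(\underline x',-z'+2nd))]$. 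Since $H(x,x') = \widetilde\omega_2^0(x,x')$ is precisely the $n=0$ term of the first sum, the regularized object $\omega_2^0 - H$ that gets tested against $\delta(x-x')$ has all its singularities removed: the coincidence limit $x'\to x$ is smooth on $\mathring Z$ for every remaining image term. Concretely, $\lim_{x'\to x}(\omega_2^0 - H)(x,x')$ can be read off from \eqref{eq:2pt-c0-limit} by setting $\underline x = \underline x'$, $z=z'$ after subtracting the $n=0$ pole; the elementary expansion of $\sinh/\cosh$ near coincidence produces the stated $\tfrac{1}{48d^2}(1 - 3/\sin^2(\pi z/d))$. This yields the first displayed formula for $\omega^0(\normal{\widehat\phi^2}_H(\zeta))$.

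For the stress-energy tensor I would apply the same point-splitting prescription used by Moretti \cite{Moretti:2001qh} and Hack \cite{Hack:2010iw}: write $\omega^0(\normal{\widehat{T_{\mu\nu}}}_H(\zeta)) = \int d^4x\, d^4x'\, \big(D_{\mu\nu}^{(x,x')}(\omega_2^0 - H)(x,x')\big)\zeta(x)\delta(x-x')$ with the bidifferential operator $D_{\mu\nu}^{(x,x')}$ of \eqref{D00} (for the conformally coupled case $\xi = 1/6$, but keeping $\xi$ general to capture the $(6\xi-1)$ factor). All results of \cite{Moretti:2001qh} apply verbatim since on any geodesically convex neighbourhood the parametrix is the Minkowski one; the only new input is the explicit smooth remainder. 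Then I would act with $D_{\mu\nu}$ on the difference $\omega_2^0 - H$ expressed via \eqref{eq:2pt-c0-limit} minus its $n=0$ term, differentiate, and take the coincidence limit. The $z$-derivatives acting on $1/(\cosh(\pi\chi_\epsilon/d) - \cos(\pi(z\pm z')/d))$ generate, after setting $z=z'$, the polygamma function $\psi^{(3)}$ via the standard identity $\sum_{n}(z+n)^{-4} = \tfrac{1}{6}\psi^{(3)}(z)$ (equivalently differentiating the cotangent sum three times); the piece proportional to $(6\xi-1)$ arises from the $\xi$-dependent improvement terms $\xi(\eta_{\mu\nu}\Box - \partial_\mu\partial_\nu)$ acting on $\phi^2$, while the $\xi$-independent part gives the conformal anomaly-free $-1/(1440 d^4)$ contribution. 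Tracking the tensor structure, the $\underline x$-derivatives contribute only through $\chi_\epsilon$ and produce the diagonal matrix $A' = \mathrm{diag}(-1,1,1,3)$ once the constraint $\eta^{\mu\nu}\partial_\mu\partial_\nu$ and the explicit $z$-dependence are combined (the $33$-entry picks up an extra factor because only the $z$-direction sees the boundary).

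The main obstacle I anticipate is bookkeeping in the coincidence limit of $D_{\mu\nu}^{(x,x')}$ applied to the image sum: one must carefully expand \eqref{eq:2pt-c0-limit} (or the image series) to sufficient order in the separation before differentiating, keep the $i\epsilon$ prescription consistent so that the limit is taken in the correct order, and verify that the formally divergent contributions from the $n=0$ image are exactly what $H$ subtracts, leaving a term-by-term convergent and smooth remainder. A secondary subtlety is matching normalization and sign conventions with \cite{Moretti:2001qh, Hack:2010iw} — in particular the factor $\tfrac12$ in front of $H$ that appeared in Lemma \ref{reale} because of the $\zeta = f - \iota_z f$ doubling must be handled analogously here, now with the combined reflection-and-translation structure of $N$, so that the final coefficients $\tfrac{1}{48 d^2}$ and $\tfrac{1}{1440 d^4}$ come out correctly. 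Once those constants are pinned down, the polygamma identities and the elementary Laurent expansions are routine, and the remark that the total Casimir energy still diverges (because the integral kernels blow up at $z,z'\to 0,d$) carries over word for word from Remark \ref{alternative}.
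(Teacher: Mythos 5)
Your proposal follows essentially the same route as the paper: the paper likewise identifies $H$ with the Minkowski vacuum two-point function (the $n=0$ image term), reduces $\omega^0(\normal{\widehat{\phi}^2}_H(\zeta))$ to the coincidence limit of the remaining, smooth image sum (evaluated via the Riemann zeta function and the cotangent-type summation formula of \cite{tableofintegrals}), and obtains the stress-energy tensor by applying the point-splitting operator $D^{(x,x')}_{\mu\nu}$ of \eqref{D00} to $\omega_2-H$ exactly as in Lemma \ref{reale}. The subtleties you flag (order of limits, absence in the Casimir case of the factor $\tfrac12$ that appeared in the Casimir--Polder computation, convergence of the image series) are handled implicitly in the paper's terse proof, so your account is consistent with it.
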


\begin{proof}

Recall that, according to Proposition \ref{pr:vacuum-positivity}, $\omega$ is a Hadamard state as per Definition \ref{def:hadamard-casimir}. In order to compute  the Wick squared scalar field, we recall result of the image method and we obtain 
\begin{gather*}
\omega^0(\normal{\widehat{\phi}^2}_{H}(\zeta))=\\
\sum\limits_{\substack{n=-\infty\\ n\neq 0}}^\infty\int\limits_{\bR^4\times\bR^4}\!\!d^4xd^4x^\prime\left(\widetilde\omega^0_{2}(\underline{x}-\underline{x}^\prime,z-z^\prime+2nd)-\widetilde\omega^0_{2}(\underline{x}-\underline{x}^\prime,-z-z^\prime+2nd)\right)\zeta(\underline{x},z)\delta(x-x^\prime)=\\
=\frac{1}{4\pi^2}\sum\limits_{\substack{n=-\infty\\ n\neq 0}}^\infty \int\limits_{\bR^4}d^4x\,\left(\frac{1}{(2nd)^2}-\frac{1}{(2z+2nd)^2}\right)\zeta(\underline{x},z)=\int\limits_{\bR^4}d^4x\left(\frac{1}{96d^4}-\frac{1}{32d^4}\frac{1}{\sin^2\frac{\pi z}{d}}\right)\zeta(\underline{x},z),
\end{gather*}
where we used the smoothness property of the sum of the integral kernels in the region of interest, first to deduce that the result of the integrals is finite and then to exchange the sum with the integrals. In the last equality we have computed the sum by using for the first term the definition of the Riemann zeta function and in the second still \cite[\S 1.445]{tableofintegrals}.
In order to compute the expectation value of the smeared Wick ordered time-diagonal component of the stress-energy tensor, we follow the same procedure as in the proof of Lemma \ref{reale}, that is
$$
\omega(\normal{\widehat{T_{\mu\nu}}}_{H}(\zeta))=\int\limits_{\bR^4\times\bR^4}d^4x\,d^4x^\prime D^{(x,x^\prime)}_{\mu\nu}\left(\omega_2(x,x^\prime)-H(x,x^\prime)\right)\zeta(x)\delta(x-x^\prime),
$$
where $D^{(x,x^\prime)}_{\mu\nu}$ is the same as in \eqref{D00}. Following the same procedure as for $\omega(\normal{\widehat{\phi}^2}_{H}(\zeta))$ the sought result descends.
\end{proof}

\section*{Acknowledgments}  
The authors would like to thank M.~Benini, N.~Drago, C.~Fewster, T.~-P.~Hack, I.~Kahvkine,  V.~Moretti and D.~Siemssen for enlightening discussions. We are especially grateful to K.~Fredenhagen for critical and enlightening discussions especially on Section \ref{lastsec}.\\


\begin{thebibliography}{999}

\bibitem[Bar13]{Baernew}
  C.~B\"ar,
  {\it Green-hyperbolic operators on globally hyperbolic spacetimes},
  [arXiv:1310.0738 [math-ph]], to appear on Comm. Math. Phys.  
  
\bibitem[BDF09]{Brunetti:2009qc}
  R.~Brunetti, M.~Duetsch and K.~Fredenhagen,
  {\it ``Perturbative Algebraic Quantum Field Theory and the Renormalization Groups,''}
  Adv.\ Theor.\ Math.\ Phys.\  {\bf 13} (2009) 1541
  [arXiv:0901.2038 [math-ph]].  

\bibitem[BFR12]{Brunetti:2012ar}
  R.~Brunetti, K.~Fredenhagen and P.~L.~Ribeiro,
  {\it ``Algebraic Structure of Classical Field Theory I: Kinematics and Linearized Dynamics for Real Scalar Fields,''}
  arXiv:1209.2148 [math-ph].

\bibitem[BDH13]{Benini:2013fia}
  M.~Benini, C.~Dappiaggi and T.~-P.~Hack,
  {\it ``Quantum Field Theory on Curved Backgrounds -- A Primer''},
  Int.\ J.\ Mod.\ Phys.\ A {\bf 28} 1330023 (2013)
  [arXiv:1306.0527 [gr-qc]].  
  
\bibitem[BDM14]{Benini:2014rya}
  M.~Benini, C.~Dappiaggi and S.~Murro,
  {\it ``Radiative observables for linearized gravity on asymptotically flat spacetimes and their boundary induced states''},
  J. Math. Phys. {\bf 55}, 082301 (2014), arXiv:1404.4551 [gr-qc].  

\bibitem[BDS12]{Benini:2012vi} 
  M.~Benini, C.~Dappiaggi and A.~Schenkel,
  {\it``Quantum field theory on affine bundles''},
  Annales Henri Poincare {\bf 15} 171 (2014) 
  [arXiv:1210.3457 [math-ph]].  

\bibitem[BDS13]{Benini:2013tra}
  M.~Benini, C.~Dappiaggi and A.~Schenkel,
  {\it``Quantized Abelian principal connections on Lorentzian manifolds''},
   Commun.\ Math.\ Phys.\  {\bf 330} 123 (2014)
  [arXiv:1303.2515 [math-ph]].
 
\bibitem[Ben14]{Benini}
  M.~Benini,
  {\it ``Optimal space of linear classical observables for Maxwell $k$-forms via spacelike and timelike compact de Rham cohomologies''},
  arXiv:1401.7563 [math-ph].  
  
\bibitem[BF09]{Brunetti:2009pn}
  R.~Brunetti and K.~Fredenhagen,
  {\it ``Quantum Field Theory on Curved Backgrounds,''}
  arXiv:0901.2063 [gr-qc].   

\bibitem[BFV]{Brunetti:2001dx} 
  R.~Brunetti, K.~Fredenhagen and R.~Verch,
  {\it ``The Generally covariant locality principle: A New paradigm for local quantum field theory''},
  {Commun.\ Math.\ Phys.\ }  {\bf 237}, 31 (2003),
  arXiv:math-ph/0112041.  
 
\bibitem[BGP07]{BGP}
  C.~B\"ar, N.~Ginoux and F.~Pf\"affle,
  {\it Wave Equations on Lorentzian Manifolds and Quantization},
  1st edn. (Eur. Math. Soc., Z\"urich, 2007).  
  
\bibitem[BM69]{Brown:1969na}
  L.~S.~Brown and G.~J.~Maclay,
  {\it ``Vacuum stress between conducting plates: An Image solution,''}
  Phys.\ Rev.\  {\bf 184} (1969) 1272.   
 
\bibitem[Cas48]{Casimir:1948dh}
  H.~B.~G.~Casimir,
  {\it ``On the Attraction Between Two Perfectly Conducting Plates''},
  Indag.\ Math.\  {\bf 10} 261 (1948).

\bibitem[CCJ70]{Callan:1970ze}
  C.~G.~Callan, Jr., S.~R.~Coleman and R.~Jackiw,
  {\it ``A New improved energy - momentum tensor''},
  Annals Phys.\  {\bf 59} 42 (1970).   

\bibitem[CP48]{CP}
  H.~B.~G.~Casimir and D.~Polder,
  {\it ``The Influence of retardation on the London-van der Waals forces''},
  Phys.\ Rev.\  {\bf 73} 360 (1948).  
  
\bibitem[DC78]{Deutsch:1978sc}
  D.~Deutsch and P.~Candelas,
  {\it ``Boundary Effects in Quantum Field Theory''},
  Phys.\ Rev.\ D {\bf 20} 3063 (1979).  
  
\bibitem[Dim80]{Dimock}
  J.~Dimock,
  {\it ``Algebras of Local Observables on a Manifold''},
  {Commun.\ Math.\ Phys.\ } {\bf 77}, 219 (1980).  
  
\bibitem[FH95]{Fewster:1995bu}
  C.~J.~Fewster and A.~Higuchi,
  {\it ``Quantum field theory on certain nonglobally hyperbolic space-times,''}
  Class.\ Quant.\ Grav.\  {\bf 13} (1996) 51
  [gr-qc/9508051].  

\bibitem[FH12]{Fewster:2012bj} 
  C.~J.~Fewster and D.~S.~Hunt,
  {\it ``Quantization of linearized gravity in cosmological vacuum spacetimes''},
  Rev.\ Math.\ Phys.\  {\bf 25} 1330003 (2013), arXiv:1203.0261 [math-ph].

\bibitem[FePf03]{Fewster:2003ey} 
  C.~J.~Fewster and M.~J.~Pfenning,
  {\it ``A Quantum weak energy inequality for spin one fields in curved space-time''},
  J.\ Math.\ Phys.\  {\bf 44}, 4480 (2003),
  arXiv:gr-qc/0303106.
  
  \bibitem[FR12]{Fredenhagen:2012sb}
  K.~Fredenhagen and K.~Rejzner,
  {\it ``Perturbative algebraic quantum field theory,''}
  arXiv:1208.1428 [math-ph].
  
  \bibitem[FR87]{Fulling}
  S.~A.~Fulling and S.~N.~M.~Reijsenaars,
  {\it ``Temperature, periodicity and horizons''}
  Phys. Rep. {\bf 152}, 135 (1987).  
  
  \bibitem[Ful89]{Fulling:1989nb}
  S.~A.~Fulling,
  {\it ``Aspects of Quantum Field Theory in Curved Space-time,''}
  London Math.\ Soc.\ Student Texts {\bf 17} (1989) 1.
  
\bibitem[GR07]{tableofintegrals}
  I.~S.~Gradshteyn and I.~M.~Ryzhik
  {\it Tables of Integrals, Series and Products}, 
  7th edition, Academic Press (2007).  
 
\bibitem[HK63]{Haag:1963dh}
  R.~Haag and D.~Kastler,
  {\it ``An Algebraic approach to quantum field theory''},
  J.\ Math.\ Phys.\  {\bf 5}, 848 (1964).
   
\bibitem[Hac10]{Hack:2010iw} 
  T.-P.~Hack,
  {\it On the backreaction of scalar and spinor quantum fields in curved spacetimes},
  arXiv:1008.1776 [gr-qc], PhD thesis, Universit\"at Hamburg (2010).  

\bibitem[Her04]{Herdegen:2004xk}
  A.~Herdegen,
  {\it ``Quantum backreaction (Casimir) effect. I. What are admissible idealizations?''},
  Annales Henri Poincare {\bf 6} 657 (2005) 
  [hep-th/0412132].
  
\bibitem[Her05]{Herdegen:2005is}
  A.~Herdegen,
  {\it ``Quantum backreaction (Casimir) effect. II. Scalar and electromagnetic fields''},
  Annales Henri Poincare {\bf 7} 253 (2006)
  [hep-th/0507023].  
  
\bibitem[Her10]{Herdegen:2010rt}
  A.~Herdegen and M.~Stopa,
  {\it ``Global versus local Casimir effect''},
  Annales Henri Poincare {\bf 11} 1171 (2010)
  [arXiv:1007.2139 [hep-th]].  

\bibitem[HW01]{Hollands:2001nf} 
  S.~Hollands and R.~M.~Wald,
  {\it ``Local Wick polynomials and time ordered products of quantum fields in curved space-time''},
  Commun.\ Math.\ Phys.\  {\bf 223}, 289 (2001),
  arXiv:gr-qc/0103074.
  
\bibitem[HW14]{Hollands:2014eia}
  S.~Hollands and R.~M.~Wald,
  {\it ``Quantum fields in curved spacetime''},
  arXiv:1401.2026 [gr-qc].  
  
\bibitem[H\"or90]{Hormander1} 
  L.~H\"ormander, 
  {\it The Analysis of Linear Partial Differential Operators I},
  2nd edn. Springer, Berlin (1990).  

\bibitem[Kay78]{Kay:1978zr}
  B.~S.~Kay,
  {\it ``Casimir effect in quantum field theory''},
  Phys.\ Rev.\ D {\bf 20} (1979) 3052.
  
\bibitem[Kay92]{Kay:1992es}
  B.~S.~Kay,
  {\it ``The Principle of locality and quantum field theory on (nonglobally hyperbolic) curved space-times,''}
  Rev.\ Math.\ Phys.\ SI {\bf 1} (1992) 167.  
  
\bibitem[KCD79]{Kennedy:1979ar}
  G.~Kennedy, R.~Critchley and J.~S.~Dowker,
  {\it ``Finite Temperature Field Theory with Boundaries: Stress Tensor and Surface Action Renormalization,''}
  Annals Phys.\  {\bf 125} (1980) 346.  
  
\bibitem[Kuh05]{Kuhn}
  H.~K\"uhn,
  {\it Thermische Observablen gekoppelter Felder in Casimir-Effekt,}  
  Diploma Thesis in German, Universit\"at Hamburg (2005), available at \url{www.desy.de/uni-th/lqp/psfiles/dipl-kuehn.ps.gz}

\bibitem[Lee00]{Lee}
  J.~M.~Lee,
  {\it Introduction to smooth manifolds},
  Springer (2000).

\bibitem[Mil01]{Milton} K.~A.~Milton, 
  {\it The Casimir Effect: Physical Manifestations of Zero-point Energy.}
  World Scientific, (2001).

\bibitem[Mor03]{Moretti:2001qh} 
  V.~Moretti,
  {\it ``Comments on the stress energy tensor operator in curved space-time''},
  Commun.\ Math.\ Phys.\  {\bf 232}, 189 (2003),
  arXiv:gr-qc/0109048. 

\bibitem[Nie09]{Ole}
  O. ~Niekerken
  {\it Quantum and Classical Vacuum Forces at Zero and Finite Temperature}, Diploma Thesis in German (2009), Universit\"at Hamburg,
  available at \url{http://www-library.desy.de/preparch/desy/thesis/desy-thesis-09-019.pdf}

\bibitem[Rad96a]{Radzikowski:1996pa} 
  M.~J.~Radzikowski,
  {\it ``Micro-local approach to the Hadamard condition in quantum field theory on curved space-time''},
  Commun.\ Math.\ Phys.\ {\bf 179}, 529 (1996).

\bibitem[Rad96b]{Radzikowski:1996ei} 
  M.~J.~Radzikowski,
  {\it ``A Local to global singularity theorem for quantum field theory on curved space-time''},
  Commun.\ Math.\ Phys.\  {\bf 180}, 1 (1996).  
    
\bibitem[SV00]{Sahlmann:2000fh}
  H.~Sahlmann and R.~Verch,
  {\it ``Passivity and microlocal spectrum condition,''}
  Commun.\ Math.\ Phys.\  {\bf 214} (2000) 705
  [math-ph/0002021].

\bibitem[SDH12]{Sanders:2012sf}
  K.~Sanders, C.~Dappiaggi and T.~-P.~Hack,
  {\it ``Electromagnetism, Local Covariance, the Aharonov-Bohm Effect and Gauss' Law''},
  Commun.\ Math.\ Phys.\  {\bf 328} 625 (2014)
  [arXiv:1211.6420 [math-ph]].

\bibitem[Som06]{Sommer}
  C.~Sommer,
  {\it Algebraische Charakterisierung von Randbedingungen in der Quantenfeldtheorie ,}  
  Diploma Thesis in German, Universit\"at Hamburg (2006), available at \url{http://www.desy.de/uni-th/lqp/psfiles/dipl-sommer.ps.gz}  

\bibitem[SF02]{Sopova&Ford:2002}
  V.~Sopova and L.~H.~Ford,
  {\it ``Energy density in the Casimir effect''},
  Phys.\ Rev.\ D {\bf 66}, 045026 (2002).

\bibitem[Wal84]{Wald}
  R.~M.~Wald,
  {\it General Relativity}, 1st edn. 
  The University of Chicago Press, Chicago, (1984).
  
\bibitem[Wal94]{Wald2}
  R.~M.~Wald,
  {\it Quantum Field Theory in Curved Spacetime and Black Hole Thermodynamics}, 
  1st edn. The University of Chicago Press Chicago, (1994).

  
\end{thebibliography}
\end{document}